\numberwithin{equation}{section}
\newtheorem{thm}{Theorem}[section]
\newtheorem{lem}[thm]{Lemma}
\newtheorem{rem}{Remark}[section]
\newtheorem{example}[thm]{Example}
\renewcommand{\Re}{\operatorname{\rm Re}}
\renewcommand{\Im}{\operatorname{\rm Im}}
\newcommand{\beqast}{\begin{eqnarray*}}
\newcommand{\eqast}{\end{eqnarray*}}
\newcommand{\beqa}{\begin{eqnarray}}
\newcommand{\eqa}{\end{eqnarray}}
\newcommand{\bbe}{\begin{equation}}
\newcommand{\ee}{\end{equation}}
\newcommand{\dd}{\partial}
\newcommand{\mbr}{\medbreak}
\newcommand{\bfo}{{\bf 1}}
\newcommand{\bM}{{\mathbb M}}
\newcommand{\bE}{{\mathbb E}}
\newcommand{\bP}{{\mathbb P}}
\newcommand{\bQ}{{\mathbb Q}}
\newcommand{\bR}{{\mathbb R}}
\newcommand{\bC}{{\mathbb C}}
\newcommand{\bZ}{{\mathbb Z}}
\newcommand{\cF}{{\mathcal F}}
\newcommand{\cH}{{\mathcal H}}
\newcommand{\cE}{{\mathcal E}}
\newcommand{\cG}{{\mathcal G}}
\newcommand{\cS}{{\mathcal S}}
\newcommand{\cL}{{\mathcal L}}
\newcommand{\cM}{{\mathcal M}}
\newcommand{\cC}{{\mathcal C}}
\newcommand{\cZ}{{\mathcal Z}}
\newcommand{\dt}{{\Delta t}}
\newcommand{\hu}{\hat u}
\newcommand{\hv}{\hat v}
\newcommand{\hG}{\hat G}
\newcommand{\hV}{\hat V}
\newcommand{\eps}{\epsilon}
\newcommand{\de}{\delta}
\newcommand{\al}{\alpha}
\newcommand{\be}{\beta}
\newcommand{\bep}{\beta^+}
\newcommand{\bem}{\beta^-}
  \newcommand{\sg}{\sigma}
\newcommand{\sgm}{\sigma_-}
\newcommand{\sgp}{\sigma_+}
\newcommand{\De}{\Delta}
\newcommand{\la}{\lambda}
\newcommand{\lp}{\lambda_+}
\newcommand{\lm}{\lambda_-}
\newcommand{\La}{\Lambda}
\newcommand{\mum}{\mu_-}
\newcommand{\mup}{\mu_+}
\newcommand{\omp}{\om_+}
\newcommand{\omm}{\om_-}
\newcommand{\ka}{\kappa}
\newcommand{\om}{\omega}
\newcommand{\ze}{\zeta}
\newcommand{\ga}{\gamma}
\newcommand{\Ga}{\Gamma}
\newcommand{\tf}{\tilde f}
\newcommand{\tG}{{\tilde G}}
\newcommand{\tV}{\tilde V}
\newcommand{\bepq}{\be^+_q}
\newcommand{\bemq}{\be^-_q}{

\newcommand{\barX}{\overline{X}}
\newcommand{\uX}{\underline{X}}

\newcommand{\cEq}{\cE_q}
\newcommand{\cEpq}{\cE^+_q}
\newcommand{\cEmq}{\cE^-_q}
\newcommand{\phipq}{\phi^+_q}
\newcommand{\phimq}{\phi^-_q}

\begin{document}

\title
[Static and semi-static hedging]
{Static and semi-static hedging as contrarian or conformist bets}

\author[
Svetlana Boyarchenko and
Sergei Levendorski\u{i}]
{
Svetlana Boyarchenko and
Sergei Levendorski\u{i}}

\thanks{The authors are grateful to Nina Boyarchenko for the comments on the qualitative
effects of semi-static hedging described in the paper. The remaining errors are ours.\\
\emph{S.B.:} Department of Economics, The
University of Texas at Austin, 1 University Station C3100, Austin,
TX 78712--0301, {\tt sboyarch@eco.utexas.edu} \\
\emph{S.L.:}
Calico Science Consulting. Austin, TX.
 Email address: {\tt
levendorskii@gmail.com}}

\begin{abstract}
In this paper, we argue that, once the costs of maintaining the hedging portfolio are properly taken into account, semi-static portfolios should more properly be thought of as separate classes of derivatives, with non-trivial, \textit{model-dependent} payoff structures. 
We derive new integral representations for payoffs of exotic European options in terms of payoffs of vanillas,
different from Carr-Madan representation, and suggest approximations of the idealized static hedging/replicating portfolio
using vanillas available in the market. We study the dependence of the hedging error on a model used for pricing and
show that the variance of the hedging errors of  static hedging portfolios can be sizably larger than the errors of variance-minimizing portfolios.
We explain why the exact semi-static hedging of barrier options is impossible for processes with jumps, and derive general formulas
for variance-minimizing semi-static portfolio. We show that hedging using vanillas  only leads to larger errors than hedging
using vanillas and first touch digitals. In all cases, efficient calculations of the weights of the hedging portfolios are
in the dual space using new efficient numerical methods for calculation of the Wiener-Hopf factors and Laplace-Fourier inversion.

\end{abstract}
\maketitle

\vskip1cm\noindent
{\em Key words:} static hedging, semi-static hedging,  L\'evy processes, 
exotic European options, barrier options, Wiener-Hopf factorization, Fourier-Laplace inversion, sinh-acceleration

\section{Introduction}\label{intro}

There is a large literature\footnote{See \cite{derman-static,CarrChou97,CarrEllisGupta98,
carr-madan-statichedge,AnderAndrEliezer02,HirsaCourtadonMadan02,Tompkins02,
ADGS05,NalholmPaulsen06,CarrLee09,CarrWu14,kirkby-deng14,CarrWu14} and the bibliographies therein.} 
studying static hedging and replication of exotic European options, and semi-static hedging and replication of barrier and other types of options. What this literature ignores, however, is the cost of maintaining the hedging position, which can drive the payoff of the overall portfolio negative. In this paper, we argue that, once the costs of maintaining the hedging portfolio are properly taken into account, semi-static portfolios should more properly be thought of as separate classes of derivatives, with non-trivial, \textit{model-dependent} payoff structures. Depending on the structure of the option being hedged {\em and the model}, the semi-static hedging portfolio may either function as a contrarian bet Ñ small losses with high probability and large gains with low probability Ñ or as a conformist bet Ñ small gains with high probability and large losses with low probability. 

We suggest new versions of static and semi-static hedging,
provide qualitative analysis of errors  of different static and semi-static procedures, explain why in the jump-diffusion case
the exact replication of barrier options by European options, hence, the model-independent replication, is impossible, 
and produce numerical examples to demonstrate how different sources of hedging errors depend on a model. 
In the main body of the paper, we consider European and down-and-in barrier options in L\'evy models, and then
indicate the directions in which the approach of the paper can be generalized and extended to cover options of other types,
in more complicated models. Pricing barrier options and the calculation of the variance of the hedging
portfolio at expiry are based on new efficient numerical procedures for calculation of the Wiener-Hopf factors and Laplace-Fourier inversion. These procedures
can be useful in other applications as well.

The underlying idea of the static hedge \cite{carr-madan-statichedge} of European options
with exotic payoffs
is simple. One replicates the payoff of an exotic European option by a linear combination payoffs of the underlying stock and vanillas,
and uses the portfolio of the stock and options to replicate or hedge the exotic option. 
In Section \ref{static_European},  we start with the derivation of  integral representations for an exact static hedging portfolio.
Contrary to \cite{carr-madan-statichedge}, we work in the dual space, and derive a representation
in terms of vanillas only; this representation if different from the one in \cite{carr-madan-statichedge}.
By construction, the portfolios we construct and the portfolio in \cite{carr-madan-statichedge} are model-independent,
which looks very attractive. However, 
the continuum of vanillas does not exist, and even if it did, 
the integral portfolio would had been impossible to construct anyway.
Hence, one has to approximate each integral by a finite sum. 
The hedging error
of the approximation is inevitably model-dependent. We design simple constructions of approximate hedging portfolios and 
 study the dependence of the static hedging error on a model
using a portfolio of available vanillas. We derive an approximation in an almost $C(\bR)$-norm, and then 
calculate the weights of the variance-minimizing hedging portfolio. In both cases, the calculations
are in the dual space using the sinh-acceleration technique \cite{SINHregular}. 
We believe that both approximate hedging procedures have certain advantages as compared to the two-sep procedure
 in a recent paper \cite{kirkby14}, where, first, one uses the projection of the payoff of the security to be hedged and securities
 in the hedging portfolio on the space of model
 payoffs, then calculates the weights. This more complicated procedure does not help to decrease the hedging error, and, similarly
 to the approximate static hedging that we construct, cannot produce smaller variances than the variance minimizing hedging portfolio.

 In Section \ref{hedgingdownandin}, we outline the general structure
of the semi-static variance-minimizing hedging of barrier options; in the paper, we consider the down-and-out and down-and-in options.
The initial version of the semi-static hedging portfolio for barrier options was suggested in \cite{derman-static}:
put options with strikes equal to the barrier, with different expiry dates, are added to the portfolio in such a way
that the portfolio value is zero at the barrier. {\em Assuming that, at  the moment the barrier is breached, the underlying is
exactly at the barrier}, the weights of portfolio can be calculated backwards. It is clear that if the underlying can cross the barrier with a jump,
the procedure cannot be exact, and the implicit error is inevitably model-dependent. 
A different semi-static hedging of barrier options is developed in \cite{CarrChou97,CarrEllisGupta98,CarrLee09},
but the underlying assumption is the same as in \cite{derman-static}. 
For a given barrier option, an exotic European payoff $\cG_{ex}$ is constructed so that,
at maturity or at the time of early expiry (the case of ``out" options) or activation (the case of ``in" options), the price of the 
hedging portfolio for barrier option
equals the price of the European option.  At the barrier is reached (the presumption is that it cannot be crossed by a jump), 
the portfolio is liquidated.
The European option being exotic, an approximate static hedging portfolio for
the latter is presumed to be used. Hence, in the presence of jumps, the hedging errors are model-dependent even if one believes that
an auxiliary exotic option can be hedged exactly using a portfolio of vanillas, and the question of the interaction of two types
of errors naturally arises. 
The  option with the payoff $\cG_{ex}$ is more exotic than the usual exotic options (the structure of the payoff
is more complicated), and the more exotic the option is, the larger the hedging errors are. Even in the case of diffusion models,
the errors can be  quite sizable,  
and the approximation is 
 justified under a certain rather restrictive symmetry condition
on the parameters of the model.
See \cite{NalholmPaulsen06}.

The paper \cite{kirkby-deng14} uses the approximate semi-static hedging of \cite{CarrLee09} and an approximation of the exotic European option which approximates
the barrier option; this leads to at least two sources of model-dependent errors, which can be large if the jump component is sizable;
in addition, the symmetry condition is more restrictive than in the case of diffusion models.  
In the introduction of \cite{kirkby-deng14}, it is claimed that Carr and Lee \cite{CarrLee09} rigorously justified
the semi-static procedure for jump-diffusion models; the picture is more complicated. In Section \ref{ss:semi-static},
we explain that the standard semi-static construction has numerous sources of errors,
and even an approximation can be justified under additional rather restrictive conditions only. In particular, in the presence of jumps,
the semi-static procedure is never exact.

The variance minimizing hedging portfolio has certain advantages. We can directly construct
the hedging portfolio using the securities traded in the market provided that a pricing model is chosen, and one can calculate
 the option prices $V_j$ in the portfolio and products of the prices as functions
of $(t, x)$,  $0\le t\le T$, where $T$ is time to maturity. Accurate and fast calculations are  possible for wide classes of options 
(barrier options, lookbacks, American options, Asians, etc.), and many popular pricing methods working in the state space can be applied.
However, to calculate the weights of the hedging portfolio, we need to calculate the expectations 
of the products of the discounted prices at time $\tau\wedge T$, where $\tau$ is the first entrance time
into the early exercise region.
Hence, one needs to approximate the products of prices by functions which are amenable to application of
efficient option pricing techniques, which are, typically, based on the Laplace-Fourier transform.
In the paper, we suggest and use new efficient methods for the numerical Fourier-Laplace inversion and calculation
of the Wiener-Hopf factors; these methods are of a general interest.
 
We work in the dual space; calculations in the dual space are also necessary to accurately address the following practically important effect.
There is an additional source of errors of hedging portfolios consisting of vanillas only. In all popular models used in finance,
the prices of vanilla options are infinitely smooth before the maturity date and up to the boundary but prices of barrier options in L\'evy models are not smooth at the boundary,
the exceptions being double jump diffusion model, hyper-exponential jump diffusion model, and other models with rational characteristic functions.
For wide classes of purely jump models, it is proved in \cite{NG-MBS,BIL} that the price of an ``out" barrier option near the barrier
behaves as $c(T)|x-h|^\ka$, where $\ka\in [0,1)$ is independent of time to maturity $T$, $c(T)>0$, and $|x-h|$ is 
the log-distance from the barrier. For finite variation
processes with the drift pointing from the boundary, $\ka=0$, and the limit of the price at the barrier is positive. 
Similarly, the price of the first touch digital behaves as $1-c_1(T)|x-h|^\ga$. Even if  the diffusion component is present, the prices
of the barrier and first touch options are not differentiable at the barrier \cite{NG-MBS}, and if the diffusion component is small,
then essentially the same irregular behavior of the price will be observed outside a very small vicinity of the barrier.
Calculations in the state space are based on approximations by fairly regular functions, 
hence, cannot reproduce these effects sufficiently accurately. See examples in \cite{paired}, where it is demonstrated
that Carr's randomization method \cite{single}, which relies on the time randomization and interpolation in the state space,
underprices barrier options in a small vicinity of the barrier.
From the point of view of the qualitative composition of the hedging portfolio, one should expect 
that an accurate hedging of barrier options is impossible unless the corresponding first touch digitals are included.
 Fig.~\ref{Graph3V} clearly shows that the first touch digital is much closer to
the down-and-in option than a put option, and the first-touch options with the payoffs $(S/H)^\ga, \ga>0,$ would be even better
hedging instruments.
\begin{figure}
\scalebox{0.5}
{\includegraphics{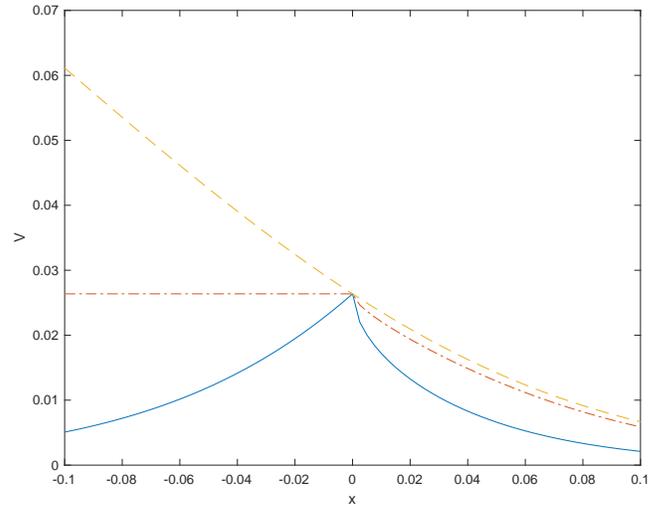}}
\caption{\small Solid line: the down-and in call option  used in the numerical example in Section
\ref{numer_din_example} as the function of $x=\ln(S/H)$. Dots-dashes: a multiple of the first touch digital. Dashes: a multiple of
the put option with strike $K=H$. Maturity is 0.1. }
\label{Graph3V}
\end{figure}

We calculate hedging portfolios consisting of vanillas only and of vanillas and the first-touch option
in in Sections
\ref{no-touch dual} and \ref{first-touch dual} using the Wiener-Hopf factorization technique.
We recall the latter in Section \ref{s:WHF},
and introduce the new efficient method for the calculation of the Wiener-Hopf factors based on the sinh-acceleration technique \cite{SINHregular}. 
 The numerical examples for static hedging and calculation of the Wiener-Hopf factors and expectations
 related to barrier options are discussed at the end of the corresponding Sections;
a numerical example for hedging of barrier options is in Section \ref{numer_din_example}. 
In Section \ref{concl} we summarize
the results of the paper and outline natural extensions. The outline of Gaver-Stehfest-Wynn method, 
 and Tables and Figures (with the exception of Fig.~\ref{Graph3V} above and 
Fig.~\ref{ContoursSINH1}
 in
Section \ref{hedgingdownandin})
are relegated to Appendices.

\section{Static hedging of European options}\label{static_European}
\subsection{Static hedging in the ideal world}\label{stathedgeideal} 
Let $X$ be the L\'evy process, and let $G(X_T)=\cG(\ln X_T)$ be the payoff at maturity. 
 We assume that $\cG$ is continuously differentiable, the measure $d\cG'$ is a sum of a finite number
of atoms (equivalently, $\cG$ has a finite number of kinks), and, under additional weak regularity conditions, 
prove that if $\cG(S)$ polynomially decays as $S\to +0$ (call-like options), then
\bbe\label{reprCall}
\cG(S)=\int_0^{+\infty}(S-K)_+ d\cG'(K),
\ee
and if  $\cG(S)$ polynomially decays as $S\to +\infty$ (put-like options), then
\bbe\label{reprPut}
\cG(S)=\int_0^{+\infty}(K-S)_+ d\cG'(K).
\ee
\vskip0.1cm\noindent
{\sc Assumption $(\cG)$.}  $\cG$ is continuously differentiable function satisfying the following conditions
\begin{enumerate}[(i)]
\item
$d\cG^{\prime}$ is a signed measure, without the singular component;
\item
$\cG'$ has only a finite number of points of discontinuity; 
\item
the measure $d\cG'_{at}=\sum_{Y>0}(\cG'(Y+0)-\cG'(Y-0))\de_Y$ is finite;
\item
$\exists$ $\be\in (-\infty, -1)\cup (0,+\infty)$ s.t.  $S\mapsto S^{-\be-1}\cG(S)$,
$S\mapsto S^{-\be-1}\cG'(S)$ are of the class $L_1$.
\end{enumerate}
If $\be<-1$, define 
\bbe\label{cGreprcall}
\cG_1(S)=\cG(S)-\sum_{Y>0}(\cG'(Y+0)-\cG'(Y-0))(S-Y)_+,
\ee
and if $\be>0$, set
\bbe\label{cGreprput}
\cG_1(S)=\cG(S)-\sum_{Y>0}(\cG'(Y+0)-\cG'(Y-0))(Y-S)_+.
\ee 
Clearly, the measure $d\cG_1$ is absolutely continuous, and $d\cG'=d\cG'_{at}+d(\cG_1)'$.
\begin{thm}\label{thm:stathedgeEuro}
Under Assumption $(\cG)$, if $\be>1$ (resp., $\be<-1$), \eqref{reprCall} (resp., \eqref{reprPut}) holds.
\end{thm}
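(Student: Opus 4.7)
My strategy is a three-step reduction: isolate the atomic part of $d\cG'$, establish a Taylor remainder identity for the smooth remainder $\cG_1$ on a compact subinterval of $(0,+\infty)$, and pass to the limit at the relevant endpoint using the weighted $L_1$ hypothesis. For each kink $Y>0$ of $\cG'$ with jump $c_Y:=\cG'(Y+0)-\cG'(Y-0)$, the trivial identity $(S-Y)_+=\int_0^{+\infty}(S-K)_+\,d\de_Y(K)$ together with finiteness of the atom set reduces the claim to proving \eqref{reprCall} (resp.\ \eqref{reprPut}) with $\cG$ replaced by $\cG_1$. By construction $d(\cG_1)'$ is absolutely continuous with density $(\cG_1)''\in L^1_{\mathrm{loc}}(0,+\infty)$.

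For $0<S_0<S$, integration by parts---equivalently Fubini applied to $(S-K)\,d(\cG_1)'(K)=\int_K^S dt\,d(\cG_1)'(K)$---yields the Taylor remainder identity
\[
\cG_1(S)-\cG_1(S_0)-(\cG_1)'(S_0)(S-S_0)=\int_{S_0}^{S}(S-K)\,d(\cG_1)'(K),
\]
with the symmetric version $\cG_1(S)-\cG_1(S_0)-(\cG_1)'(S_0)(S-S_0)=\int_{S}^{S_0}(K-S)\,d(\cG_1)'(K)$ when $S<S_0$.

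In the call-like case ($\be>1$) I let $S_0\to 0^+$. Continuity of $\cG$ and the hypothesis $S^{-\be-1}\cG(S)\in L_1$ near $0$ force $\cG(0)=0$ (otherwise the integrand is $\sim cS^{-\be-1}$ near $0$, non-integrable since $\be+1>2$), and the same argument applied to $\cG'$ gives $\cG'(0)=0$. Since the atomic corrections $\sum_Y c_Y(S-Y)_+$ and their $S$-derivatives all vanish at $S=0$, these boundary values pass to $\cG_1$: $\cG_1(0)=(\cG_1)'(0)=0$. The Taylor identity then collapses to $\cG_1(S)=\int_0^{S}(S-K)\,d(\cG_1)'(K)=\int_0^{+\infty}(S-K)_+\,d(\cG_1)'(K)$, and combining with the atomic contribution gives \eqref{reprCall}.

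The put-like case ($\be<-1$) is dual at $+\infty$: $\cG(+\infty)=0$ follows by the same reasoning, because $-\be-1>0$ now makes $S^{-\be-1}$ grow. The main obstacle is controlling the boundary term, i.e.\ showing $S_0(\cG_1)'(S_0)\to 0$ as $S_0\to+\infty$; I plan to extract this from the representation $(\cG_1)'(S_0)=-\int_{S_0}^{+\infty}(\cG_1)''(t)\,dt$ (valid since $(\cG_1)'(+\infty)=0$), together with the bound $|S_0(\cG_1)'(S_0)|\le\int_{S_0}^{+\infty}t\,|(\cG_1)''(t)|\,dt$, and then an additional integration by parts that absorbs the factor $t$ into the weight $S^{-\be-1}$ of the $L_1$ hypothesis on $\cG'$. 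Once this limit is secured, the symmetric Taylor identity yields $\cG_1(S)=\int_S^{+\infty}(K-S)\,d(\cG_1)'(K)=\int_0^{+\infty}(K-S)_+\,d(\cG_1)'(K)$ and hence \eqref{reprPut}.
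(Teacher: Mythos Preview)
Your proof takes a genuinely different route from the paper's. After the same reduction to $\cG_1$, the paper argues entirely in the dual space: it computes the Fourier transform (in $x=\ln S$) of the right-hand side of \eqref{reprPut} on the line $\{\Im\xi=\be\}$, inserts the explicit transform of $(K-e^x)_+$, applies Fubini and the change-of-variable identity $K^2\,d^2/dK^2=d^2/dk^2-d/dk$ (with $k=\ln K$), integrates by parts twice, and recognises the result as $\hat G_1(\xi)$; equality of Fourier transforms on the line then yields the claim. Your argument is the direct state-space version---the integral Taylor remainder plus endpoint analysis---which is essentially the classical Carr--Madan derivation done carefully and is conceptually more elementary. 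The Fourier route has the advantage that both endpoints are handled by the single assertion that $e^{-ik\xi}G_1(k)$ and $e^{-ik\xi}G_1'(k)$ vanish as $k\to\pm\infty$, whereas you must treat $S_0\to 0$ and $S_0\to\infty$ separately; in exchange, your approach avoids any appeal to injectivity of the Fourier transform on weighted spaces.

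One point to tighten: your inference ``$S^{-\be-1}\cG\in L_1$ near $0$ plus continuity forces $\cG(0)=0$'' tacitly assumes the limit $\cG(0^+)$ exists; continuity on $(0,\infty)$ alone does not guarantee this (one can build continuous oscillating counterexamples that stay in weighted $L_1$). A clean fix: since $d(\cG_1)'$ is a signed measure without singular part, $(\cG_1)''\in L_1$ on a neighbourhood of $0$, so $(\cG_1)'(0^+)=\lim_{S_0\to 0}\big[(\cG_1)'(1)-\int_{S_0}^1(\cG_1)''\big]$ exists as a finite number, and \emph{then} your $L_1$ argument forces it to be zero; bootstrapping once more gives $\cG_1(0^+)=0$. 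Your sketch for the put-side boundary term $S_0(\cG_1)'(S_0)\to 0$ is along the right lines.
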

\begin{proof}
In view of \eqref{cGreprcall}-\eqref{cGreprput}, it suffices to consider the case $d\cG'=d\cG'_1$.
Set $G_1(x)=\cG_1(e^x)$. 

Consider the case $\be>0$. 
Set   $k=\ln K$,  and calculate the Fourier transform of the RHS of \eqref{reprPut}
w.r.t. $x:=\ln S$, for $\xi$ on the line $\{\Im\xi=\be\}$. 
Using Fubini's theorem, we obtain 
\beqast
\int_{-\infty}^{+\infty} e^{-ix\xi}\int_0^{+\infty}(K-S)_+ d(\cG_1)'(K)dx
&=&\int_{0}^{\infty}\int_{-\infty}^{\infty}e^{-ix\xi}(K-e^x)_+ dx\,d(\cG_1)'(K)\\
&=&\int_{0}^{\infty} \frac{K^{1-i\xi}}{i\xi(i\xi-1)}d(\cG_1)'(K)\\&=&
\int_{\De G'(k)=0} \frac{e^{-i\xi k}}{i\xi(i\xi-1)}K^2 \frac{d^2 \cG_1}{dK^2}(K)dk,
\eqast
where $\De G'(k)=G'(k+0)-G'(k-0)$.
Using $K\frac{d}{dK}=\frac{d}{dk}$ and $K^2\frac{d^2}{dK^2}=\frac{d^2}{dk^2}-\frac{d}{dk}$, then integrating by parts
and taking into account that 
$e^{-ik\xi}G'_1(k)$ and $e^{-ik\xi}G_1(k)$ tend to 0 as $k\to\pm \infty$, we continue
\beqast
=\int_{\De G'(k)=0} \frac{e^{-ik\xi}}{i\xi(i\xi-1)}d(G'_1(k)-G_1(k))
&=&\int_\bR \frac{e^{-ik\xi}}{i\xi-1}(G'_1(k)-G_1(k))dk\\
&=&\int_\bR \frac{e^{-ik\xi}}{i\xi-1}dG_1(k)-\int_\bR \frac{e^{-ik\xi}}{i\xi-1}G_1(k)dk\\
&=&\int_\bR e^{-ik\xi}G_1(k)dk.
\eqast
Thus, in the case $\cG=\cG_1$, the Fourier transforms of the LHS and RHS of \eqref{reprPut} coincide on
 the line $\Im\xi=\be$,
which proves \eqref{reprPut}.
If $\be<-1$, then, in the proof above, we replace $(K-S)_+$ with $(S-K)_+$, and modify all the steps accordingly.
\end{proof}

\begin{rem} {\rm If $\cG$ has the compact support and no atoms, then both representations, in terms of puts and calls,
are valid, {\em with integration w.r.t. the same measure. } This (mildly surprising) fact can be verified using  
the put-call parity and the following calculation
\beqast
\int_0^{+\infty}(S-K)\cG^{\prime\prime}(K)dK&=&\cG'(K) (S-K)\vert_0^{+\infty}+\int_0^{+\infty} \cG'(K)dk\\
&=&
(\cG'(K) (S-K)+\cG(K))\vert_0^{+\infty}=0-0=0.
\eqast
}
 \end{rem}
 \begin{example}
{\rm
The payoff function of the powered call of order $\be>1$ with the strike $K_0$ is
$\cG(S)=((S-K_0)_+)^\be$. 
Since $\be>1$, there are no kinks, $d\cG'(K)$ has no atoms, and
\[
((S-K_0)_+)^\be=\be(\be-1)\int_{K_0}^{+\infty} dK\, (K-K_0)^{\be-2} (S-K)_+.
\]
}
\end{example}

\begin{example}
{\rm 
The payoff function of the power call is $\cG(S)=(S^\be-K_0)_+$, where $\be>0$, $K_0>0$.
The measure $d(\cG')(K)$ has an atom $\be\de_{K_0}$, and, on $(K_0,+\infty)$, $\cG_{KK}(K)=\be(\be-1)K^{\be-2}$.
Hence, 
\[
(S^\be-K_0)_+=\be(S-K_0)_++\be(\be-1)\int_{K_0}^{+\infty} dK\,  K^{\be-2}(S-K)_+.
\]
}
\end{example}


\begin{example}
{\rm 
Consider the down-and-in call option with barrier $H$ and strike $K_0>H$. A popular semi-static replicating portfolio
for this option is a European option with the payoff $\cG_{ex}(S)=((S-K_0)_++(S/H)^{\be}(H^2/S-K_0)_+)\bfo_{(0,H]}$
(see \eqref{stHedgeEur} and Section \ref{SemiStaticLevy}). Since $K_0>H$, the payoff simplifies 
$\cG_{ex}(S)=(S/H)^{\be}(H^2/S-K_0)_+)\bfo_{(0,H]}(x)=H^{-\be}K_0\cG_{ex,1}(S)$,
 where $\cG_1(S)=S^{\be-1}(K_1-S)_+$, and $K_1=H^2/K_0<H$. Clearly, it suffices to construct
 the hedging portfolio for the option with the payoff function $\cG_1$, which vanishes above $K_1$. At $K_1$, $\cG_1$ has a kink, and $G'_1(K_1-0)=-K_1^{\be-1}$. On $(0,K_1)$, $\cG_1$ is infinitely smooth, and $\cG'_1(S)=K_1(\be-1)S^{\be-2}-\be S^{\be-1}$, $\cG^{\prime\prime}(S)=
 K_1(\be-1)(\be-2)S^{\be-3}-\be(\be-1)S^{\be-2}=(\be-1)S^{\be-2}((\be-2)K_1/S-\be)$. 
 Hence, 
 \[
\cG_{ex}(S)=H^{-\be}K_0\left(K_1^{\be-1}(K_1-S)_++(\be-1)\int_0^{K_1}dK\,K^{\be-2}((\be-2)K_1/K-\be)(K-S)_+\right).
\]

}
\end{example}

\subsection{Approximate static hedging}\label{appstatic}
In the real world, only finite number of options are available, hence, one has to approximate the measure $d\cG'(K)$ using
an atomic measure, typically, with a not very large number of atoms. 
For instance, it is documented in \cite{CarrWu14} that static hedging with 3-5 options produces good results.
Hence, the static hedging will be approximate.
Furthermore, as seen from time 0, the hedging error will depend on the choice of the model used; although the idealized static hedge
is model independent, the approximate one is model-dependent, and
the quality of the approximation depends (naturally) on the choice of the approximation procedure.
We will approximate the payoff of an exotic option by linear combinations of payoffs of vanillas, in the norm
of a Sobolev space with an exponential weight.
To be more specific, we minimize the difference between the payoff of an exotic option and the portfolio payoff
$G(x):=\cG(e^x)$
in the norm of the Sobolev space $H^{s,\om}(\bR)$ of order $s$, with an appropriate
exponential weight $e^{\om x}$ (a.k.a. dampening factor). The Plancherel theorem allows us to do the calculations
in the dual space. The integrals are calculated accurately and very fast using the sinh-acceleration techniques \cite{SINHregular}.
If $s>1/2$, $H^{s}(\bR)$ is continuously embedded into $C(\bR)$, hence, we can estimate the error in the $C$-norm (with the
corresponding weight), which
is natural for the approximate static hedge: if the error 0 is not achievable, we control the maximal error.
We study the dependence of the variance of the hedging error on the model and $s,\om$, and demonstrate that the variances of
errors in cases
$s=1/2$ and $s>1/2$ close to $1/2$ are comparable (and essentially independent of $\om$ in a reasonable range), the differences being smaller than the differences between the 
variances of errors of approximate static hedging portfolios and variance-minimizing portfolio. 

Let $\om, s\in \bR$. The Sobolev space $H^{s,\om}(\bR)$ of order $s$, with weight $e^{\om x}$, is the space of the generalized functions $u$ 
such that $u_\om:=e^{\om \cdot}u\in H^s(\bR)$. The scalar product in $H^{s,\om}(\bR)$ is defined by $(u,v)_{s;\om}=(u_\om, v_\om)_{H^s(\bR)}$. Thus,
\[
(u,v)_{H^{s,\om}(\bR)}=\int_{\Im\xi=\om}(1+(\xi-i\om)^2)^{s}\hu(\xi)\overline{\hv(\xi)}d\xi.
\]
By one of the Sobolev embedding theorems (see, e.g., Theorem 4.3 in Eskin (1973)), if $s>1/2$, $H^{s}(\bR)$ is continuously embedded into $C_0(\bR)$, 
the space of uniformly bounded continuous functions vanishing at infinity, with $L_\infty$-norm.
Hence, for any $\om\in \bR$ and $s>1/2$, an approximation in the $H^{s,\om}$-topology gives a uniform approximation over any fixed compact $K$.

Consider an exotic option whose payoff vanishes below $K$, which we normalize to 1. 
For practical purposes, we may assume that the strikes of European options used for hedging are close to 1, and the spot is close to 1; 
hence, the log-spot $x$ is close to 0, and if $\om$ is not large in modulus, the differences among the hedging weights for different omegas are not large. Likewise,
if $\hu(\xi)$ decay fairly fast at infinity, the norms of $u$ in $H^{s,\om}(\bR)$ will be close if $s\in [1/2, s_0]$ and $s_0$ is close to $1/2$. 

Assume that $\be<-1$. Thus, we have a call-like options, which is hedged using a portfolio of
call options. We fix $\om\le \be$ as discussed above, $s\ge 1/2$,  and
the set of call options with the payoff functions $G_j:=G(K_j;\cdot)$. Set $G_0=G$. We look for the set of weights $n=(n_1,\ldots, n_N)$ (numbers of call options in the portfolio) which minimizes
$
StHG(n):=-G_0+\sum_{j=1}^N n_jG_j
$
in the $H^{s;\om}(\bR)$ norm. Denote $G^{s;\om}_{jk}=(G_j,G_k)_{s;\om}$; 
these scalar products can be easily calculated with 
a sufficiently high precision since the integrands in the formula for $G^{s;\om}_{jk}$ decay as $|\xi|^{-4+s}$.
Furthermore, if $\hG_0(\xi)$ is of the form $e^{-ik_0\xi}\hG_{00}(\xi)$, where $\hG_{00}(\xi)$ is a rational function, and $k_0\in\bR$,
then the integrands in the formulas for $(G_j,G_k)_{s;\om}$ are of the form 
$
e^{-ik_{jk}\xi}\hG_{jk,0}(\xi)(1+(\xi-i\om)^2)^s$, where $k_{jk}\in\bR$ and
$\hG_{jk,0}(\xi)$ are rational functions. Hence, the scalar products can be calculated with almost machine precision and very fast using the sinh-acceleration
technique \cite{SINHregular}.  After an appropriate change of variables of the form
$\xi=i\om_1+b\sinh(i\om'+y)$, the simplified trapezoid rule with  a dozen of terms typically suffices to satisfy the error tolerance of the order of
$10^{-10}$ and less.

The minimizer $n^{s;\om}$ of
\[
F^{s;\om}(n):=\|StHG(n)\|^2=
G^{s;\om}_{00}-\sum_{j=1}^Nn_jsG^{s;\om}_{0j}+\sum_{j=1}^Nn_j^2G^{s;\om}_{jj}+\sum_{j\ne k, 1\le j,k\le N}^Nn_jn_kG^{s;\om}_{jk}
\]
is the solution of the equation $\nabla F^{s;\om}(n)=0$, equivalently,
\bbe\label{statwsom}
n^{s;\om}=\frac{1}{2}A(G;s,\om)^{-1}G^{s;\om},
\ee
where $G^{s;\om}=[G^{s;\om}_{01},\ldots, G^{s;\om}_{0N}]'$ is a vector-column, and 
$A(G;s,\om)=[G^{s;\om}_{jk}]_{j,k=1,\ldots, N}$.

\subsection{Variance minimizing hedging portfolio}\label{euroVarMinport}
The hedging error is the random variable
\[
Err(n;x,X_T)=e^{-rT}\left(-G_0(X_T\ |\ X_0=x)+\sum_{j=1}^N n_jG_j(X_T\ |\ X_0=x)\right).
\]
Set $V_j(T,x)=e^{-rT}\bE^x[G_j(X_T)]$  (the expectations are under an EMM $\bQ$ chosen for pricing).
Assuming that $\hG_j, j=0,1,\ldots, N$, are calculated, calculation of the mean hedging error
\beqast
\bE\left[Err(n;x,X_T)\right]
&=&-V_0(T,x)+\sum_{j=1}^N n_jV_j(T,x)
\eqast
is reducible to the Fourier inversion. As it is explained in \cite{iFT,pitfalls,SINHregular}, if (1) $\hG_j$ is of the form
$\hG_j(\xi)=e^{-ik_j\xi}\hG_{j0}(\xi)$, where $k_j\in\bR$ and $\hG_{j0}$ is a rational functions, and (2)  $\psi$ is of the form
$\psi(\xi)=-i\mu\xi+\psi^0(\xi)$, where $\mu\in\bR$ and $\Re\psi^0(\xi)\to +\infty$ as $\xi\to \infty$ remaining in a cone around the real axis,
then  it is advantageous to represent $V_j(T,x)$ in the form 
\bbe\label{Vj}
V_j(T;x)=\frac{1}{2\pi}\int_{\Im\xi=\om}e^{ix'_j\xi-T(r+\psi^0(\xi))}\hG_{j0}(\xi)d\xi,
\ee
where the set of admissible $\om\in\bR$ depends on $\hG_{j0}$, and $x'_j=x+\mu T-k_j$. Then 
we use a conformal deformation of the contour of integration
in \eqref{Vj} and the corresponding change of variables, and apply the simplified trapezoid rule. The most efficient change of variables (the sinh-acceleration)
suggested in \cite{SINHregular} is of the form $\xi=i\om_1+b\sinh(i\om'+y)$, where $\om'$ is of the same sign as $x'$; the upper bound on admissible 
$|\om'|$ depends on $\psi^0$ and $\hG_{j0}$.
 
The variance can be calculated using the equality
\[
e^{-2rT}\bE[(G-\bE[G])^2]=e^{-2rT}(\bE[G^2]-\bE[G]^2).
\]
To calculate $\bE^x\left[Err(n;x, X_T)^2\right]$, we need to calculate 
$V_{j\ell}(T,x):=e^{-2rT}\bE^{x}\left[G_j(S_T)G_\ell(S_T)\right], j,\ell=0,1,\ldots, N$, which is the ``price" of the European option
with the payoff $G_j(S_T)G_k(S_T)$ at maturity $T$. We calculate the Fourier
transform $\widehat{G_jG_\ell}(\xi)$ of the product $G_jG_\ell$, multiply by the characteristic function $e^{-T\psi(\xi)}$,
and apply the inverse Fourier transform. For typical exotic options and vanillas,  $\widehat{G_jG_\ell}(\xi)$ is of the form 
$e^{-ik_{j\ell}\xi}\hG_{j\ell;0}(\xi)$, where $\hG_{j\ell;0}(\xi)$ is a rational function, and $k_{j\ell}\in\bR$. Hence,
\bbe\label{Vjell}
V_{j\ell}(T;X)=\frac{1}{2\pi}\int_{\Im\xi=\om}e^{ix'_{j\ell}\xi-T(2r+\psi^0(\xi))}\hG_{j\ell;0}(\xi)d\xi,
\ee
where $x'=x+\mu T- k_{j\ell}$. The integral on the RHS of \eqref{Vjell}
 can be calculated accurately and fast  using the sinh-acceleration
technique \cite{SINHregular}. The numbers of options in the variance minimizing portfolio are given by
\bbe\label{minvarwsom}
n(T;x)=\frac{1}{2}A(T;x)^{-1}V^0(T;x),
\ee
where $V^{0}(T;x)=([V_{0j}(T;x)-V_0(T,x)V_j(T,x)]_{j=1}^{N})'$ is a vector-column, and 
$A(T;x)=[V_{j\ell}(T;x)-V_{\ell}(T;x)V_{\ell}(T;x)]_{j,\ell=1,\ldots, N}$, is a matrix.

\subsection{An example: static hedging and variance minimizing hedging of the exotic 
option with the payoff function $\cG_{ex}$ given by \eqref{stHedgeEur}.
The case $\cG(S)=(S-K)_+$.}\label{example_ex_down_and_in}
Let $k:=\ln K>h:=\ln H$. Then $G(x)\bfo_{(-\infty,h]}(x)=0$, and $2h-k=h-(k-h)<h$. 
Direct calculations show that 
 the Fourier transform of $G^0(x)=\cG_{ex}(e^x)$ is well-defined in the half-plane $\{\Im\xi>1-\be\}$ by
 \bbe\label{hG0}
 \hG^{0}(\xi)
 =H^{-\be} K^{1-\be}\frac{e^{-i\xi(2h-k)}}{(-i\xi+\be-1)(-i\xi+\be)}.
 \ee
 In the hedging portfolio, we use put options with strikes $K_j\le K_0=H^2/K$, $j=1,2,\ldots, N$. Set $G^j(x)=(K_j-e^x)_+$;
 then $\hG^j(\xi)=K_j^{1-i\xi}/(i\xi(i\xi-1))$ is well-defined in the half-plane $\{\Im\xi>0\}$. 
 
 \subsubsection{Construction of an approximate
 static hedging portfolio} We take $\om>(1-\be)_+$. Typically, $\be<0$, hence, $\om>1$. We have
 \beqast
 (G^0,G^0)_{s,\om}&=&H^{-2\be}K^{2-2\be}\int_{\bR}\frac{e^{(-i\xi+\om)(2h-k)}}{(-i\xi+\om+\be-1)(-i\xi+\om+\be)}
 \\&& \hskip2.2cm\times \frac{e^{(i\xi+\om)(2h-k)}}{(i\xi+\om+\be-1)(i\xi+\om+\be)}(1+\xi^2)^s d\xi\\
 &=& H^{4\om-2\be}K^{2(1-\be-\om)}\int_\bR\frac{(1+\xi^2)^sd\xi}{(\xi^2+(\om+\be-1)^2)(\xi^2+(\om+\be)^2)}.
 \eqast
 where $k=\ln K$. We can calculate the integral accurately and fast making the simplest sinh-change of variables
 $\xi=b\sinh y$, and applying the simplified trapezoid rule. See \cite{SINHregular} for explicit recommendations for
 the choice of $b$ and the parameters $\ze, N$ of the simplified trapezoid rule.
 
 Next, for $j=1,2,\ldots, N$, we set $k_j=\ln K_j$ and calculate
 \beqast
&& (G^0,G^j)_{s,\om}\\&=&H^{-\be}K^{1-\be}K_j\int_{\bR}\frac{e^{(-i\xi+\om)(2h-k)}}{(-i\xi+\om+\be-1)(-i\xi+\om+\be)}
\frac{e^{(i\xi+\om)k_j}}{(-i\xi-\om-1)(-i\xi-\om)}(1+\xi^2)^s d\xi\\
 &=& H^{2\om-\be}K^{1-\be-\om}K_j^{1+\om}\int_\bR\frac{e^{-i\xi(2h-k-k_j)}(1+\xi^2)^s d\xi}
 {(-i\xi+\om+\be-1)(-i\xi+\om+\be)(i\xi+\om+1)(i\xi+\om)}.
\eqast
 If $k_j= 2h-k$, we make the sinh-change of variables of the same form as above (the choice of the parameters $b$, $\ze$, $N$
 is slightly different). If $2h-k-k_j<0$ (resp., $>0$), then it is advantageous to deform the contour of integration so that the wings of the deformed
 contour point up (resp., down). Hence, we make the change of variables $\xi=i\om_1+b\sinh(i\om'+y)$, where $\om'\in (0,\pi/2)$
  (resp., $\om\in(-\pi/2,0)$) and set $\om_1=-b\sin\om'$.
 The parameters $b$, $\ze$, $N$ are chosen as explained in \cite{SINHregular}.

 Finally, for $j,\ell=1,2,\ldots, N$, and $\om>0$,
  \beqast
   (G^j,G^\ell)_{s,\om}&=&(K_jK_k)^{1+\om}\int_\bR \frac{e^{-i\xi(k_j-k_\ell)}(1+\xi^2)^s d\xi}{(\xi^2+\om^2)(\xi^2+(\om+1)^2)}.
   \eqast
   If $k_j>k_\ell$, we deform the wings of the contour down, equivalently, use the sinh-acceleration with $\om'\in (-\pi/2,0)$.
   If  $k_j<k_\ell$, we use $\om'\in (0,\pi/2)$. Finally, if $k_j=k_\ell$, we may use any $\om'\in [-\pi/2,\pi/2]$; the choice $\om'=0$
   is the best one.
   
   After the scalar products are calculated, we apply \eqref{statwsom} to find the approximate static hedging portfolio.
   Below, we will use a modification of this scheme when the hedging portfolio has
   the fixed amount $H^{-\be}K_0K_1^{\be-1}=(H/K_0)^{\be-2}$ of put options with strike $K_1$,
   and the weights of the other put options in the hedging portfolio are calculated minimizing the hedging error.
   
  \subsubsection{Construction of the variance-minimizing hedging portfolio}
  Calculating the integral  on the RHS
  of  \eqref{Vj}, we use \eqref{hG0} for $j=0$; for $j=1,2,\ldots, N$, 
  $\hG^j(\xi)=K_j^{1-i\xi}/(i\xi(i\xi-1)$. To calculate the integral on the RHS of  \eqref{Vjell}, we need to calculate the Fourier transforms
  of the products of the payoff functions.  The straightforward calculations give
  \begin{enumerate}[(1)]
  \item
  $\widehat{(G^0)^2}(\xi)=e^{-i\xi(2h-k)}\hG_{00}(\xi),
$ for $\xi$ in the half-plane $\{\Im\xi>2(1-\be)_+$, where 
 \bbe\label{hG00}
 \hG_{00}(\xi)=\frac{2H^{2\be}K^{2(1-\be)}}{(-i\xi+2\be-2)(-i\xi+2\be-1)(-i\xi+2\be)};
 \ee
  \item
for $j=1,2,\ldots, N$ and $\xi$ in the half-plane $\Im\xi>(1-\be)_+$,
 \[
\widehat{G^0G^j}(\xi)=e^{-ik_j\xi}\hG_{0j}(\xi),
\]
where 
\bbe\label{hG0j}
\hG_{0j}(\xi)=\frac{(K_j/H)^\be (H^2/K)}{-i\xi+\be}\left[\frac{H^2/K}{-i\xi+\be-1}-\frac{K_j}{-i\xi+\be+1}\right].
\ee
  \item
  if 
$K_j\le K_\ell$, then the Fourier transform of $G^{jk}(x)=(K_j-e^x)_+(K_\ell-e^x)_+$ 
\bbe\label{prodPuts}
\hG^{j\ell}(\xi)=K_j\frac{e^{-ik_j\xi}}{i\xi-1}\left(\frac{K_\ell}{i\xi}-\frac{K_j}{i\xi-2}\right)
\ee
is well-defined in the half-plane $\{\Im\xi>0\}$. 
  \end{enumerate}
  
  \subsubsection{Numerical experiments}
In Tables collected in Section \ref{Eurotables}, we study the relative performance
of the static hedging and variance-minimizing hedging.  We consider the exotic European option with the payoff $\cG_{ex}=(S/H)^\be(H^2/S-K_0)_+$;
the hedging portfolios consist of put options with the strikes $K_j=H^2/K_0-(j-1)0.02, j=1,\ldots, \#K$, where $\#K=3,5$.
For different variants
of hedging, we list numbers $n_j$ of the options with strikes $K_j$ in the hedging portfolio.
Static portfolios are constructed minimizing the hedging error in the $H^{s;\om}$ norm;
  the results are
 essentially the same for $\om=2(1-\be)_++0.1, \om=2(1-\be)_++0.2$, and weakly depend on $s=0.5, 0.55$.
 The static portfolios are independent of time to maturity $T$ and the process but $nStd$ does depend on both as well
 as on the spot $S$.
We study the dependence of the (normalized 
by the price $V_{ex}(T,x)$ of the exotic option) standard deviations $nStd$ of the static hedging portfolio
and variance minimizing portfolio on the process, time to maturity $T$ and 
$x:=\log(S/K_1)\in [-0.03,0.03]$.
For the static hedging portfolios, for each process and time to maturity,
 we show the range of $nStd$ as the function of $x\in [-0.03,0.03]$.
 In the case of the variance-minimizing hedging, $n_j$ 
 depend on $x$ by construction, and we show $n_j$ and $nStd$ for each $x$ in a table.
 We consider two variants of the variance minimizing portfolios:  $VM1$ means that $n_1$ (the same as for static portfolios)
 is fixed, $VM2$ means that all $n_j$ may vary.
 
 In all cases, $H=1$, $K_0=1.02$,
the underlying $S_t=e^{X_t}$ pays no dividends, $X$ is KoBoL; in two tables,
the BM with the embedded KoBoL component.  
The second instantaneous moment is $m_2=0.1$ or $0.15$, and $c$  is determined by $m_2, \lp,\lm$ and $\sg$. 
The riskless rate is found from the EMM condition
$\psi(-i)+r=0$;  $\be$  found from $\psi(\xi)=\psi(-\xi-i\be)$. If $X$ is KoBoL, then $\be$ exists 
only if 
$\mu=0$, and then $\be=-\lm-\lp$. If the BM component is present, then 
we choose $\sg$ and $\mu$ so that $\be=-\lp-\lm=-\mu/(2\sg^2)$.  The results for two cases when such a $\be$ exists 
are presented
in Tables~\ref{StVar1_1.2slow}-\ref{BMKBL_1.95}; in Tables~\ref{StVar1_1.95} and \ref{BMKBL_1.95}, KoBoL is close to BM ($\nu=1.95$), and in Tables~\ref{StVar1_1.2slow}, \ref{StVar1_1.2}, \ref{BMKBL_1.2},
 close to NIG ($\nu=1.2$).
The BM component is non-trivial in Tables~\ref{BMKBL_1.2} and \ref{BMKBL_1.95}.
The reader may notice that the parameter sets are not very natural; the reason is that it is rather difficult to
find natural parameter sets which ensure that (1) $\be$ satisfying $\psi(\xi)=\psi(-\xi-i\be), \forall\ \xi\in\bR,$ exists;
(2) the EMM condition holds. In Tables~\ref{AsymKBL_1.2} and \ref{AsymKBL_1.95}, we present results for a sizably asymmetric KoBoL with $\mu\neq 0$, and consider the same exotic option with $\be=-\lm-\lp$. Naturally, this exotic option cannot be even formally used to hedge the barrier option
but can serve for the purpose of the comparison of two variants of hedging of exotic European options.

Tables illustrate the following general observations.
\begin{enumerate}[(1)]
\item
If the number of vanillas in a static hedging portfolio is sufficiently large,
the portfolio  provides uniform (approximate) hedging over wide stretches of spots and strikes.
Hence, if the jump density decays slowly, one expects that,  far from the spot, the static hedging portfolio will
outperform the variance minimizing portfolio. Table~\ref{StVar1_1.2slow} demonstrate that even in cases
when the rate of decay of the jump density is only moderately small, and the process is not very far from the BM,
the variance of the static portfolio differs from the variance of the variance minimizing portfolios (constructed
separately for each spot from a moderate range, and using the information about the characteristics of the process)
by several percent only; if the jump density decays slower and/or process is farther from the BM,
the relative difference is smaller.  Hence, if the rate of decay of the jump density is
not large and the density os approximately symmetric, the static portfolio is competitive for hedging risks of small fluctuations.
  It is clear that the hedging performance of the static portfolio in the tails must be better still.
\item
However, if the jump density decays moderately fast, then the variance of the static portfolio can be sizably larger than
the one of the variance minimizing portfolios (Tables~\ref{StVar1_1.2}-\ref{StVar1_1.95}), and
if a moderate BM component is added, then the relative difference becomes large (Tables~\ref{BMKBL_1.2}-\ref{BMKBL_1.95}).
\item
In Tables~\ref{StVar1_1.2slow}-\ref{BMKBL_1.95}, the jump density is approximately symmetric. In 
Tables~\ref{AsymKBL_1.2}-\ref{AsymKBL_1.95}, we consider the pure jump process with a
moderately asymmetric density of jumps. In this case, the variance of the static portfolio is much larger than
the variance of variance minimizing portfolios.
\item
The quality of variance minimizing portfolios VM1 and VM2 is essentially the same in almost all cases
when 5 vanillas are used although the portfolio weights can be rather different. Hence,
as a rule of thumb, one can recommend to use vanillas associated with the atomic part of the measure
in the integral representation of the ideal static portfolio - provided these vanillas are available in the market.
\end{enumerate}
The implication of observations (1)-(2) for semi-static hedging of barrier options is as follows. If the 
the variance of the BM component makes a non-negligible contribution to the instantaneous variance 
of the process,
 the ideal semi-static hedging using a continuum of options improves
but the quality of an approximation of the integral of options by a finite sum decreases. Hence,
one should expect that the variance minimizing hedging of barrier options would be significantly better
than an approximate semi-static hedging in all cases.

\section{Hedging down-and-in and down-and-out options}\label{hedgingdownandin}

\subsection{Semi-static hedging}\label{ss:semi-static}
Carr and Lee \cite{CarrLee09}  formulate several equivalent conditions on a positive martingale $M$ under the reference measure 
$\bP$, call the class of these martingales PCS processes, and design semi-static replication strategies for various types of barrier options. 
Since the proof of Theorem 5.10 in \cite{CarrLee09} strongly relies on the assumption
that at the random time $\tau$ when the barrier $H$ is breached, the underlying is exactly at the barrier,
in Remark 5.11 in \cite{CarrLee09}, the authors
state that these strategies replicate the options in question for all PCS processes,  including those with jumps, provided that the jumps cannot cross the barrier.
However, one of the equivalent conditions which define PCS processes is
the equality of the distributions of $M_T/M_0$ 
under $\bP$ and $M_0/M_T$ under $\bM$, where $d\bM/d\bP= M_T/M$. This condition implies that $M$ has positive jumps if and only if $M$ has negative jumps.
Hence, if jumps in $M$ does not cross the barrier, equivalently, there are no jumps in the direction of the barrier, then there are no jumps in the opposite direction,
and, therefore, $M$ has no jumps. Further relaxing PCS conditions, Carr and Lee \cite{CarrLee09}  generalize to various asymmetric dynamics, 
but the property that jumps in one direction are impossible
means that the results for semi-static replication of barrier options under these asymmetric dynamics are valid only if there are no jumps.
 Carr and Lee \cite{CarrLee09} give additional conditions which will ensure the super-replication property
of the semi-static portfolio for ``in" options; but the corresponding portfolio for ``out" options  recommended in \cite{CarrLee09} will
under-replicate the option. 

For an additional clarification of these issues, in Section \ref{SemiStaticLevy}, we derive the generalized symmetry condition
for the case of a L\'evy process $X$ in terms of the characteristic function $\psi$: $\exists \be \in\bR$ s.t. 
$\psi(\xi)=\psi(-\xi-i\be)$ for all $\xi$ in the domain of $\psi$, and show that
this condition implies that either $X$ is the Brownian motion (BM) and the riskless rate equals the dividend rate or there are jumps
in both directions, and asymmetry of the jump component is uniquely defined by the volatility $\sg^2$ and drift $\mu$. Furthermore, 
if $\sg=0$, then $\mu=0$ as well. For the case of the down-and-in option with the payoff $G(X_T)=\cG(e^{X_T})$ and barrier $H$, we rederive 
the formula for the payoff of the  exotic European option, which, in the presence of jumps,
replicates the barrier option only approximately:
\bbe\label{stHedgeEur}
\cG_{ex}(S_T)=( \cG(S_T)+(S/H)^\be \cG(H^2/S))\bfo_{S_T\le H},
\ee
The numerical examples above show that the variance of the hedging error of the static portfolio for the exotic option
with the payoff \eqref{stHedgeEur} is close
to the variance of the variance-minimizing portfolio if the BM component is 0, the jump density does not decrease fast and is approximately symmetric;
if the BM component is sizable and/or the density of jumps is either asymmetric or fast decaying,
then the variance of the static portfolio is significantly larger than the variance of the variance-minimizing portfolio.
Hence, the static portfolio is a good (even best) choice in cases when the idealized semi-static replicated exotic option is
a bad approximation fo the barrier option.

\subsection{General scheme of variance minimizing hedging}\label{down-and-in_or_out: general scheme}
We consider one-factor models. The underlying is $e^{X_t}$, there is no dividends, and the riskless rate $r$ is constant.
Let $V^0(t,X_t)$ be the price  of the contingent claim to be hedged, of maturity $T$, under an EMM $\bQ$ chosen for pricing.
Let $V^j(t,X_t)$, $j=1,2,\ldots, N$, be the prices  of the options used for hedging. We assume that the latter options do not expire
before $\tau\wedge T$, where $\tau=\tau^-_h$ is the first entrance time into the activation region $U$ of the down-and-in option
(in the early expiry region of the down-and-out option). 
As in the papers on the semi-static hedging, we assume that, at time $\tau\wedge T$, the hedging portfolio is liquidated. 
 Let $(-1,n_1,\ldots, n_N)$ be the vector of numbers of securities in the hedging portfolio. The portfolio at
 the liquidation date $\tau\wedge T$ is the random variable
 \[
 P_0(\tau\wedge T, X_{\tau\wedge T})=-V^0(\tau\wedge T, X_{\tau\wedge T})+\sum_{j=1}^N n_j V^j(\tau\wedge T, X_{\tau\wedge T}),
 \]
and the discounted portfolio at the liquidation date is $P(\tau\wedge T, X_{\tau\wedge T})=e^{-r\tau\wedge T}P_0(\tau\wedge T, X_{\tau\wedge T})$.
One can consider the variance minimization problem for either $P_0(\tau\wedge T, X_{\tau\wedge T})$ or $P(\tau\wedge T, X_{\tau\wedge T})$,
and we can calculate the variance under either the EMM $\bQ$ used for pricing or the historic measure $\bP$.
We consider the minimization of the variance of $P(\tau\wedge T, X_{\tau\wedge T})$ under $\bQ$.

Let $x=X_0$, and, for $j,k=0,1,\ldots, N$, set $C^0_{j\ell}(x)=V^j(0,x)V^\ell(0,x)$ and
\bbe\label{Cjell}
C_{j\ell}(x)=\bE^x\left[e^{-2r\tau\wedge T}V^j(\tau\wedge T, X_{\tau\wedge T})V^\ell(\tau\wedge T, X_{\tau\wedge T})\right].
\ee
Using $\bE[(U-\bE[U])^2]=\bE[U^2]-\bE[U]^2$, we represent the variance of $P(\tau\wedge T, X_{\tau\wedge T})$ in the form
\beqa\label{defVar}
Var_P(x)&=&C_{00}(x)-C^0_{00}(x)-2n_j\sum_{j=1}^N (C_{0j}(x)-C^0_{0j}(x))\\\nonumber
&&+\sum_{j=1}^Nw_j^2 (C_{jj}(x)-C^0_{jj}(x))+\sum_{j\neq \ell, j,\ell=1,2,\ldots, N}n_jn_\ell (C_{j\ell}(x)-C^0_{j\ell}(x)),
\eqa
and find the minimizing $n=n(x)$ as
\bbe\label{wmin}
n(x)=A(x)^{-1}B(x),
\ee
where
$B(x)=[C_{0j}(x)-C^0_{0j}(x)]'_{j=1,\ldots, N}$ is a column vector, and $A=[C_{j\ell}(x)-C^0_{j\ell}(x)]_{j,\ell=1,\ldots, N}$
is an invertible square matrix if the random variables $V_j(\tau\wedge T, X_{\tau\wedge T}), j=1,2,\ldots, N,$ are uncorrelated.

To calculate $C^0_{j\ell}(x)$, it suffices to calculate $V^j(0,x)$ and $V^\ell(0,x)$. We  decompose 
$V^j(0,x)$ into the sum of the first-touch option $V^j_{ft}(x)$ with the payoff
$V^j(\tau,X_\tau)\bfo_{\tau\le T}$, and no-touch option $V^j_{nt}(x)$ with the payoff
$V^j(T,X_T)\bfo_{\tau> T}$. Given a model for $X$, we can calculate the prices of the no-touch and first touch options. 
 Similarly, we
can decompose $C_{j\ell}(x)$ into the sum of the first-touch option $C_{ft;j\ell}(x)$ with the payoff
$V^j(\tau,X_\tau)V^\ell(\tau,X_\tau)\bfo_{\tau\le T}$, and no-touch option $C_{nt;j\ell}(x)$ with the payoff
$V^j(T,X_T)V^\ell(T,X_T)\bfo_{\tau> T}$. 

The no-touch options can be efficiently calculated if
the Fourier transforms of the payoff functions $G^j$ and $G^jG^\ell$ of options $V^j$ and $V^\ell$ can be explicitly calculated;
then several methods based of the Fourier inversion can be applied.

First, one can apply 
apply Carr's randomization method developed in \cite{NG-MBS,single,MSdouble,BLdouble} for option pricing in L\'evy models.
A simple generalization is necessary 
in the case of no-touch options because, in the setting of the present paper, the payoff functions depend on $(t,X_t)$ and not on
$X_t$ only as in  \cite{NG-MBS,single,MSdouble, BLdouble}.  Apart from the calculation 
of the Wiener-Hopf factors, which, for a general L\'evy process, must be done
in the dual space,
the rest of calculations in \cite{NG-MBS,single,MSdouble,BLdouble} are made in the state space. 
The calculations in the state space \cite{single,MSdouble,BLdouble}
  can be efficient for small $\dt$ and $\nu$ if the tails of the L\'evy density decay sufficiently fast. 
  In Section \ref{WHFSinh}, we design new efficient procedures for the calculation
of the Wiener-Hopf factors. 
These procedures are of a general interest. 
  
The second method is 
a more efficient version of Carr's randomization. The
calculations at each step of the backward procedure bar the last one are made remaining in the dual space. These steps are of the same form
 as in the Hilbert transform method \cite{feng-linetsky08} for barrier options with discrete monitoring, with a different operator used at each time step.
 In \cite{feng-linetsky08}, the operator is $(1-e^{-r\dt})^{1}(I-e^{\dt(-r+L)}),$ where $L$ is the infinitesimal generator of $X$ under the probability measure used;
  in the Carr's randomization setting, the operator is $q^{-1}(q-L)$, where $q=r+1/\dt$, and $\dt$ is the time step. 
  If $\dt$ is not small and/or the order of the process $\nu$ is close to two\footnote{Recall
  for stable L\'evy processes, the order of the process as defined in
  \cite{NG-MBS} for processes with exponentially decaying jump densities,  is the Blumenthal-Getoor index},
  then $(1-e^{-r\dt})^{1}(I-e^{\dt(-r+L)})$ can be efficiently realized using the fast Hilbert transform \cite{feng-linetsky08}.
  Otherwise too long grids may be necessary. 
 An efficient numerical realization of  $q^{-1}(q-L)$ in the dual space requires much longer grids than an efficient numerical
 realization of $(1-e^{-r\dt})^{1}(I-e^{\dt(-r+L)})$ if the fast Hilbert transform is used. In the result, this straightforward scheme can be
 very inefficient. Instead, we can apply the Double Spiral method \cite{AsianGammaSIAMFM} calculating the Fourier transform of the option price at two contours, at each time step. In \cite{AsianGammaSIAMFM} discretly sampled Asian options were considered, and
 a complicated structure of functions arising at each step of the backward induction procedure required the usage of the flat contours of
 integration. In application to barrier options, the contours in the Double Spiral method can be efficiently deformed and 
  an efficient sinh-acceleration technique developed in \cite{SINHregular} applied.
 Namely, we can use the change of variables of the form $\xi=i\om_1+b\sinh(i\om+y)$ and the simplified trapezoid rule. In the result,
 an accurate numerical calculation of integrals at each time step needs summation of 2-3 dozen of terms in the simplified trapezoid rule.
 We leave the design of explicit procedures for hedging using both versions of Carr's randomization to the future. 
 
 In the present paper, 
 we directly apply the general formulas for the double Fourier/Laplace inversion. These formulas
 are the same as the ones in \cite{paraLaplace} in the case of no-touch options, with the following improvement:
  instead of the fractional-parabolic changes of variables, the sinh-acceleration is used.  
 In the case of the first touch options, an additional generalization is needed
 because, contrary to the cases considered in \cite{paraLaplace}, 
 the payoff depends on $(t, X_t)$ rather than on $X_t$ only.  

One can use other methods that use approximations in the state space. Any such method has several sources of errors,
which are not easy to control. Even in the case of pricing European and barrier options, serious errors may result (see 
\cite{iFT,one-sidedCDS,pitfalls} for examples),
and, typically, very long and fine grids in the state space are needed. 
The recommendations in \cite{kirkby14a,kirkby16,kirkby15SIAMFM} for the choice of the truncation parameter 
   rely on the ad-hoc recommendation for the truncation parameter used in  a series of papers
 \cite{COS,COS2,COS3}. As examples in \cite{iFT,MarcoDiscBarr}
demonstrate, this ad-hoc recommendation  can be unreliable even if applied only once.
In the hedging framework suggested in the present paper, the truncation needs to be applied many times,
for each $t_j$ used in the time-discretization of the initial problem,
hence, the error control becomes almost impossible.

\subsection{Conditions on processes and payoff functions}\label{condXGNt}
We consider the down-and-out case; $H=e^h$ is the barrier, $T$ is the maturity date,
and $G(X_T)$ is the payoff at maturity. The most efficient realizations of the pricing/hedging formulas
are possible if the characteristic exponent admits analytic continuation to a union of a strip and cone
and behaves sufficiently regularly at infinity. For the general definition of the corresponding class of
L\'evy processes (called {\em SINH-regular}) and applications to pricing European options
in L\'evy models and affine models, see \cite{SINHregular}. In the present paper, for simplicity, we assume
that the characteristic exponent admits analytic continuation to the complex plane with two cuts.

\vskip0.1cm\noindent
{\em Assumption  $(X)$}. \begin{enumerate}[1.]
\item $X$ is a L\'evy process with the characteristic exponent 
$\psi$ admitting analytic continuation
to the complex plane with the cuts $i(-\infty, \lm]$, $i[\lp,+\infty)$;
\item
$\psi$ admits the representation $\psi(\xi)=-i\mu\xi+\psi^0(\xi)$, where $\mu\in\bR$, and $\psi^0(\xi)$
has the following asymptotics as $\xi\to \infty$: for any 
$\varphi\in (-\pi/2, \pi/2)$:
\bbe\label{aspsiinf}
\psi^0(\rho e^{i\varphi})=c^0_\infty e^{i\nu\varphi}\rho^\nu+O(\rho^{\nu_1}),\ \rho\to+\infty,
\ee
where $\nu\in (0,2]$, $\nu_1<\nu$ are independent of $\varphi$, and $c^0_\infty>0$. 
Then
\bbe\label{cinf}
\Re c^0_\infty e^{i\nu\varphi}>0, \quad |\varphi|<\pi/(2\nu).
\ee
\end{enumerate}
The condition \eqref{cinf} implies that if $\nu\in (0,1)$, then $|\varphi|\ge \pi/2$ can be admissible. For standard classes of L\'evy processes
used in finance, this is  possible if we consider analytic continuation to an appropriate Riemann surface. See \cite{iFT,paired} for details.
We will not use analytic continuation to Riemann surfaces in the present paper.
\begin{example}{\rm
\begin{enumerate}[(1)]
\item
 Essentially all L\'evy processes
used in quantitative finance are elliptic SINH-regular L\'evy processes:
 Brownian motion (BM); Merton model \cite{merton-model}; NIG (normal inverse Gaussian model) \cite{B-N}; hyperbolic
 processes \cite{EK}; double-exponential jump-diffusion model \cite{lipton-risk,lipton-columbia,kou,KW1,KW2};
 its generalization: hyper-exponential jump-diffusion model,
  introduced in \cite{ amer-put-levy-maphysto,lipton-risk} and studied in detail in  \cite{amer-put-levy-maphysto, amer-put-levy};
 the majority of processes of the $\beta$-class \cite{beta}; the generalized
 Koponen's family \cite{KoBoL} and its subclass KoBoL
 \cite{NG-MBS}. A subclass of KoBoL
(known as the CGMY model - see \cite{CGMY}) is given by the characteristic exponent
\begin{equation}\label{kbl}
\psi(\xi)=-i\mu\xi+c\Gamma(-\nu)[\lp^\nu-(\lp+i\xi)^\nu+(-\lm)^\nu-(-\lm-i\xi)^\nu],
\end{equation}
where $\nu\in (0,2), \nu\neq 1$ (in the case $\nu=1$, the analytical expression is different: see \cite{KoBoL,NG-MBS}).

 The characteristic exponents of NTS processes  constructed in \cite{B-N-L} are given by
 \begin{equation}\label{NTS2}
\psi(\xi)=-i\mu\xi+\de[(\al^2+(\xi+i\be)^2)^{\nu/2}-(\al^2-\be^2)^{\nu/2}],
\end{equation}
where $\nu\in (0,2)$, $\de>0$, $|\be|<\al$.


\item
  For KoBoL, VG  and NTS, $\psi^0$ admits analytic
continuation to an appropriate Riemann surface.
This extension can be useful
when the SINH-acceleration is applied to calculate the Wiener-Hopf factors, and less so for pricing European options.

\item
The asymptotic coefficient $c_\infty(\mathrm{arg}\,\xi_0)$ is
 \begin{enumerate}[(i)]
 \item
 if $X$ is BM, DEJD and HEJD, $c^0_\infty=\sg^2/2$;
 \item
  if $X$ is given by \eqref{kbl}, $c_\infty^0=-2c\Ga(-\nu)\cos(\pi\nu/2)$;
 \item
  if $X$ is given by \eqref{NTS2}, $c_\infty^0=\de$.
\end{enumerate}
\item In \cite{KoBoL}, we constructed more general classes of L\'evy processes, with the characteristic exponents of the form
\begin{equation}\label{kblgen}
\psi(\xi)=-i\mu\xi+c_+\Gamma(-\nu_+)[\lp^{\nu_+}-(\lp+i\xi)^{\nu_+}]+c_-\Gamma(-\nu_-)[(-\lm)^{\nu_-}-(-\lm-i\xi)^{\nu_-}],
\end{equation}
where $c_\pm\ge 0$, $c_++c_->0$, $\lm<0<\lp, \nu_\pm\in (0,2), \nu_\pm\neq 1$, with  modifications in the case $\nu_+=1$ and/or $\nu_-=1$.
For these processes, the domains of analyticity and bounds are more involved. In particular, in general, the coni are not symmetric w.r.t. the real axis.
\end{enumerate}

Note that in the strongly asymmetric version of KoBoL, in the spectrally one-sided case in particular,
the condition $c^0_\infty>0$ does not hold, and should be replaced with $\mathrm{arg}\, c^0_\infty\in (-\pi/2,\pi/2)$.
}
\end{example}
\vskip0.1cm\noindent
{\em Assumption $(G)$}. \begin{enumerate}[1.]
\item $G$ is a measurable function admitting a bound 
\bbe\label{Gbebound}
|G(x)|\le C(1+e^{\be x}),
\ee
where $C>0$ and $\be\in [0,-\lm)$ are independent of $x\in\bR$;
\item
either (i) $G(x)=e^{\be x}$, $x\in\bR$, or
(ii) the Fourier transform of $G$ 
\[
\hG(\xi)=\int_\bR e^{-ix\xi}G(x)dx
\]
is well-defined on the half-space $\{\Im\xi<-\be\}$, and admits the representation $\hG(\xi)=e^{-ia\xi}\hG_0(\xi)$,
where $a\in \bR$ and $\hG_0(\xi)$ is a rational function decaying at infinity.
\end{enumerate}
Note that only the values $G(x), x>h$,  matter, hence, we may replace $G$ with $\bfo_{(h,+\infty)}G$, and then
there is no need to consider the case (i) separately.

\begin{example} {\rm
\begin{enumerate}[(a)]
\item
$G=\bfo_{(-\infty,h]}$: the payoff function of the no-touch option and the square of this payoff;
\item
$G(x)=e^x$: the value of the underlying at the maturity date $T$ and $X_T=x$, and the product of the latter
and the payoff of the no-touch digital;
\item
$G(x)=e^{2x}$: the square of the underlying at the maturity date $T$ and $X_T=x$;
\item
$G(x)=(e^x-K)_+$: the payoff function of the call option. $\hG(\xi)=e^{-ik\xi}\hG_0(\xi)$, where $k=\ln K$, and
$\hG_0=K/(i\xi(i\xi-1)$ has two simple poles at 0 and $-i$.
\item
$G(x)=(e^x-K_j)_+(e^x-K_\ell)_+$: the product of the payoff functions of two call options. With $K_j=K_\ell$,
we have the payoff function of a powered call option. $\hG_0$ has three simple poles at 0, $-i$ and $-2i$.
 If $K_j\ge K_\ell$, $\hG(\xi)$ is given by \eqref{prodPuts}.
\item
$G(x)=(K-e^x)_+$: the payoff function of the put option; $\hG(\xi)=e^{-ik\xi}\hG_0(\xi)$, where $k=\ln K$,
$\hG_0=K/(i\xi(i\xi-1)$ has two simple poles at 0 and $-i$.
\item
$G(x)=(e^x-K_j)_+(K_\ell-e^x)_+$: the product of the payoff functions of the call and put options. With $K_j=K_\ell$,
we have the payoff function of a powered put option. $\hG_0$ has three simple poles at 0, $-i$ and $-2i$.
 If $K_j\le K_\ell$, $\hG(\xi)$ is given by \eqref{prodPuts}.
\end{enumerate}
}
\end{example}
\begin{rem}{\rm 
 In (a) and (f) (after the multiplication by $\bfo_{(h,+\infty)}$), $G$ satisfies condition 2 with $\be=0$,
in (b) and (d) - with $\be=1$, and in the other cases - with $\be=2$.
}
\end{rem}

\subsection{More general payoff functions and embedded options}\label{embedd}
In the case of embedded options, the simple structure of $\hG$ formalized in Assumption $(G)$ is impossible.
The following generalization allows us to consider payoffs that are prices of vanillas and some exotic options.
For $\ga\in (0,\pi/2)$, define cones in the complex plane $\cC^+_\ga=\{z\in \bC\ |\ \mathrm{arg}\,z\in (-\ga,\ga)\}$,
$\cC_\ga=\cC^+_\ga\cup(-\cC^+_\ga)$. For $\mum<\mup$, set $S_{(\mum,\mup)}:=\{\xi\ |\ \Im\xi\in (\mum,\mup)\}$.
\vskip0.1cm\noindent
{\em Assumption $(G_{emb})$}. \begin{enumerate}[1.]
\item $G$ is a measurable function admitting the bound \eqref{Gbebound};
\item
there exist $a\in \bR$, $\de>0$ and $\ga\in (0,\pi/2)$ such that 
$\hG(\xi)=e^{-ia\xi}\hG_0(\xi)$, where  $\hG_0(\xi)$ is meromorphic in
$S_{(\lm,\lp)}\cup \cC_\ga$, has a finite number of poles and decays as  $|\xi|^{-\de}$ or faster as $\xi\to\infty$
remaining in $S_{(\lm,\lp)}\cup \cC_\ga$.
\end{enumerate}


\section{Wiener-Hopf factorization}\label{s:WHF}
\subsection{Wiener-Hopf factorization: basic facts used and derived in \cite{barrier-RLPE,NG-MBS,single,paraLaplace,paired}}
Several equivalent versions of general pricing formulas for no-touch and first touch options were derived in \cite{barrier-RLPE,NG-MBS,single,paraLaplace,paired},
in terms of the Wiener-Hopf factors. In this subsection, we list the notation and facts which we use in the present paper. 
\subsubsection{Three forms of the Wiener-Hopf factorization}\label{3forms}
Let $X$ be a L\'evy process with the characteristic exponent $\psi$. 
The supremum and infimum process are defined by $\barX_t=\sup_{0\le s\le t}X_s$ and $\uX_t=\inf_{0\le s\le t}X_s$,
respectively. Let $q>0$ and let $T_q$ be an exponentially distributed random variable of mean $1/q$, independent of $X$.
Introduce functions 
\begin{equation}\label{e:def-phi-pm}
\phi^+_q(\xi)=\bE\left[ e^{i\xi\barX_{T_q}} \right], \qquad
\phi^-_q(\xi)=\bE\left[ e^{i\xi\uX_{T_q}} \right],
\end{equation}
and normalized resolvents (the {\em EPV operators} under $X$, $\barX$ and $\uX$, respectively) 
\beqa\label{cEq}
\cEq u(x)&=&\bE\left[u(X_{T_q})\right]=\bE^x\left[\int_0^{+\infty} qe^{-qt}u(X_t)dt\right],\\\label{cEpq}
\cEpq u(x)&=&\bE\left[u(\barX_{T_q})\right]=\bE^x\left[\int_0^{+\infty} qe^{-qt}u(\barX_t)dt\right],\\\label{cEmq}
\cEmq u(x)&=&\bE\left[u(\barX_{T_q})\right]=\bE^x\left[\int_0^{+\infty} qe^{-qt}u(\uX_t)dt\right].
\eqa
If $\bE\left[e^{\be X_{T_q}}\right]<+\infty$ for $\be\in [\bemq, \bepq]$, 
then the operators $\cEq, \cEpq, \cEmq$ are well-defined
in spaces of measurable functions bounded by $C\left(e^{\bemq x}+e^{\bepq x}\right)$, where $C$ depends on a function but not on $x$, and
the Wiener-Hopf factors are well-defined and bounded on the closed strip $\{\Re\xi \in [\bem,\bep]\}$ and analytic in the open strip. 

The  Wiener-Hopf factorization formula in
the form used in probability \cite[p. 98]{RW}
is
\begin{equation}\label{whfprob}
\bE \left[ e^{X_{T_q}} \right]=\bE \left[ e^{\barX_{T_q}}
\right]\cdot \bE \bigl[ e^{\uX_{T_q}} \bigr],
\end{equation}
and its operator analog 
$\cE_q=\cE^-_q\cE^+_q=\cE^+_q\cE^-_q$
is proved similarly to \eqref{whfprob} (see, e.g., \cite{single,BLdouble}).  Finally,
introducing the notation
$ \phi^+_q(\xi)=\bE\bigl[ e^{i\xi\overline{X}_{T_q}} \bigr], 
\phi^-_q(\xi)=\bE\bigl[ e^{i\xi\underline{X}_{T_q}} \bigr] $
and noticing that
$\bE\left[e^{i X_{T_q}\xi}\right] = \frac{q}{q+\psi(\xi)}$, 
we can write \eqref{whfprob} in the form
\begin{equation}\label{whfanal}
\frac{q}{q+\psi(\xi)}=\phi^+_q(\xi)\phi^-_q(\xi).
\end{equation}
Equation \eqref{whfanal} is a special case of the factorization of
functions on the real line into a product of two functions analytic
in the upper and lower open half planes and admitting the continuous
continuation up to the real line. This is the initial factorization 
formula discovered by Wiener and Hopf  \cite{wh1931} in 1931, for
functions of a much more general form than in the LHS of
\eqref{whfanal}.


\subsubsection{ Explicit formulas for and properties of  the Wiener-Hopf factors}\label{explWHF}
Let $X$ be a L\'evy process with characteristic exponent admitting analytic continuation to a strip $\Im\xi\in (\lm,\lp)$, $\lm<0<\lp$,
 around the real axis, and let $q>0$. Then (see, e.g., \cite{BLSIAM02,NG-MBS})

\mbr\noindent
(I)  there exist
$\sg_-(q)<0<\sg_+(q)$ such that
\begin{equation}\label{crucial}
q+\psi(\eta)\not\in (-\infty,0],\quad \Im\eta\in (\sg_-(q),\sg_+(q));
\end{equation}
\mbr\noindent
(II) the Wiener-Hopf factor $\phi^+_q(\xi)$ admits analytic continuation
to the half-plane $\Im\xi>\sg_-(q)$, and can be calculated as follows: for any $\om_-\in (\sg_-(q), \Im\xi)$,
\begin{equation}\label{phip1}
\phi^+_q(\xi)=\exp\left[\frac{1}{2\pi i}\int_{\Im\eta=\om_-}\frac{\xi \ln (q+\psi(\eta))}{\eta(\xi-\eta)}d\eta\right];
\end{equation}
\mbr\noindent
(III) the Wiener-Hopf factor $\phi^-_q(\xi)$ admits analytic continuation
to the half-plane $\Im\xi<\sg_+(q)$, and can be calculated as follows: for any $\om_+\in (\Im\xi, \sg_+(q))$,
\begin{equation}\label{phim1}
\phi^-_q(\xi)=\exp\left[-\frac{1}{2\pi i}\int_{\Im\eta=\om_+}\frac{\xi \ln (q+\psi(\eta))}{\eta(\xi-\eta)}d\eta\right].
\end{equation}
Note that one may use $1+\psi(\eta)/q$ instead of $q+\psi(\eta)$; the integrals \eqref{phip1}-\eqref{phim1} do no change.
Naturally, in this case, we require 
\begin{equation}\label{crucial2}
1+\psi(\eta)/q\not\in (-\infty,0],\quad \Im\eta\in (\sg_-(q),\sg_+(q)).
\end{equation}
Under additional conditions on $\psi$, there exist more efficient formulas
for the Wiener-Hopf factors. See Section \ref{WHFadd}.

\subsubsection{Analytic continuation of the Wiener-Hopf factors w.r.t. $\xi$, for $q$ fixed}\label{analcont} 
Under Assumption $(X)$,
$\psi(\eta)$ is analytic in the complex plane with two
cuts $i(-\infty, \lm]$ and $i[\lp,+\infty)$, and, for any $[\omm,\omp]\subset (\lm,\lp)$, $\Re\psi(\eta)\to+\infty$ as
$\eta\to \infty$ remaining in the strip. Hence, for $[\omm,\omp]\subset (\lm,\lp)$, there exists $\sg>0$ s.t. if $\Re q\ge \sg$,
then \eqref{crucial} holds. It follows that, for $q$ in the half-plane $\{\Re q\ge \sg\}$,  
\begin{enumerate}[(1)]
\item
\eqref{phip1} defines $\phipq(\xi)$ on the half-plane 
$\{\Im\xi>\omm\}$, and analytic continuation of $\phimq(\xi)$ to the strip $S_{(\omm,\omp)}$ can be defined by
\bbe\label{contphimqxi}
\phimq(\xi)=\frac{q}{(q+\psi(\xi))\phipq(\xi)};
\ee
\item
\eqref{phim1} defines $\phimq(\xi)$ on the half-plane 
$\{\Im\xi<\omp\}$, and analytic continuation of $\phipq(\xi)$ to the strip $S_{(\omm,\omp)}$ can be defined by
\bbe\label{contphipqxi}
\phipq(\xi)=\frac{q}{(q+\psi(\xi))\phimq(\xi)}.
\ee
\end{enumerate}
\begin{rem}{\rm It follows that $\phipq(\xi)$ (resp., $\phimq(\xi)$) admits analytic continuation w.r.t. $\xi$ to $\bC\setminus i(-\infty,\omm]$
(resp., $\bC\setminus i[\omp,+\infty)$).
}\end{rem}
For $\om_1,\om\in \bR$ and $b>0$, introduce the function
\[
\bC\ni y\mapsto \chi(\om_1,\om,b; y)=i\om_1+b\sinh(i\om+b)\in\bC,
\]
and the curve $\cL(\om_1,\om,b)= \chi(\om_1,\om,b; \bR)$, the image of the real line under $\chi(\om_1,\om,b; \cdot)$.
If $\om=0 (>0, <0)$, the curve is flat (wings point upward, wings point downward, respectively). For simplicity, we will use $\om\in [-\pi/2,\pi/2]$ only;
if $\psi$ admits analytic continuation to an appropriate Riemann surface, then $\om\not\in [-\pi/2,\pi/2]$ can be used, and then
the curve lies on this surface. See \cite{iFT,paraLaplace,paired, SINHregular} for details, and Fig.~\ref{ContoursSINH1} for an illustration.

Depending on the situation, we will need the Wiener-Hopf factors either on a curve  $\cL(\om_1,\om,b)$ with the wings pointing up ($\om>0$),
or on a curve  $\cL(\om'_1,\om',b')$ with the wings pointing down ($\om<0$). 
In the former case, we deform the contour of integration (in the formula for the Wiener-Hopf factors that we use)
so that the wings of the deformed contour $\cL(\om'_1,\om', b')$ point down ($\om'<0$), 
and in the latter case - up ($\om'>0$). This straightforward requirement is easy to  satisfy, as well as the second requirement:
the curves do not intersect. Indeed, if $\om\in (0,\pi/2]$ and $\om'\in [-\pi,2,0)$,
the curves do not intersect if and only if the point of intersection of the former with the imaginary axis is above
the point of the intersection of the latter with the same axis:
\bbe\label{no-intersect1}
\om_1+b\sin \om>\om'_1+b'\sin\om'.
\ee
The last condition on $\cL(\om'_1,\om', b')$ is that for $q$ of interest, the function
$\cL(\om'_1,\om', b')\ni \eta\mapsto 1+\psi(\eta)/q\in\bC$ (or $\eta\mapsto q+\psi(\eta)$)
is well-defined, and,  in the process of the deformation
of the initial line of integration into $\cL(\om'_1,\om', b')$,  image does not intersect
$(-\infty,0]$. If the parameters of the curve are fixed, this requirement is satisfied  if $\Re q\ge \sg$ and $\sg>0$ is sufficiently large.
For details, see \cite{paired}, where a different family of deformations (farctional-parabolic ones) was used.
In cases of the sinh-acceleration and fractional-parabolic deformation, at infinity,
the curves stabilize to rays, hence, the analysis  in
\cite{paired} can be used to derive the conditions on the deformation parameters if $q>0$.

If the Gaver-Stehfest method is applied, then we need to use the Wiener-Hopf factorization
technique for $q>0$ only, and the analysis in \cite{paired} suffices. An alternative method used in \cite{paraLaplace}
is to deform the contour of integration in the Bromwich intergral; in this case, the deformation of the latter
and the deformation of the contours in the formulas for the Wiener-Hopf factors must be in a certain agreement.
We outline the restrictions on the parameters of the deformations in Section \ref{LaplSinh}.

For positive $q$, the maximal (in absolute value) $\sg_\pm(q)$ are easy to find for all popular classes of L\'evy processes
used in finance. As it is proved in \cite{NG-MBS} (see also \cite{paired}), the equation $q+\psi(\xi)=0$ has
either 0 or 1 or two roots in the complex plane with the two cuts $i(-\infty,\lm]$ and $i[\lp,+\infty)$. 
Each root is purely imaginary, and of the form $-i\be^\pm_q$, where $-\bemq\in (0,\lp)$
and $-\bepq\in (\lm,0)$. If the root $-i\be^\pm_q$ exists, then $\sg_\mp(q)=-\be^\pm_q$ otherwise
$\sg_\pm(q)=\la_\pm$.

\begin{figure}
\scalebox{0.6}
{\includegraphics{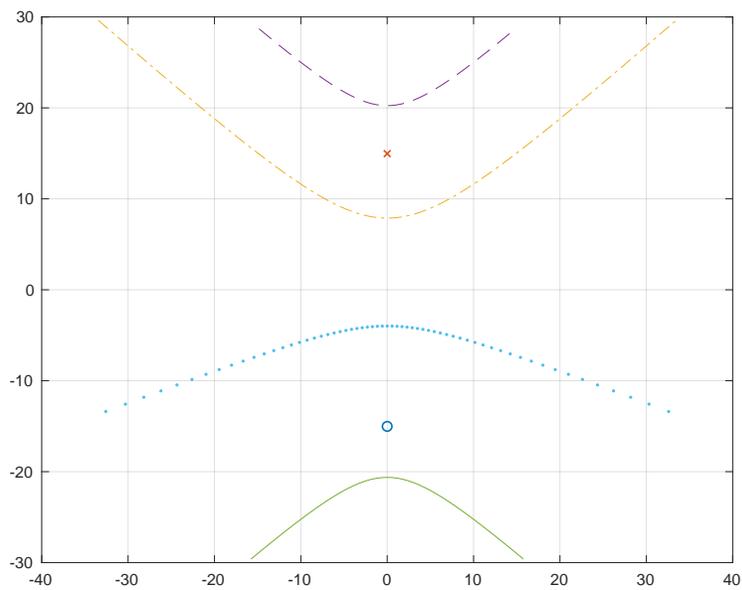}}
\caption{\small Examples of contours for KoBoL model. Parameters: $\nu=1.4, c=0.1466, \lm=-30, \lp=25$.
Cross and circle: $-i\be^\mp_q$.
Dashes: $\cL(15.0,0.71, 8.1)$. Dot-dashes:  $\cL(0,0.76, 12.2)$. Dots: $\cL(-1,-0.35, 8.6)$. Solid line:
$\cL(-15,-0.71, 12.1)$ }
\label{ContoursSINH1}
\end{figure}


\subsection{Calculation of the Wiener-Hopf factors using the sinh-acceleration}\label{WHFSinh}
If for the Laplace inversion the Gaver-Stehfest method or other methods utilizing only positive $q$ is used,
 then
we can take any $\om\in [-\pi/2,\pi/2]$ and $\om'=-\om$. If $\psi$ admits analytic continuation 
to an appropriate Riemann surface (as it is the case for the some classes of L\'evy processes),
 then $\om\not\in [-\pi/2,\pi/2]$ can be used. The curve lies in the complex plane but the ``conical" region around the curve, in the $y$-coordinate,
 is a subset of the Riemann surface.
See \cite{iFT,paraLaplace,paired, SINHregular} for details.
In \cite{iFT,paraLaplace,paired}, fractional-parabolic deformations of the contour of integration  into a Riemann surface significantly increased
the rate of the decay of the integrand as compared to deformations into the complex plane with the two cuts; the number of terms
in the simplified trapezoid rule decreased by a factor of 10-1000 and more (see \cite{pitfalls} for a detailed analysis).
If the sinh-acceleration is used, the gain is not large if any: the number of terms in the simplified trapezoid rule decreases by a factor of 1.5-2, at best,
but the analytic expressions that one has to evaluate become more involved. Hence, in the present paper,
we use $\om,\om'\in [-\pi/2,\pi/2]$, of opposite signs. The sinh-acceleration has another advantage as compared to
the fractional-parabolic deformations. As it is shown in \cite{paired},
 if $q>0$ is small (which is the case for some terms in the Gaver-Stehfest formula if $T$ is large), then one of  
the $\sg_\pm(q)$ in Subsection \ref{explWHF} or both are small in absolute value, and then the strip of analyticity of the integrand is too narrow.
 Hence, if the fractional-parabolic change of variables is applied, the size $\ze$ of the mesh in the simplified trapezoid rule necessary 
 to satisfy the desired error tolerance $\eps$
  is too small, and the number of terms $N$ too large. A rescaling in the dual space can increase the width of the strip
  of analyticity but then the product $\La:=N\ze$ must increase, and the decrease in $N$ is insignificant. 
 If the sinh-acceleration is used, then the rescaling (using an appropriately small $b$) does not increase $\La$ significantly.
 Roughly speaking, in the recommendation for the choice $\La$, $1/\eps$ should be replaced with $1/\eps+a\ln(1/b)$, where
 $a$ is a moderate constant independent of $\eps, b$ and other parameters. As numerical examples in \cite{NewFamilies}
 demonstrate, this kind of rescaling is efficient even if the initial strip of analyticity is of the width $10^{-6}$ and less. 
 Hence, in this paper, we will use the Gaver-Stehfest method with the Rho-Wynn acceleration.
For each $q$, we use the following versions of \eqref{phip1}-\eqref{phim1}:
\begin{enumerate}[(i)]
\item
for $\xi\in \cL(\om_1,\om,b)$,
\begin{equation}\label{phip1sinh}
\phi^+_q(\xi)=\exp\left[\frac{b'}{2\pi i}\int_\bR\frac{\xi \ln (q+\psi(\chi(\om'_1,\om',b',y))}{\eta(\xi-\chi(\om'_1,\om',b',y))} b'\cosh(i\om'+y)dy\right];
\end{equation}
\item
for $\xi\in \cL(\om'_1,\om',b')$,
\begin{equation}\label{phim1sinh}
\phi^-_q(\xi)=\exp\left[-\frac{b}{2\pi i}\int_\bR\frac{\xi \ln (q+\psi(\chi(\om_1,\om,b,y))}{\eta(\xi-\chi(\om_1,\om,b,y))} b'\cosh(i\om+y)dy\right].
\end{equation}
\end{enumerate}
Each integral is evaluated applying the simplified trapezoid rule.

\subsection{Numerical examples}\label{numerWHF} In Table~\ref{TableWHF}, we apply the above scheme to calculate $\phi^\pm_q(\xi)$
in KoBoL model of order $\nu=1.2$. If the factors are calculated at 30 points, then approximately 1.5 msec is needed
to satisfy the error tolerance $\eps=10^{-15}$, and 1.0-1.2 msec are needed to satisfy the error tolerance $\eps=10^{-10}$.
The number of terms is in the range 350-385 in the first case, and 159-175 in the second case. 
To satisfy the error tolerance of the order of $10^{-20}$, about 500 terms would suffice but, naturally, high precision arithmetics
will be needed.
 
\section{Calculation of no-touch options and expectations of no-touch products}\label{no-touch dual}

\subsection{General formulas for no-touch options}\label{gen_f_no_touch}
In \cite{single,paraLaplace,paired}, it is proved that the Laplace transform $\tV_1(G;T,x)$
of $V_1(G;T,x)=e^{rT}V(G;T,x)$
 w.r.t. $T$ is given by
\bbe\label{LTnt_main}
\tV_1(G;q,x)=q^{-1}(\cEmq \bfo_{(h,+\infty)}\cEpq G)(x).
\ee
The result is proved under conditions more general than Assumptions $(X)$ and $(G_{emb})$.

Let $\cF$ denote the operator of the Fourier transform, and set $\Pi^+_h:=\cF \bfo_{(h,+\infty)} \cF^{-1}$. 
The operator $\Pi^+$ arises  
systematically in the theory of the Wiener-Hopf factorization and boundary value problems. 
See \cite{eskin} for the general setting in applications
to multi-dimensional problems. Using $\cF$ and $\Pi^+_h$ and taking into account that
$\cE^\pm_q=\cF^{-1}\phi^\pm_q\cF$, we rewrite \eqref{LTnt_main} as
\bbe\label{LTnt_main2}
\tV_1(G;q,x)=q^{-1}(\cF^{-1}\phimq \Pi^+_h\phipq \cF G)(x).
\ee

\begin{lem}\label{l:Pip1}
Let $f$ be an analytic function in a strip $S_{(\sgm,\sgp)}$, where $\sgm<\sgp$, that decays as $|\xi|^{-s}, s>1,$ as $\xi\to\infty$ remaining in
the strip.

Then $(\Pi^+_hf)(\xi)$ is analytic in the half-plane $\{\xi\ |\ \Im\xi<\sgp\}$, and
can be defined by any of the following three formulas:
\begin{enumerate}[a)]
\item
for any $\omp\in (\Im\xi, \sgp)$,
\begin{equation}\label{fPimax}
 \Pi^+_hf(\xi)=\frac{1}{2\pi i}\int_{\Im\eta=\om_+}\frac{e^{ih(\eta-\xi)}f(\eta)}{\xi-\eta}d\eta;
 \end{equation}
 \item
 for any $\omm\in (-\infty,\Im\xi)$,
 \begin{equation}\label{fPimin}
 \Pi^+_hf(\xi)=f(\xi)+\frac{1}{2\pi i}\int_{\Im\eta=\om_-}\frac{e^{ih(\eta-\xi)}f(\eta)}{\xi-\eta}d\eta;
 \end{equation}
 \item
 \begin{equation}\label{fPiH}
 \Pi^+_hf(\xi)=\frac{1}{2}f(\xi)+\frac{1}{2\pi i}{\rm v.p.}\int_{\Im\eta=\Im\xi}\frac{e^{ih(\eta-\xi)}f(\eta)}{\xi-\eta}d\eta,
 \end{equation}
 where ${\rm v.p.}$ denotes the Cauchy principal value of the integral.
 \end{enumerate}
 \end{lem}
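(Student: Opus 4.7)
The plan is to unfold the definition $\Pi^+_h=\cF\bfo_{(h,+\infty)}\cF^{-1}$ as an iterated Fourier integral, derive formula (a) via Fubini after pushing the $\eta$-contour strictly above $\Im\xi$, and then obtain (b) and (c) by contour deformation, using the residue theorem and the Sokhotski--Plemelj principle respectively.

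For (a), fix $\xi$ with $\Im\xi<\sgp$ and pick $\omp\in(\Im\xi,\sgp)$. Since $f$ is analytic in $S_{(\sgm,\sgp)}$ with $|f(\eta)|=O(|\eta|^{-s})$, $s>1$, the inverse Fourier transform of $f$ can be represented as the shifted contour integral $u(x)=(2\pi)^{-1}\int_{\Im\eta=\omp}e^{ix\eta}f(\eta)\,d\eta$. Substituting this into $\Pi^+_h f(\xi)=\int_h^{+\infty}e^{-ix\xi}u(x)\,dx$ produces an iterated integral whose absolute integrand is bounded by $e^{-x(\omp-\Im\xi)}|f(\eta)|$. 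Because $\omp-\Im\xi>0$ and $|f|$ is integrable on the contour (from $s>1$), Fubini applies; evaluating the inner $x$-integral first gives
\[
\int_h^{+\infty}e^{ix(\eta-\xi)}\,dx=\frac{e^{ih(\eta-\xi)}}{i(\xi-\eta)},
\]
from which \eqref{fPimax} is immediate. The right-hand side of \eqref{fPimax} is manifestly analytic in $\xi$ on $\{\Im\xi<\omp\}$, and letting $\omp\uparrow\sgp$ yields the asserted analytic continuation of $\Pi^+_h f$ to the half-plane $\{\Im\xi<\sgp\}$.

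For (b), choose $\omm\in(\sgm,\Im\xi)$ (the statement is to be read with the implicit requirement $\omm>\sgm$, so that $f$ remains defined on the new contour) and shift the line of integration in \eqref{fPimax} from $\Im\eta=\omp$ down to $\Im\eta=\omm$. In the intermediate strip the integrand $e^{ih(\eta-\xi)}f(\eta)/(\xi-\eta)$ is meromorphic with a single simple pole at $\eta=\xi$ of residue $-f(\xi)$, while the vertical closing segments at $\Re\eta\to\pm\infty$ contribute nothing thanks to the decay $|f(\eta)|=O(|\eta|^{-s})$, $s>1$. The residue theorem therefore adds the term $+f(\xi)$, giving \eqref{fPimin}. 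Formula (c) is obtained by the same deformation but with the contour indented around $\eta=\xi$ by a small semicircle of radius $\varepsilon$; as $\varepsilon\to 0$ the Sokhotski--Plemelj argument produces $\tfrac{1}{2}f(\xi)$ from the vanishing semicircle and a Cauchy principal value from the remaining horizontal piece, which is exactly \eqref{fPiH}.

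The only genuinely delicate step is the application of Fubini in (a), but the strict inequality $\omp>\Im\xi$ combined with the polynomial decay $s>1$ makes absolute convergence immediate, so this poses no real obstacle. The residue computation in (b) and the principal-value identity in (c) are then routine consequences of the contour shift.
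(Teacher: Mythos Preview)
Your proof is correct and follows essentially the same route as the paper: direct computation with Fubini for (a), contour shift across the simple pole for (b), and a semicircular indentation (Sokhotski--Plemelj) for (c). Your added remark that (b) implicitly requires $\omm>\sgm$ so that $f$ is defined on the shifted contour is a valid clarification of the statement.
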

\begin{proof} a) 
Applying the definition of $\Pi^+_h$ and Fubini's theorem,  we obtain
\begin{eqnarray}\nonumber
\Pi^+_hf(\xi)&=&(\cF\bfo_{(h,+\infty)}\cF^{-1}f)(\xi)
=\int_h^{+\infty} e^{-ix\xi}(2\pi)^{-1}\int_{\Im\eta=\omp} e^{ix\eta}f(\eta)d\eta dx\\\label{Pif}
&=& \int_{\Im\eta=\omp} f(\eta) (2\pi)^{-1}\int_h^{+\infty} e^{i(\eta-\xi)x}dx d\eta
=\frac{1}{2\pi i}\int_{\Im\eta=\omp}\frac{e^{ih(\eta-\xi)}f(\eta)}{\xi-\eta}d\eta,
\end{eqnarray}
which proves \eqref{fPimax}.

b) We  push the line of integration in \eqref{Pif} down. On crossing the simple
pole at $\eta=\xi$, we apply the Cauchy residue theorem, and obtain \eqref{fPimin}.

c) Let $\om:=\Im\xi$ and $a=\Re\xi$. We deform the contour $\Im\xi=\om$ into
\[
\cL_\eps=i\om+((-\infty, a-\eps)\cup (a+\eps,+\infty))\cup\eps\{e^{i\varphi}\ |\ 0\le \varphi\le \pi\}
\]
and then pass to the limit $\eps\downarrow 0$. The result is \eqref{fPiH}.

\end{proof}
\begin{rem}{\rm \begin{enumerate}[a)]
\item
Let $\om=h=0$. Then \eqref{fPiH}
can be written in the form 
$\Pi^+_0=\frac{1}{2}I+\frac{1}{2i}\cH,$
where $I$ is the identity operator, and $\cH$ is the Hilbert transform.
Therefore,
a realization of $\Pi^+_0$ is, essentially, equivalent to a realization of $\cH$.
\item
 Similar formulas are valid in the multi-dimensional case, and for $\Pi^+_h$ acting in appropriate spaces of generalized functions.
 In particular, the condition on the rate of decay of $f$ at infinity can be relaxed.
 Under appropriate regularity conditions
on $f$, equation \eqref{fPiH} can be proved
for $f$ defined on the line $i\om+\bR$, the result being a function on the same line that admits analytic continuation
to the half-plane below this line.
 See \cite{eskin}.
 \end{enumerate}
 }\end{rem}

The following theorem is a part of the proof of the general theorem for pricing no-touch options in \cite{barrier-RLPE,NG-MBS};
we will outline the proof based on \eqref{LTnt_main2}.
\begin{thm}\label{thm:LTFT}
Let Assumptions $(X)$ and $(G_{emb})$ hold, and let $\omm\in (\lm,-\be)$ and 
$\omp\in (\omm, -\be)$. Then there exists $\sg>0$
such that for all $q$ in the half-plane $\{\Re q\ge \sg\}$, and all $x>h$,
\beqa\label{eq:LTFT}
\tV_1(G;q,x)&=&\frac{1}{2\pi q}\int_{\Im\xi=\omp}d\xi\,e^{ix\xi}
\phimq(\xi)\frac{1}{2\pi}\int_{\eta=\omm}d\eta\,\frac{e^{ih(\eta-\xi)}\phipq(\eta)\hG(\eta)}{i(\xi-\eta)}\\\label{eq:LTFTpr}
&&+\frac{1}{2\pi}\int_{\Im\xi=\omp}d\xi\,\frac{e^{ix\xi}\hG(\xi)}{q+\psi(\xi)}.
 \eqa
 \end{thm}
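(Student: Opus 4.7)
The plan is to start from the compact form \eqref{LTnt_main2}, namely $\tV_1(G;q,x)=q^{-1}(\cF^{-1}\phimq\Pi^+_h\phipq\hG)(x)$, and successively unwrap $\Pi^+_h$ via Lemma \ref{l:Pip1}(b) and $\cF^{-1}\phimq$ via the inverse Fourier transform on an appropriate horizontal contour. The key is a consistent choice of the lines $\Im\eta=\omm$ and $\Im\xi=\omp$ that simultaneously live in the strip of analyticity of $\psi$, inside the domain of $\hG$, and inside the strips of analyticity of the two Wiener-Hopf factors.

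First I would verify the existence of $\sg>0$ claimed in the statement. Under Assumption $(X)$, $\psi$ extends analytically to $\bC\setminus(i(-\infty,\lm]\cup i[\lp,+\infty))$ and $\Re\psi(\eta)\to+\infty$ as $\eta\to\infty$ inside any substrip $[\omm,\omp]\subset(\lm,\lp)$. Hence, by the discussion in Subsection \ref{analcont}, for all sufficiently large $\Re q$ the strip of non-vanishing $q+\psi$ contains $[\omm,\omp]$, so $\sg_-(q)<\omm$ and $\sg_+(q)>\omp$; therefore $\phipq$ is analytic in $\{\Im\eta>\omm-\epsilon\}$ and $\phimq$ is analytic in $\{\Im\xi<\omp+\epsilon\}$ for some $\epsilon>0$. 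Because $\omm<-\be$, Assumption $(G_{emb})$ gives analyticity of $\hG_0$ and hence of $\phipq\hG$ on the line $\Im\eta=\omm$, together with the requisite polynomial decay coming from the $|\eta|^{-\de}$ bound on $\hG_0$ and the boundedness of $\phipq$.

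Next, I would apply Lemma \ref{l:Pip1}(b) to the function $f=\phipq\hG$ on the contour $\Im\eta=\omm$ and evaluate at $\xi$ with $\Im\xi=\omp>\omm$:
\begin{equation*}
(\Pi^+_h(\phipq\hG))(\xi)=\phipq(\xi)\hG(\xi)+\frac{1}{2\pi i}\int_{\Im\eta=\omm}\frac{e^{ih(\eta-\xi)}\phipq(\eta)\hG(\eta)}{\xi-\eta}\,d\eta.
\end{equation*}
Multiplying by $q^{-1}\phimq(\xi)$ and applying $\cF^{-1}$ on the line $\Im\xi=\omp$ (justified by the decay of $\phimq\phipq\hG$ there and by Fubini in the double-integral term) yields two contributions. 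In the first, I would invoke the Wiener-Hopf identity \eqref{whfanal}, $\phimq(\xi)\phipq(\xi)=q/(q+\psi(\xi))$, to cancel the prefactor $q^{-1}$ and produce exactly $\tfrac{1}{2\pi}\int_{\Im\xi=\omp}e^{ix\xi}\hG(\xi)/(q+\psi(\xi))\,d\xi$, which is the second summand on the right-hand side of \eqref{eq:LTFTpr}. The second contribution gives the iterated integral in \eqref{eq:LTFT} directly, upon absorbing the $1/i$ from rewriting $1/(\xi-\eta)=(1/i)\cdot i/(\xi-\eta)$ as written in the statement.

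The main technical obstacle is purely bookkeeping: making sure that (i) the selected $\omm$ is below the simple poles of $\hG_0$ in the half-plane $\{\Im\eta<-\be\}$ that might encroach from above and above the cut $i(-\infty,\lm]$; (ii) the chosen $\omp$ lies both above $\omm$ (needed for Lemma \ref{l:Pip1}(b)) and below $-\be$ (needed to keep the second summand \eqref{eq:LTFTpr} well defined, since $\hG$ is analytic only in $\{\Im\xi<-\be\}$); and (iii) for $\Re q\ge\sg$ the strip of non-vanishing of $q+\psi$ contains both lines. All three requirements are accommodated by the assumption $\omm<\omp<-\be$ in $(\lm,\lp)$ together with the enlargement of $\sg$ discussed above. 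Once these are secured, the Fubini interchanges and the contour manipulations are routine and the two-term representation \eqref{eq:LTFT}--\eqref{eq:LTFTpr} follows.
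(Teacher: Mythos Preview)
Your proposal is correct and follows essentially the same route as the paper's own proof: start from the operator representation \eqref{LTnt_main2}, choose $\sg$ large enough so that $\phipq,\phimq,\hG$ are analytic on the strip $S_{[\omm,\omp]}$, apply Lemma~\ref{l:Pip1}(b) (i.e.\ \eqref{fPimin}) to $f=\phipq\hG$, and then use the Wiener--Hopf identity \eqref{whfanal} to simplify the diagonal term to \eqref{eq:LTFTpr}. Your write-up is in fact more explicit than the paper's about the contour bookkeeping; the only remark the paper adds that you omit is that the resulting integrals converge absolutely, possibly after an integration by parts in the oscillatory $\xi$-integral.
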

 \begin{proof} We can choose $\sg$ so that, for a fixed $q$ in the half-plane $\{\Re q\ge \sg\}$,
  $\phi^\pm_q(\eta)$ and $\hG(\eta)$ are analytic in the strip 
 $\eta\in S_{[\omm,\omp]}$. Hence, the product $\phipq(\eta)\hG(\eta)$ is analytic in the same strip, and, by Lemma 
 \ref{l:Pip1},  the function $(\Pi_h\phipq\hG)(\xi)$ is analytic in the half-plane $\{\Im\xi\in (-\infty, \omp)\}$.
 Applying \eqref {fPimin}, we obtain \eqref{eq:LTFT}-\eqref{eq:LTFTpr}. In more detail, 
 on the RHS of \eqref{eq:LTFTpr}, we obtain, first,
 \[
 \frac{1}{2\pi q}\int_{\Im\xi=\omp}d\xi\,e^{ix\xi}\phimq(\xi)\phipq(\xi)\hG(\xi),\]
 and then apply the Wiener-Hopf factorization formula \eqref{whfanal}.
 The integrals on the RHS of \eqref{eq:LTFT}-\eqref{eq:LTFTpr}  absolutely converge (or absolutely converge
 after integration by parts in the oscillatory integrals w.r.t. $\xi$) and their sum equals $\tV(G;q,x)$. See \cite{barrier-RLPE,NG-MBS,BIL}.
 
 \end{proof}
  Denote by $\cM$ the set of $q$'s
in the Gaver-Stehfest formula and by $Q(q)$ the weights. We have
an approximation
\bbe\label{GSt}
V(G,T;x)=e^{-rT}\sum_{q\in \cM} Q(q)e^{qT}\tV_1(G;q,x).
\ee
It remains to design an efficient numerical procedure for the evaluation of $\tV_1(G;q,x)$, $q>0$.
 
 \subsection{Sinh-acceleration in the integrals on the RHS of \eqref{eq:LTFT}-\eqref{eq:LTFTpr}}\label{sinh_no_touch_general}

 First, we rewrite 
  \eqref{eq:LTFT}-\eqref{eq:LTFTpr} as follows: for $x>h$,
 \beqa\label{eq:LTFT2}
\tV(G;q,x)&=&\frac{1}{2\pi q}\int_{\Im\xi=\omp}d\xi\,e^{i\xi(x-h)}
\phimq(\xi)\frac{1}{2\pi}\int_{\Im\eta=\omm}d\eta\,\frac{e^{-i\eta(a-h)}\phipq(\eta)\hG_0(\eta)}{i(\xi-\eta)}\\\label{eq:LTFT2pr}
&&+\frac{1}{2\pi}\int_{\Im\xi=\omp}d\xi\,\frac{e^{i(x-a)\xi}\hG_0(\xi)}{q+\psi(\xi)},
 \eqa
 where $a\ge h$ and $\hG_0$ satisfy the conditions in Assumption $(G_{emb})$.
 
 As $\eta\to \infty$, $\hG_0(\eta)\to 0$. If the order $\nu\in [1,2]$ or $\mu=0$, then $\phi^\pm_q(\xi)\to 0$ as $\xi\to \infty$ in the domain of analyticity; if $\nu<1$ and $\mu\neq 0$, then one of the Wiener-Hopf factors stabilizes to constant at infinity, and
 the other one decays as $1/|\xi|$. Since $x-h>0$,  it is advantageous to deform the outer contour 
 on the RHS of \eqref{eq:LTFT2} so that the wings
 of the deformed contour point upward: $\cL^+:=\cL(\om_1,\om,b)= \chi(\om_1,\om,b; \bR)$, where $\om>0$. At this stage, we
 deforms the contour so that no pole of the integrand (if it exists) is crossed, and consider the cases of crossing later.
 In all cases, in the process of deformations, the curves must remain in $S_{(\lm,\lp)}\cup \cC_\ga$, where $\cC_\ga$ is the cone in 
 Assumption $(G_{emb})$, and $S_{(\lm,\lp)}=\{\xi\ | \Im\xi\in (\lm,\lp)\}$. If Assumption $(G)$ holds, then we may take any
 $\ga\in (0,\pi/2)$; in our numerical experiments, we will take $\ga=\pi/4$.
 
 The type of the deformation of the inner contour depends on the sign of $a-h$. If $a\ge h$, which, in the case of puts and calls means that 
 the strike is above or at the barrier, then we may deform the contour downward. The deformed contour is of the form
 $\cL^-:=\cL(\om'_1,\om',b')= \chi(\om'_1,\om',b'; \bR)$, where $\om'<0$; as in the case of the deformation
 of the outer contour, we choose the parameters of the deformation so that, in the process of deformation, no pole of the inner integrand is 
 crossed, and consider the cases of crossing later.
 The parameters of the both contours are chosen so that
 $\cL^+$ is strictly above $\cL^-$. 
 The case $a<h$ (the strike is below the barrier) is reducible to the case $G(x)=\bfo_{[h,+\infty)}e^{\be x}$. 
 We have $\hG(\xi)=e^{\be h-ih\xi}\phipq(-i\be)/(i\xi-\be)$, hence, Assumption $(G)$ is valid with $a=h$.  
If $G$ is the value function of an embedded option, then $a<h$ is possible (e.g., the strike of the embedded European
 put or call is below the barrier). We consider this case separately, in Section \ref{strikebelowbarrier}.

 The type of deformation of the contour of integration on the RHS of \eqref{eq:LTFT2pr} depends on the sign
 of $x-a$. If $x-a>0$, we use $\cL^+$, and if $x-a<0$, then $\cL^-$. If $x-a=0$, then either deformation can be used.
 We conclude that, for the majority of applications when embedded options are not involved, we may use  deformed contours of the form $\cL^+$ in the outer integral and
 of the form $\cL^-$ in the inner integral, and, for
 $x>h$, write
 \beqa\label{eq:LTFT3}
&&\tV_1(G;q,x)\\\nonumber&=&\frac{1}{2\pi q}\int_{\cL^+}d\xi\,e^{i\xi(x-h)}
\phimq(\xi)\frac{1}{2\pi}\int_{\cL^-}d\eta\,\frac{e^{-i\eta(a-h)}\phipq(\eta)\hG_0(\eta)}{i(\xi-\eta)}+\frac{1}{2\pi}\int_{\cL'}d\xi\,\frac{e^{i(x-a)\xi}\hG_0(\xi)}{q+\psi(\xi)},
 \eqa
 where the type of the deformed contour $\cL'$ depends on the sign of $x-a$:
 $\cL'=\cL^+$ if $x-a\ge 0$, and $\cL'=\cL^-$ if $x-a\le 0$.
 To evaluate the integrals numerically, we make the changes of variables $\xi=i\om_1+b\sinh(i\om+y)$ and
$\eta=i\om'_1+b'\sinh (i\om'+y')$, and apply the simplified trapezoid rule w.r.t. $y$ and $y'$.

\subsection{Crossing poles}\label{cross_poles}
For simplicity, we consider the case $q>0$.  Hence, the results in this Subsection can be applied only if the Gaver-Stehfest method
or other similar methods are applied.\footnote{ Under additional assumptions on the process, similar constructions can be applied
if the contour deformation in the Bromwich integral is applied. See, e.g., \cite{UltraFast}, where the fractional-parabolic deformation was applied.} Assume that both solutions  $-i\be^\pm_q$ exist.

 \subsubsection{Case $h\le a\le x$, $x>h$} If
 $G(x)=(e^x-e^a)_+$ or $G(x)=(e^a-e^x)_+$, the condition $x\ge a$ means that the corresponding European call is 
 ITM or ATM, and put OTM or ATM.
  Recall that the initial lines of integration and the deformed contours are in the strip of analyticity of the integrand,
  around the real axis.
 It follows that the intersection of $\cL^+$ (resp., $\cL^-$)
 with the imaginary line $i\bR$ is below $-i\bemq$ (resp., above $-i\bemq$). On the RHS of \eqref{eq:LTFT3},
 we move $\cL^+$ up (resp., $\cL^-$ down),
 cross the simple pole at $-i\bemq$ (resp., at $-i\bepq$), and stop before the cut $i[\lp,+\infty)$ 
 (resp., $i(-\infty,\lm]$) is reached. On crossing the pole, we apply the residue theorem
 and the equalities
 \bbe\label{respsi}
 \lim_{\xi\to-i\be^\pm_q}q^{-1}\phi^\pm_q(\xi)(\xi+i\be^\pm_q)=\frac{1}{\psi'(-i\be^\pm_q)\phi^\mp_q(-i\be^\pm_q)}, 
 \ee
 which follow from \eqref{contphimqxi}-\eqref{contphipqxi}. Denote the new contours $\cL^{++}$ and $\cL^{--}$.
 In the last integral on the RHS of \eqref{eq:LTFT3},
 we deform $\cL'$ into $\cL^{++}$  (or a contour with the properties of $\cL^{++}$). In the process of the deformation,
 we may have to cross the poles of $\hG_0$ that are above the line $\{\Im\xi=\omm\}$ but below the contour $\cL^{++}$.
 Let $\cZ(\hG_0;\omm,\cL^{++})$ the set of these poles, and
 let  the poles be simple and different from $-i\bemq$; then
 \beqa\label{eq:LTFT4}
\tV_1(G;q,x)&=&\frac{1}{2\pi q}\int_{\cL^{++}}d\xi\,e^{i\xi(x-h)}
\phimq(\xi)\frac{1}{2\pi}\int_{\cL^{--}}d\eta\,\frac{e^{-i\eta(a-h)}\phipq(\eta)\hG_0(\eta)}{i(\xi-\eta)}\\\nonumber
&&-\frac{e^{\bemq(x-h)}}{\phipq(-i\bemq)\psi'(-i\bemq)}
\frac{1}{2\pi}\int_{\cL^{--}}d\eta\,\frac{e^{-i\eta(a-h)}\phipq(\eta)\hG_0(\eta)}{\eta+i\bemq}\\\nonumber
&&-\frac{e^{-\bepq(a-h)}\hG_0(-i\bepq)}{\phimq(-i\bepq)\psi'(-i\bepq)}
\frac{1}{2\pi }\int_{\cL^{++}}d\xi\,e^{i\xi(x-h)}\frac{\phimq(\xi)}{\xi+i\bepq}
\\\nonumber
&&-\frac{e^{\bemq(x-h)}}{\phipq(-i\bemq)\psi'(-i\bemq)}
\frac{e^{-\bepq(a-h)}\hG_0(-i\bepq)}{\phimq(-i\bepq)\psi'(-i\bepq)(\bepq-\bemq)}\\\nonumber
&&+\frac{1}{2\pi}\int_{\cL^{++}}d\xi\,\frac{e^{i(x-a)\xi}\hG_0(\xi)}{q+\psi(\xi)} 
 +\frac{ie^{\bemq(x-a)}\hG_0(-i\bemq)}{\psi'(-i\bemq)}\\\nonumber
 &&+\sum_{z\in \cZ(\hG_0;\omm,\cL^{++})}\frac{ie^{-iz(x-a)}}{q+\psi(z)}\lim_{z'\to z}\hG_0(z')(z'-z).
 \eqa
 Note that if $-i\be^\pm_q$ do not exist or not crossed in the process of the deformations, than
 \eqref{eq:LTFT4} is valid after all terms with $\be^\pm_q$ are removed. In applications to pricing call options,
 $\hG(\eta)=e^{-ia\eta}\hG_0(\eta)$, where $a=\ln K$, and
$\hG_0(\eta)=-\frac{Ke^{-rT}}{\eta(\eta+i)}$
has two simple poles at $-i,0$, it is advantageous to cross these two poles even if the poles $-i\be^\pm_q$ are not crossed:
 \beqa\label{eq:LTFT4call}
\tV_{call}(q,x)&=&-\frac{Ke^{-rT}}{(2\pi)^2 q}\int_{\cL^{+}}d\xi\,e^{i\xi(x-h)}
\phimq(\xi)\int_{\cL^{-}}d\eta\,\frac{e^{-i\eta(a-h)}\phipq(\eta)}{\eta(\eta+i)i(\xi-\eta)}\\\nonumber
&&-\frac{Ke^{-rT}}{2\pi}\int_{\cL^{+}}d\xi\,\frac{e^{i(x-a)\xi}}{(q+\psi(\xi))\xi(\xi+i)}+
Ke^{-rT}\left(\frac{e^{x-a}}{q+\psi(-i)}-\frac{1}{q}\right).
\eqa
 \subsubsection{Case $h<x< a$} In typical examples 
 $G(x)=(e^x-e^a)_+$ or $G(x)=(e^a-e^x)_+$, the corresponding European call is OTM, and put ITM.
 In the case $x\ge a$, the last two terms on the RHS of \eqref{eq:LTFT4}
 appear as we push up the line of integration in the last integral on the RHS of
 \eqref{eq:LTFT3}.  Now $x\ge a$, hence, we deform this line down, and obtain \eqref{eq:LTFT4} with the following modification:
 the last three terms on the RHS are replaced with the sum
 \bbe\label{eq:LTFT4ad}
 \frac{1}{2\pi}\int_{\cL^{--}}d\xi\,\frac{e^{i(x-a)\xi}\hG_0(\xi)}{q+\psi(\xi)} 
 -\frac{ie^{\bepq(x-a)}\hG_0(-i\bepq)}{\psi'(-i\bepq)}
\ee

 \subsubsection{Approximate formulas in the case of a wide strip of analyticity} If $-\lm,\lp$ are large,  $q$ is not large, so that
 $-i\be^\pm_q$ are not large in absolute value, and $x-h$ and $x-a$ are nor very small
 in absolute value, then we can choose the deformations $\cL^{++}$ and $\cL^{--}$
 so that the integrals over $\cL^{++}$ and $\cL^{--}$ are small, and can be omitted (we do not copy-paste the result in order to save space); a small error results.
 The resulting formulas are given by simple analytical expressions in terms of $\be^\pm_q$ and
 $\phimq(-i\bepq), \phipq(-i\bemq), \psi'(-i\be^\pm_q)$ (see
 \cite{paired}). E.g., if $h<x<a$, then
 \beqa\label{asympGSt}
 &&\tV_1(G;q,x)\\\nonumber&=&-\frac{e^{\bemq(x-h)}}{\phipq(-i\bemq)\psi'(-i\bemq)}
\frac{e^{-\bepq(a-h)}\hG_0(-i\bepq)}{\phimq(-i\bepq)\psi'(-i\bepq)(\bepq-\bemq)}
-\frac{ie^{\bepq(x-a)}\hG_0(-i\bepq)}{\psi'(-i\bepq)}
\eqa

\subsection{Case of $G$ the value function of an embedded put option, with the strike below the barrier}\label{strikebelowbarrier}
Let $a<h$; then $x-a>0$. We deform both contours on the RHS of \eqref{eq:LTFT2} up. First, 
 we deform the outer contour into $\cL^+=\cL(\om_1,\om,b)$,
 and then the inner one, into $\cL^+_2=\cL(\om_{12},\om_2,b_2)$. We choose the parameters of the latter$(\om_{12},\om_2,b_2)$
  so that $\cL^+_2$ is strictly below $\cL^+_2$, and, furthermore,
 so that the angles between the asymptotes of the two contours are positive: $0<\om_2<\om$. Instead of 
 \eqref{eq:LTFT3}, we have 
 \beqa\label{eq:LTFTaleh}
&&\tV_1(G;q,x)\\\nonumber
&=&\frac{1}{2\pi q}\int_{\cL^+}d\xi\,e^{i\xi(x-h)}
\phimq(\xi)\frac{1}{2\pi}\int_{\cL^+_2}d\eta\,\frac{e^{-i\eta(a-h)}\phipq(\eta)\hG_0(\eta)}{i(\xi-\eta)}+\frac{1}{2\pi}\int_{\cL^+}d\xi\,\frac{e^{i(x-a)\xi}\hG_0(\xi)}{q+\psi(\xi)}.
 \eqa
The crossing of possible poles can be done similarly to the case $a\ge h$.

\subsection{Numerical examples}\label{examples:no-touch}
\subsubsection{Pricing no-touch options and first touch digitals, down case (Table~\ref{TableNT})} 
Let $V_{nt}(T,x)$ and $V_{ft}(T,x)$ be the prices of the no-touch and first touch digital option, respectively.
Then, for any $\om\in (\sg_-(q),0)$ and $\om_+\in (0,\sg_+(q))$, $\tV_{nt}(G;q,x)$  is given by 
\beqa\nonumber
\tV_{nt}(G;q,x)&=&\frac{e^{-rT}}{2\pi q} \int_{\Im\xi=\om}e^{i(x-h)\xi}\frac{\phimq(\xi)}{-i\xi}d\xi
\\\label{Vntq}&=&
\frac{e^{-rT}}{q}+\frac{e^{-rT}}{2\pi q} \int_{\Im\xi=\om_+}e^{i(x-h)\xi}\frac{\phimq(\xi)}{-i\xi}d\xi.
\eqa
A similar formula for  $V_{ft}(T,x)$ (see \cite{paired}) is
\bbe\label{Vftq}
\tV_{ft}(G;q,x)=-\frac{e^{-rT}}{2\pi (q-r)}\int_{\Im\xi=\om_+}e^{i(x-h)\xi}\frac{\phimq(\xi)}{-i\xi}d\xi,
\ee
where $\om_+\in (0, \sg_+(q-r))$. The integrals on the RHSs of \eqref{Vntq} and \eqref{Vftq} differ by scalar factors
in front of the integrals, hence, both options can be evaluated simultataneously. If $-\bemq$ exists and $\lp+\bemq>0$ is not small,
it may be advantageous  to move the line of integration up and cross the simple pole at $-i\bemq$.

In both cases,  the line of integration is deformed into a contour $\cL_+:=\cL(\om_{1+}, \om_+, b_+)$ in the upper half-plane,
with wings pointing up. To evaluate $\phimq(\xi), \xi\in \cL_+$, we use \eqref{phip1} to calculate $\phipq(\xi)$
(the line of integration is deformed into a contour in the lower half-plane with the wings pointing down), and then
\eqref{contphimqxi}. We calculate prices of no-touch options and first touch digitals in the same KoBoL model as in Table~\ref{StVar1_1.2}. 
Our numerical experiments
show that, for integrals for the Wiener-Hopf factor and the Fourier inversion, the relative error of the order of
E-8  and better (for prices) can be satisfied using the general recommendations of 
\cite{SINHregular}
 for the error tolerance $\eps=10^{-10}$; if 
$\eps=10^{-15}$ is used, the pricing errors decrease insignificantly. Hence, in this example, 
the errors of GWR method are of the order of $10^{-10}-10^{-8}$. The CPU time (for 7 spots) is less than 10 msec for the
error tolerance $\eps=10^{-6}$, and less than 25 msec for the
error tolerance $\eps=10^{-10}$.
The bulk of the CPU time is spent on the calculation of the Wiener-Hopf factor at the points of the chosen grids;
this calculation can be easily paralellized, and the total CPU time significantly decreased. The CPU
time can be decreased further
using more efficient representations for the Wiener-Hopf factors. 

\subsubsection{Pricing down-and-out call option (Tables~\ref{TableDOCall}-\ref{TableDOCallClose})} We consider the same model, and the call options
with the strikes $K=1.04,1.1$ In both cases, the barrier $H=1$, and $T=0.1, 0.5$. In \eqref{eq:LTFT4call},
we make the corresponding changes of variables and apply the simplified trapezoid rule.
In Table~\ref{TableDOCall}, we see that even in cases when the spot is close to the barrier, the error tolerance of the order of
E-05 and smaller can be 
satisfied at a moderate CPU-time cost: about 0.1 msec for the calculation at 7 spots. 

In Table~\ref{TableDOCallClose}, we show the prices $V_{call}(H,K; T, S)$ for $S$ very close to the barrier $H$ ($x:=\ln(S/H)\in [0.0005, 0.0035] $
and the prices divided by $x^{\nu/2}$. We see that the ratio is approximately constant which agrees with the
asymptotics of the price of the down-and-out option 
\[
V_{call}(H,K; T, S)\sim c(K,T)x^{\nu/2}, \quad x\downarrow 0.
\]

\section{Pricing first-touch options and expectations of first-touch products}\label{first-touch dual}
Conditions  on processes and payoff functions are the same as in Section \ref{condXGNt}.
We consider the down-and-in case; $H=e^h$ is the barrier, $T$ is the maturity date,
and $G(\tau, X_\tau)$ is the payoff at the first entrance time $\tau$ into $(-\infty,h]$. 
We need to calculate $V(G;T;x)=\bE^x\left[G(\tau,X_\tau)\bfo_{\tau<T}\right]$. 

\subsection{The simplest case}\label{first-touch-simplest}
Let $G(\tau,X_\tau)=e^{-r\tau+\be X_\tau}$, where $\be\in [0,-\lm)$. For $\be=0$, $V(G;T;x)$ is
the time-0 price of the  first-touch digital, for $\be=1$ of the stock which is due at time $\tau$ if $\tau<T$.
The case of the down-and-in forwards obtains by linearity. If we need to calculate the expectation of the product of two payoffs of this form,
 $\be$ may assume value $\be=2$; in this case, we need to replace $r$ with $2r$. 

Since $\be\in [0, -\lm)$, $\phimq(-i\be)=\bE\left[e^{\be\barX_{T_q}}\right]$ is 
finite for any $q$ in the right half-plane. Furthermore, for any $\sg>0$, there exists $\om>0$ such that, for any $q$ in the half-plane $\Re q\ge \sg$,
$\phimq(\xi)$ admits analytic continuation to the half-plane $\{\Im\xi<\om\}$. For such $\sg$ and $\om$, the
general formula for the down-and-out options derived in \cite{barrier-RLPE,single} (see also \cite{paraLaplace,paired})
is applicable: for $x>h$,
\bbe\label{FTV}
V(G;T;x)=\frac{e^{\be h-rT}}{2\pi i}\int_{\Re q=\sg}dq\, e^{qT}\phimq(-i\be)^{-1}\frac{1}{2\pi}\int_{\Im\xi=\om}e^{i(x-h)\xi}\frac{\phimq(\xi)}{\be-i\xi}d\xi.
\ee
Assuming that we use the Gaver-Stehfest method to evaluate the Bromwich integral (the outer integral on the RHS of \eqref{FTV}), 
we need to calculate the inner integral for $q>0$.
As in the case of no-touch options, we deform the contour upward and cross the simple pole at $-i\bemq$ if $\bemq$ exists
and $-\bemq$ is not close to $\lp$:
\[
\frac{1}{2\pi}\int_{\Im\xi=\om}e^{i(x-h)\xi}\frac{\phimq(\xi)}{\be-i\xi}d\xi=\frac{1}{2\pi}\int_{\cL^{++}}e^{i(x-h)\xi}\frac{\phimq(\xi)}{\be-i\xi}d\xi-\frac{qe^{\bemq(x-h)}}{\phipq(-i\bemq)\psi'(-i\bemq)}.
\]
If $x-h>0$ is not very small, $q$ not too large and $\lp$ and $-\lm$ are large, then a good approximation can be obtained using
the last term on the RHS above and omitting the integral over $\cL^{++}$.

\subsection{General case}\label{call_first_touch}
Even in a simple case of the down-and-in option which, at time $\tau$, becomes the call option with strike $K$ and time $T-\tau$
to maturity,   $G(\tau, X_\tau)=e^{-r\tau}V_{\rm call}(K;T-\tau,X_\tau)$, the payoff at time $\tau$, is  more involved than
the payoff functions
in  \cite{barrier-RLPE,single}  because the pricing formulas in  \cite{barrier-RLPE,single}  were derived
under assumption that the dependence  of $G(\tau, X_\tau)$ on $\tau$ is of the simplest form $e^{-r\tau}G_1(X_\tau)$.
If we calculate the expectation of the product of a discounted down-and-in call option $e^{-r\tau}V_{\rm call}(K;T-\tau,X_\tau)$
and $e^{-r\tau+\be X_\tau}$, then
$G(\tau, X_\tau)=e^{-2r\tau+\be X_\tau}V_{\rm call}(K;T-\tau,X_\tau)$. In the case of the expectation of the product
of discounted prices of European options, the structure of $G(\tau, X_\tau)$ is even more involved.

First, we 
calculate the expectation $V(G;T;x)=\bE^x[\bfo_{\tau<T}G(\tau, X_\tau)]$ in the general form, and then
make further steps for the special cases mentioned above.
We repeat the main steps of the initial proof in \cite{barrier-RLPE} omitting the technical details of the justification
of the application of Fubini's theorem; they are the same as in \cite{barrier-RLPE,BIL}.  Let $L=-\psi(D)$
be the infinitesimal generator of $X$. Recall that the pseudo-differential operator $\psi(D)$ with the symbol
$\psi$ is the composition of the Fourier transform, multiplication operator by the function $\psi$, and inverse Fourier transform.
If $\hu$ is well-defined and analytic in a strip, $\psi$ admits analytic continuation to the same strip,
 and the product $\psi(\xi)\hu(\xi)$ decays sufficiently fast as $\xi\to\infty$ remaining in the strip, an equivalent definition is 
$\widehat{(\psi(D)u)}(\xi)=\psi(\xi) \hu(\xi)$, for $\xi$ in the strip. For details, see \cite{eskin,NG-MBS,barrier-RLPE}.

The function $V_1(G;t,x):=V(G;T-t;x)$
 is the bounded sufficiently regular solution of the boundary value problem
\beqa\label{BVftdown}
(\dd_t+\psi(D))V_1(G;t,x)&=&0,\quad t>0, x>h;\\\label{BVdownbc1}
V_1(G;t,x)&=& G(t,x), \quad t>0, x\le h;\\\label{BVdownbc2}
V_1(G;0,x)&=&0,\quad x\in\bR.
\eqa
Making the Laplace transform w.r.t. $t$, we obtain that if $\sg>0$ is sufficiently large, then, for all $q$ in the half-plane
$\{\Re\,q\ge \sg\}$, $\tV_1(G;q,x)$ solves the boundary problem
\beqa\label{BVftdownL}
(q+\psi(D))\tV_1(G;q,x)&=&0,\quad  x>h;\\\label{BVdownbc1L}
\tV_1(G;q,x)&=& \tG(q,x), \quad x\le h,
\eqa
in the class of sufficiently regular bounded functions. If $\{\tV_1(G;q,\cdot)\}_{\Re q\ge \sg}$ is
the (sufficiently regular) solution of the family of boundary problems \eqref{BVftdownL}-\eqref{BVdownbc1L}
 on $\bR$, then $V_1(G;t,x)$ can be found using the Laplace inversion
formula. Finally, $V(G;t,x)=V_1(G;T-t,x)$.

The family of problems \eqref{BVftdownL}-\eqref{BVdownbc1L}
 is similar to the one in \cite{barrier-RLPE}; the only difference is a more involved dependence of $\tG$ 
on $q$ (in \cite{barrier-RLPE}, $\tG(q,x)=G(x)/(q-r)$).
Hence, we can apply the Wiener-Hopf factorization technique as in \cite{barrier-RLPE} and obtain
\bbe\label{FTsolL}
\tV_1(G;q,\cdot)=\cEmq\bfo_{(-\infty,h]}(\cEmq)^{-1}\tG(q,\cdot).
\ee
Let $\om\in (0,\lp)$. If $\sg>0$ is sufficiently large, then, for $q$ in the half-plane $\{\Re q\ge \sg\}$, $\phimq(\xi)$ is analytic in
the half-plane $\{\Im\xi\le \om\}$. For $\xi$ in this half-plane,
the double Laplace-Fourier transform of $V_1(G;t,x)$ w.r.t. $(q,x)$ is given by
\bbe\label{FTsolLF}
\hV_1(G;q,\xi)=\phimq(\xi)\widehat{\bfo_{(-\infty,h]}W(q,\cdot)}(\xi),
\ee
where $\widehat{\bfo_{(-\infty,h]}W(q,\cdot)}(\xi)$ is the Fourier transform of $\bfo_{(-\infty,h]}W(q,\cdot)$,
 $W(q,\cdot)=(\cEmq)^{-1}\tG(q,\cdot)$.
We take $\om'<-\be$, and, similarly to the proof of Lemma \ref{l:Pip1},  calculate
\beqast
\widehat{\bfo_{(-\infty,h]}W(q,\cdot)}(\xi)&=&
\int_{-\infty}^h dx\, e^{-ix\xi}(2\pi)^{-1}\int_{\Im\eta=\om'}d\eta\, e^{ix\eta}\phimq(\eta)^{-1}\hG(q,\eta) \\
&=& \int_{\Im\eta=\om'}\phimq(\eta)^{-1}\hG(q,\eta) (2\pi)^{-1}\int_{-\infty}^he^{i(\eta-\xi)x}dx d\eta.
\eqast
The result is: for $(q,\xi)$ s.t. $\Re q\ge \sg$ and $\Im\xi>\om'$, 
\bbe\label{FTsolLFW}
\widehat{\bfo_{(-\infty,h]}W(q,\cdot)}(\xi)=\frac{1}{2\pi i}\int_{\Im\eta=\om'}\frac{e^{ih(\eta-\xi)}\phimq(\eta)^{-1}\hG(q,\eta)}{\eta-\xi}d\eta.
\ee
Applying the inverse Fourier transform, we obtain
\beqa\label{tV1}
\tV_1(G;q,x)=\frac{1}{2\pi}\int_{\Im\xi=\om}d\xi\,e^{i(x-h)\xi}\phimq(\xi)
\frac{1}{2\pi i}\int_{\Im\eta=\om'}\frac{e^{ih\eta}\hG(q,\eta)}{\phimq(\eta)(\eta-\xi)}d\eta.
\eqa
Since $x-h>0$, we deform the outer contour upward, the new contour being of the type $\cL^+$
(meaning: of the form  $\cL(\om_1,\om,b)$, where $\om>0$):
\beqa\label{tV12}
\tV_1(G;q,x)=\frac{1}{2\pi}\int_{\cL^+}d\xi\,e^{i(x-h)\xi}\phimq(\xi)
\frac{1}{2\pi i}\int_{\Im\eta=\om'}\frac{e^{ih\eta}\hG(q,\eta)}{\phimq(\eta)(\eta-\xi)}d\eta.
\eqa
If $-i\bemq$ exists and $-\bemq$ is not close to $\lp$, we push the contour up, cross the simple pole at $\xi=-i\bemq$,
and obtain
\beqa\label{tV13}
\tV_1(G;q,x)&=&\frac{1}{2\pi}\int_{\cL^{++}}d\xi\,e^{i(x-h)\xi}\phimq(\xi)
\frac{1}{2\pi i}\int_{\Im\eta=\om'}\frac{e^{ih\eta}\hG(q,\eta)}{\phimq(\eta)(\eta-\xi)}d\eta\\\label{tV13pr}
&&
+\frac{qe^{\bemq(x-h)}}{\phipq(-i\bemq)\psi'(-i\phimq)}
\frac{1}{2\pi }\int_{\Im\eta=\om'}\frac{e^{ih\eta}\hG(q,\eta)}{\phimq(\eta)(\eta+i\bemq)}.
\eqa
An admissible type of the deformation  of the inner integral on the RHS of \eqref{tV13} and the integral on the RHS of
\eqref{tV13pr} depends on the properties of $e^{ih\eta}\hG(q,\eta)$. If $G$ is the price of a vanilla option
or the product of prices of two vanilla options, then an admissible deformation is determined by the relative position of the barrier
and the strikes of the options involved. Hence, we are forced to consider several cases.
\subsection{Down-and-in call and put options}\label{oneEuro}
\subsubsection{Call option, the strike is at or above the barrier} \label{oneEuroabovecall}
Consider first the down-and-in call option.
Since the strike is at or above the barrier, $a:=\ln K\ge h=\ln H$.
 We must have $\lm<-1$. Let $\om'\in (\lm,-1)$ and $\sg>0$ be such that $\Re(q+\psi(\xi))>0$ if $\Re q\ge \sg$ and $\Im\xi\in [\om',-1)$.
 Then 
the double Laplace-Fourier transform w.r.t. $(t,x)$ of the discounted price $G(t,x)=e^{-r(T-t)}V_{\mathrm{call}}(T,K;T-t, x)$ 
 is well-defined in the region $\{(q,\eta)\ |\ \Re q\ge \sg, \Im\eta\in [\om',-1)\}$, and it is given by
\[
\hG(q,\eta)=e^{-rT}\int_0^{+\infty}e^{-qt}\frac{K^{1-i\eta}e^{-t\psi(\eta)}}{i\eta(i\eta-1)}dt.
\]
Integrating, we obtain $\hG(q,\xi)=e^{-ia\xi}\hG_0(q,\xi)$, where $a=\ln K$, and 
\bbe\label{hGLFcall}
\hG_0(q,\eta)=\frac{Ke^{-rT}}{(q+\psi(\eta))i\eta(i\eta-1)}=-\frac{Ke^{-rT}}{(q+\psi(\eta))\eta(\eta+i)}.
\ee
Under condition $(X)$, for each $q>0$, $\hG_0(q,\cdot)$ is a meromorphic function 
in the complex plane with two cuts $i(-\infty,\lm]$, $i[\lp,+\infty)$ and simple  poles at $0$, $-i$
(and $-i\be^\pm_q$, if the latter exist).
Since $h<a$, we may deform the inner contour of integration on the RHS of \eqref{tV13} and the contour
of integration on the RHS of \eqref{tV13pr} down. On the strength of \eqref{whfanal}, 
\[
\frac{\hG_0(q,\eta)}{\phimq(\eta)}=-\frac{Ke^{-rT}}{\phimq(\eta)(q+\psi(\eta))\eta(\eta+i)}=-\frac{Ke^{-rT}\phipq(\eta)}{q\eta(\eta+i)}.
\]
Hence,   \eqref{tV12} becomes 
\beqa\label{tV121}
\tV_1(G;q,x)=-\frac{Ke^{-rT}}{2\pi q}\int_{\cL^{+}}d\xi\,e^{i(x-h)\xi}\phimq(\xi)
\frac{1}{2\pi i}\int_{\Im\eta=\om'}\frac{e^{i(h-a)\eta}\phipq(\eta)}{\eta(\eta+i)(\eta-\xi)}d\eta,
\eqa
and \eqref{tV13}-\eqref{tV13pr} can be rewritten as 
\beqa\label{tV13a}
\tV_1(G;q,x)
&=&-\frac{Ke^{-rT}}{2\pi q}\int_{\cL^{++}}d\xi\,e^{i(x-h)\xi}\phimq(\xi)
\frac{1}{2\pi i}\int_{\Im\eta=\om'}\frac{e^{i(h-a)\eta}\phipq(\eta)}{\eta(\eta+i)(\eta-\xi)}d\eta\\\nonumber
&&
-\frac{e^{\bemq(x-h)}}{\phipq(-i\bemq)\psi'(-i\phimq)}
\frac{Ke^{-rT}}{2\pi }\int_{\Im\eta=\om'}\frac{e^{i(h-a)\eta}\phipq(\eta)}{\eta(\eta+i)(\eta+i\bemq)}d\eta.
\eqa
If $-i\bepq$ exists, we can cross the pole at $-i\bepq$, and obtain
\beqa\label{tV14}
\tV_1(G;q,x)&=&\frac{Ke^{-rT}}{2\pi}\int_{\cL^{++}}d\xi\,e^{i(x-h)\xi}\phimq(\xi)
\left(-\frac{1}{2\pi i}\int_{\cL^{--}}\frac{e^{i(h-a)\eta}\phipq(\eta)}{q\eta(\eta+i)(\eta-\xi)}d\eta\right.\\\nonumber
&&\left. \hskip3cm
+\frac{e^{\bepq(h-a)}}{\phimq(-i\bepq)\psi'(-i\bepq)\bepq(\bepq-1)(i\bepq+\xi)}\right)
\\\nonumber
&&
-\frac{e^{\bemq(x-h)}}{\phipq(-i\bemq)\psi'(-i\phimq)}
\frac{Ke^{-rT}}{2\pi }\int_{\cL^{--}}\frac{e^{i(h-a)\eta}\phipq(\eta)}{q\eta(\eta+i)(\eta+i\bepq)}d\eta\\\nonumber
&&+\frac{e^{\bemq(x-h)}}{\phipq(-i\bemq)\psi'(-i\bemq)}
\frac{Ke^{-rT}e^{\bepq(h-a)}}{\phimq(-i\bepq)\psi'(-i\bepq)\bepq(\bepq-1)(\bepq-\bemq)}
\eqa
\subsubsection{Put option, the strike is at or above the barrier} \label{oneEuroaboveput}
In the case of the put option, we have \eqref{tV13a} with $\om'\in (0,\om)$. 
Hence, deforming the contours of integrations w.r.t. $\eta$ down into $\cL^{--}$, we cross not only a simple pole at
$-i\bepq$ but simple poles $0, -i$ as well.  Hence, 
to the RHS of \eqref{tV14}, we need to add
\beqa\label{tV14putadd}
\tV_{1,add}(G;q,x)&=&-\frac{Ke^{-rT}}{2\pi}\int_{\cL^{++}}d\xi\,e^{i(x-h)\xi}\phimq(\xi)
\left(\frac{1}{iq\xi}+\frac{e^{h-a}\phipq(-i)}{q(1-i\xi)}\right)
\\\nonumber
&&-\frac{Ke^{-rT}qe^{\bemq(x-h)}}{\phipq(-i\bemq)\psi'(-i\bemq)}\left(\frac{1}{q\bemq}
+\frac{e^{h-a}\phipq(-i)}{q(1-\bemq)}\right)
\eqa

\subsubsection{Call option, the strike is below the barrier}\label{Eurocallstrikebelow}
We start with the contour $\{\Im\eta=\om'\}$, where $\om'\in (-\bepq,-1)$. Since $a<h$, we need to deform  
 the inner contour of integration on the RHS of \eqref{tV13} and the contour
of integration in \eqref{tV13pr} up. Since $\phipq(\eta)$ is analytic and bounded in the half-plane
$\{\Im\eta\ge \om'\}$, the inner integrand on the RHS of \eqref{tV13a} has
three simple poles at $\eta=-i,0,\xi$, and the 1D integrand on the RHS of \eqref{tV13a} has three simple poles at $-i,0,-i\bemq$;
after the poles are crossed, we can move the line of integration up to infinity, and show that the integrals
after crossing are zeroes. Hence, we obtain
\beqa\label{tV15}
\tV_1(G;q,x)&=&-\frac{Ke^{-rT}}{2\pi}\int_{\cL^{++}}\frac{e^{i(x-a)\xi}d\xi}{(q+\psi(\xi))
\xi(\xi+i)}
\\\nonumber&&+\frac{Ke^{-rT}e^{\bemq(x-a)}}{\psi'(-i\phimq)
q\bemq(\bemq-1)}-\tV_{1,add}(G;q,x).
\eqa

\begin{rem}{\rm If $-i\be^\pm_q$ either do not exist or not crossed, in all cases above the contours
of the types $\cL^\pm$ are used, and, in all formulas above, all the terms that contain $\be^\pm_q$ should
be omitted.
}\end{rem}
\subsubsection{Put option, the strike is below the barrier}\label{Europutstrikebelow}
Evidently, the price is the same as the one of the European put.

\subsubsection{The case of the product of discounted a European call or put option  
and $e^{X_\tau}$}\label{expXprodEuro}
 In the case of the call option, we have $G(t,x)=e^{-2rT}e^{x+2rt} V_{\mathrm{call}}(T,K;T-t,x)$. Assuming that 
$\lm<-2$, we take $\om'\in (\lm+1,-1)$, and write
\beqast
G(t,x)&=&e^{-2rT}e^{x+rt}\frac{1}{2\pi}\int_{\Im\xi=\om}e^{ix\eta-t\psi(\eta)}\frac{K^{1-i\eta}}{i\eta(i\eta-1)}d\eta\\
&=&e^{-2rT}e^{rt}\frac{1}{2\pi}\int_{\Im\xi=\om}e^{ix(\eta-i)-t\psi(\eta)}\frac{K^{1-i\eta}}{i\eta(i\eta-1)}d\eta\\
&=&e^{-2rT}e^{rt}\frac{1}{2\pi}\int_{\Im\xi=\om-1}e^{ix \eta-t\psi(\eta+i)}\frac{K^{2-i\eta}}{(i\eta-2)(i\eta-1)}d\eta.
\eqast
Let $\lm<-2$ and choose $\om\in (\lm,-2)$ and $\sg>0$ so that $\Re(q-r+\psi(\xi))>0$ if $\Re q\ge \sg$ and $\Im\xi\in [\om,-2)$.
 Then 
the double Laplace-Fourier transform of  $G(t,x)$
w.r.t. $(t,x)$ is well-defined in the region $\{(q,\eta)\ |\ \Re q\ge \sg, \Im\eta\in [\om,-2)\}$, and it is given by
\[
\hG(q,\eta)=e^{-2rT}\int_0^{+\infty}e^{-(q-r)t}\frac{K^{2-i\eta}e^{-t\psi(\eta+i)}}{(i\eta-2)(i\eta-1)}dt.
\]
Thus, $\hG(q,\eta)=e^{-ia\eta}\hG_0(q,\eta)$, where $a=\ln K$ and
\[
\hG_0(q,\eta)=\frac{K^2e^{-2rT}}{(q-r+\psi(\eta+i))(i\eta-2)(i\eta-1)}.
\]
If the strike is at or above the barrier,
the remaining steps  are essentially the same as in Subsections \ref{oneEuroabovecall}-\ref{oneEuroaboveput}. The poles of the integrands are at $-2i$ and $-i$
rather than $-i$ and $0$, and the contour $\cL^-$ must be below $-2i$ rather than $-i$.
Also, due to a different factor $q-r+\psi(\eta+i)$ in the denominator, we have an additional restriction on $\sg$ and $\cL^{--}$.
Furthermore, instead of the equality $\phimq(\eta)(q+\psi(\eta))=q/\phipq(\eta)$ we have a more complicated equality
\[
\phimq(\eta)(q-r+\psi(\eta))=q(q-r+\psi(\eta))/(\phipq(\eta)(q+\psi(\eta)),
\]
hence, some of the poles and the corresponding residue terms are different.

In the case of the put, the calculations are the same only $\om'\in (0,\om)$ must be chosen at the first step,
and in the process of the deformation of the contours of integration w.r.t. $\eta$ down, tho poles at $\eta=-i,-2i$
are crossed rather than at $\eta=0,-i$.

 If the strike is below the barrier,
the above argument is modified similarly to the modification in Subsection \ref{Eurocallstrikebelow}.

 \subsection{The case of the product of two European call or put options}\label{twoEurcall}
 \subsubsection{General formulas}\label{twocallprelim} 
 {\sc In the case of calls}, we have 
 \[
 G(t,x)=e^{-2rT}e^{2rt} V_{\mathrm{call}}(T,K_1;T-t,x)V_{\mathrm{call}}(T,K_2;T-t,x),\] where $K_j\ge H$.
 Set $a_j=\ln K_j$.
 Assuming that 
$\lm<-2$, we take $\om_1,\om_2\in (\lm+1,-1)$, and calculate, first, the Laplace transform of
\beqast
G_1(t,x)&:=&e^{2rt} V_{\mathrm{call}}(T,K_1;T-t,x)V_{\mathrm{call}}(T,K_2;T-t,x)\\
&=&\frac{1}{(2\pi)^2}\int_{\Im\eta_1=\om_1}d\eta_1 e^{ix\eta_1-t\psi(\eta_1)}\frac{K_1^{1-i\eta_1}}{i\eta_1(i\eta_1-1)}
\int_{\Im\eta_2=\om_2}d\eta_2 e^{ix\eta_2-t\psi(\eta_2)}\frac{K_2^{1-i\eta_1}}{i\eta_2(i\eta_2-1)}.
\eqast 
 Using the Fubini theorem, we have
\beqast
\tilde G_1(q,x)&=&\frac{K_1K_2}{(2\pi)^2}\int_{\Im\eta_1=\om_1}\int_{\Im\eta_2=\om_2}
 \frac{e^{i(\eta_1+\eta_2)x}e^{-ia_1\eta_1-ia_2\eta_2}d\eta_1\, d\eta_2}
{(q+\psi(\eta_1)+\psi(\eta_2))i\eta_1(i\eta_1-1)i\eta_2(i\eta_2-1)}.
\eqast
Since $(\cEmq)^{-1}e^{i(\eta_1+\eta_2)x}=\phimq(\eta_1+\eta_2)^{-1}e^{i(\eta_1+\eta_2)x}$, 
we derive
\beqast
(\cEmq)^{-1}\tilde G_1(q,x)&=&\frac{K_1K_2}{(2\pi)^2}\int_{\Im\eta_1=\om_1}\int_{\Im\eta_2=\om_2}
 \frac{e^{i(\eta_1+\eta_2)x}e^{-ia_1\eta_1-ia_2\eta_2}d\eta_1\, d\eta_2}
{\phimq(\eta_1+\eta_2)(q+\psi(\eta_1)+\psi(\eta_2))\eta_1(\eta_1+i)\eta_2(\eta_2+i)}.
\eqast
Next, the Fourier transform of $\bfo_{(-\infty,h]}(x)e^{i\be x}$ is well-defined in the half-plane $\Im\xi>-\be$
by $e^{(\be-i\xi)}/(\be-i\xi)$, therefore, taking $\om\in (0,-\bemq)$, we can represent
\[
\tV_1(G;q,x):=\cEmq \bfo_{(-\infty,h]}(\cEmq)^{-1}\tilde G_1(q,x)
\]
 in the form
\[
\tV_1(G;q,x)=
\frac{K_1K_2}{(2\pi)^3}\int_{\Im\xi=\om}d\xi\,e^{i(x-h)\xi}\phimq(\xi)\int_{\Im\eta=\om_1}
\int_{\Im\eta=\om_2}\frac{e^{-i(a_1-h)\eta_1-i(a_2-h)\eta_2}d\eta_1d\eta_2}{\Phi(q,\xi;\eta_1,\eta_2)\eta_1(\eta_1+i)\eta_2(\eta_2+i)},
\]
where
$
\Phi(q,\xi;\eta_1,\eta_2)=\phimq(\eta_1+\eta_2)i(\eta_1+\eta_2-\xi)(q+\psi(\eta_1)+\psi(\eta_2)).
$

Since $x-h>0$, we deform the outer contour into a contour of the form $\cL^{++}$, crossing the pole at $\xi=-i\bemq$ if it exists:
\beqa\label{FTsolLFCC2b}
&&\tV_1(G;q,x)\\\nonumber&=&
\frac{K_1K_2}{(2\pi)^3}\int_{\cL^{++}}d\xi\,e^{i(x-h)\xi}\phimq(\xi)\int_{\Im\eta_1=\om_1}
\int_{\Im\eta_2=\om_2}
\frac{e^{-i(a_1-h)\eta_1-i(a_2-h)\eta_2}d\eta_1d\eta_2}{\Phi(q,\xi;\eta_1,\eta_2)\eta_1(\eta_1+i)\eta_2(\eta_2+i)},
\\\nonumber&&
+\frac{K_1K_2}{(2\pi)^2}\frac{iqe^{\bemq(x-h)}}{\phipq(-i\bemq)\psi'(-i\bemq)}\int_{\Im\eta_1=\om_1}
\int_{\Im\eta_2=\om_2}
\frac{e^{-i(a_1-h)\eta_1-i(a_2-h)\eta_2}d\eta_1d\eta_2}{\Phi(q,-i\bemq;\eta_1,\eta_2)\eta_1(\eta_1+i)\eta_2(\eta_2+i)}.
\eqa
\vskip0.1cm\noindent
{\sc If the first option is a call and the other is a put},
then $\om_1\in (-\bepq,-1)$ and $\om_2\in (0,-\bemq)$ should satisfy $\om_{1+}-b_+\sin(\om_+)-(\om_1+\om_2)>0$,
where $\om_{1+},\om_+, b_+$ are the parameters that define the curve $\cL_{++}$ (recall that the lowest point of the
latter is 
$i(\om_{1+}-b_+\sin(\om_+))$).
\vskip0.1cm\noindent
{\sc In the case of two put options}, the calculations are the same but we take $\om_1,\om_2\in (0,
-\bemq)$, and \eqref{FTsolLFCC2b} can be justified if $\om_{1+}-b_+\sin(\om_+)-(\om_1+\om_2)>0$.
\subsubsection{Reductions}\label{reductions} All three cases are reducible one to another. We start
with the product of two call options. 
We move the lines of integration $\{\Im\eta_j=\om_j\}$, $\om_j\in (-\bepq,-1)$, $j=1,2$, up, and, on crossing the
poles, apply the residue theorem. Let $\om'_1, \om'_2>0$,
$\om'_1+\om'_2<\Im\xi$. Then, moving the innermost line of integration up, we obtain
 \beqa\label{call_put}
&&\frac{1}{(2\pi)^2} \int_{\Im\eta_1=\om_1}
\int_{\Im\eta_2=\om_2}
\frac{e^{-i(a_1-h)\eta_1-i(a_2-h)\eta_2}d\eta_1d\eta_2}{\Phi(q,\xi;\eta_1,\eta_2)\eta_1(\eta_1+i)\eta_2(\eta_2+i)}\\\nonumber
&=&
\frac{1}{(2\pi)^2}\int_{\Im\eta_1=\om_1}
\int_{\Im\eta_2=\om'_2}
\frac{e^{-i(a_1-h)\eta_1-i(a_2-h)\eta_2}d\eta_1d\eta_2}{\Phi(q,\xi;\eta_1,\eta_2)\eta_1(\eta_1+i)\eta_2(\eta_2+i)}\\\nonumber
&&+\frac{1}{2\pi} \int_{\Im\eta_1=\om_1}\frac{e^{-i(a_1-h)\eta_1}d\eta_1}{\Phi(q,\xi;\eta_1,0)\eta_1(\eta_1+i)}
-\frac{e^{h-a_2}}{2\pi} \int_{\Im\eta_1=\om_1}\frac{e^{-i(a_1-h)\eta_1}d\eta_1}{\Phi(q,\xi;\eta_1,-i)\eta_1(\eta_1+i)}
\eqa
The first term on the RHS is the integral for the case of a call and put, the remaining two terms can be calculated similarly
to the integrals in formulas for put and call options. If $a_1-h\le 0$, the line $\{\Im \eta_1=\om_1\}$  is deformed into 
a contour of the form $\cL^-$, and if $a_1<h$, then of the form $\cL^+$ or $\cL^{++}$. The new contour must be  strictly below
the contour of integration $\cL^{++}$ w.r.t. $\xi$, and the angles between the asymptotes of the two contour must be non-zero.

Pushing the lines of integration w.r.t. $\eta_1$ up, we obtain a repeated integral which arises when the product of puts
is considered (plus 4 one-dimensional integrals):
\beqa\label{put_put}
&&\frac{1}{(2\pi)^2} \int_{\Im\eta_1=\om_1}
\int_{\Im\eta_2=\om_2}
\frac{e^{-i(a_1-h)\eta_1-i(a_2-h)\eta_2}d\eta_1d\eta_2}{\Phi(q,\xi;\eta_1,\eta_2)\eta_1(\eta_1+i)\eta_2(\eta_2+i)}\\\nonumber
&=&
\frac{1}{(2\pi)^2}\int_{\Im\eta_1=\om'_1}
\int_{\Im\eta_2=\om'_2}
\frac{e^{-i(a_1-h)\eta_1-i(a_2-h)\eta_2}d\eta_1d\eta_2}{\Phi(q,\xi;\eta_1,\eta_2)\eta_1(\eta_1+i)\eta_2(\eta_2+i)}\\\nonumber
&&+\frac{1}{2\pi} \int_{\Im\eta_1=\om_1}\frac{e^{-i(a_1-h)\eta_1}d\eta_1}{\Phi(q,\xi;\eta_1,0)\eta_1(\eta_1+i)}
-\frac{e^{h-a_2}}{2\pi} \int_{\Im\eta_1=\om_1}\frac{e^{-i(a_1-h)\eta_1}d\eta_1}{\Phi(q,\xi;\eta_1,-i)\eta_1(\eta_1+i)}\\\nonumber
&&+\frac{1}{2\pi} \int_{\Im\eta_2=\om'_2}\frac{e^{-i(a_2-h)\eta_2}d\eta_2}{\Phi(q,\xi;0,\eta_2)\eta_2(\eta_2+i)}
-\frac{e^{h-a_1}}{2\pi} \int_{\Im\eta_2=\om'_2}\frac{e^{-i(a_2-h)\eta_2}d\eta_2}{\Phi(q,\xi;-i,\eta_2)\eta_2(\eta_2+i)}.
\eqa
Below, we consider the calculation of the repeated integrals above, denote them
$J(\xi;\om_1,\om_2)$, for strikes below or above the barrier. Depending on a case,
it will be convenient to calculate  either $J(\xi;\om_1,\om_2)$ or $J(\xi;\om_1,\om'_2)$ or $J(\xi;\om'_1,\om'_2)$,
and, then, if necessary, use \eqref{call_put} and \eqref{put_put}.  

Since the choice of the parameters of the contours of integration are simpler if the contours of integration
w.r.t. $\eta_1, \eta_2$ are in the lower half-plane, we would recommend the reduction to the case of two calls unless
$\bepq-1$ is much smaller than $-\bemq$.

 \subsubsection{Both strikes are at or above the barrier}\label{twocallabove} 
Since $a_j\ge h$, the contours of integration w.r.t. $\eta_1$ and $\eta_2$ are deformed down,
into contours of the form $\cL^-=\cL(\om_{1j},\om_j,b_j)$, where $\om_j<0$ are different from the ones 
in \eqref{FTsolLFCC2b}. In the process of the deformation,
neither of these two contours may cross the cut $i(-\infty,\lm]$. In addition, the factor $q+\psi(\eta_1)+\psi(\eta_2)$ must remain separated from 0 for all $\eta_1,\eta_2$ on the contours. 
A simple general requirement (sufficient for large $\Re q$; if $\Re q>0$ 
is not large, an additional control of the behaviour of the contours in the process of the deformation is needed)
is as follows.
By Assumption $(X)$, $\psi$ is of the form $\psi(\eta)=-i\mu^0\eta+\psi^0(\eta)$, where $\mu\in\bR$ and $\psi^0(\eta)\sim c^0_\infty |\eta|^\nu e^{i\nu\ga}$,
where $\ga=\mathrm{arg}\,\eta\in (-\pi/2,\pi/2)$, and $c^0_\infty>0$. Hence,
we can choose the sinh-deformation with the desired properties in the following cases
\begin{enumerate}[(1)]
\item
if $\nu\in (1,2]$ or $\mu\le 0$ then $0>\om_j>-\min\{\pi/2,\pi/(2\nu)\}$;
\item
if $\nu=1$ and $\mu>0$, then $0>\om_j>-\arctan(c^0_\infty/\mu)$.
\end{enumerate}
If $\nu\in (0,1)$ and $\mu>0$, then only less efficient conformal deformations and changes of variables
are possible (labeled {\em sub-polynomial} in \cite{ConfAccelerationStable} where efficient methods for evaluation of stable
distributions have been developed). Keeping the notation $\cL^-$ for this case as well, we
obtain
\beqa\label{FTsolLFCC3}
&&\tV_1(G;q,x)\\\nonumber&=&
\frac{K_1K_2}{(2\pi)^3}\int_{\cL^{++}}d\xi\,e^{i(x-h)\xi}\phimq(\xi)\int_{\cL^-}
\int_{\cL^-}
\frac{e^{-i(a_1-h)\eta_1-i(a_2-h)\eta_2}d\eta_1d\eta_2}{\Phi(q,\xi;\eta_1,\eta_2)\eta_1(\eta_1+i)\eta_2(\eta_2+i)},
\\\nonumber&&
+\frac{K_1K_2}{(2\pi)^2}\frac{iqe^{\bemq(x-h)}}{\phipq(-i\bemq)\psi'(-i\bemq)}\int_{\cL^-}
\int_{\cL^-}
\frac{e^{-i(a_1-h)\eta_1-i(a_2-h)\eta_2}d\eta_1d\eta_2}{\Phi(q,-i\bemq;\eta_1,\eta_2)\eta_1(\eta_1+i)\eta_2(\eta_2+i)}.
\eqa
If the characteristic exponent is a rational function, then we may push the contour 
$\cL^{++}$ to infinity upwards, and $\cL^-$ downwards. After all the poles are crossed, instead of the triple integral, 
we will obtain a triple sum expressible in terms of $a_1, a_2, h$,
the parameters of the characteristic exponent and its roots and poles.

The same can be done in the beta-model \cite{beta}; the resulting sums will be infinite, though, and 
one will have to solve a rather non-trivial problem of a sufficiently accurate truncation of these infinite sums.

\subsubsection{The case when one of the strikes is below the barrier}\label{above_below} 
Let $K_2<H\le K_1$. Then $a_2-h<0\le a_1$, and the sinh-deformation of the line of integration w.r.t. $\eta_2$ is impossible,
 hence, we apply the simplified trapezoid rule
to the initial integral w.r.t. $\eta_2$ (flat iFT). The integrands decay very slowly, hence, the number of terms $N_2$ in the simplified trapezoid rule
is too large. 
The number $N_2$ can be significantly decreased using the summation by parts in the infinite trapezoid rule
\[
I=\ze\sum_{j\in \bZ} e^{-ia\ze j}f_j.
\]
if $f'(y)$ decreases faster than $f(y)$ as $y\to\pm\infty$ (as is the case in the setting of
the present paper). Indeed, then  the finite differences $\De f_j=f_{j+1}-f_j$ decay faster 
than $f_j$ as $j\to\pm\infty$ as well. The summation by parts formula is as follows. We choose $\ze>0$ so that
$1-e^{-ia\ze}$ is not close to 0. Then
\[
\ze\sum_{j\in \bZ} e^{-ia\ze j}f_j=\frac{\ze}{1-e^{-ia\ze}}\sum_{j\in \bZ} e^{-ia\ze j}\De f_j.
\]
If each differentiation increases the rate of decay, then the summation by part procedure can be iterated.
In the setting of the present paper, the rate of decay increases by approximately 1 with each differentiation. In
the numerical examples, we apply the summation by parts 3 times, which decreases the number of terms many times,
and makes the number comparable to the number of terms when the sinh-acceleration can be applied.  
\subsubsection{The case when both strikes are below the barrier}\label{both_below} 
In this case, the sinh-acceleration in the integrals w.r.t. $\eta_1$ and $\eta_2$ is impossible, hence, 
we apply the simplified trapezoid rule
to the initial integrals w.r.t. $\eta_2$ and $\eta_1$.
The numbers $N_1, N_2$ can be significantly decreased using the summation by parts in the infinite trapezoid rule
 applied to evaluate
the repeated integrals on the RHS of \eqref{FTsolLFCC3}.

 \section{Semi-static hedging vs Variance minimizing hedging of  down-and-in options: 
 a numerical example and qualitative analysis}\label{numer_din_example}
  
 In this section, we present and discuss in detail several important observations
 and practically important conclusions using an example of a down-and-in call option. The process (KoBoL)
 is the same  
 as in Table~\ref{StVar1_1.2}
, maturity is $T=0.1$, but the strike $K=1.04$ is farther from the barrier than in Table~\ref{StVar1_1.2}
. The reason for that is two-fold: 1) to show
 that if the restrictive formal conditions for the semi-static hedging are satisfied, then
 the semi-static procedure works reasonably well for jump-processes with a rather slowly-decaying jump component
 even if the distance from the barrier to the support of the artificial exotic payoff is sizable (about 4 percent); 2) if this support 
 is too close to the barrier, then the summation-by-part procedure can be insufficiently accurate unless high precision arithmetic is used.
 In the paper, we do calculation with double precision.
 
We consider the standard situation: an agent sells the down-and-in call option, and invests the proceeds into
 the riskless bond. We assume that the spot $S_0=e^{0.04}$ is almost at the strike. That is, the agent makes the bet that  the barrier will not be breached during the lifetime of the option.

 Fig.~\ref{CpdfFtau} shows that the probability of this happy outcome is not very large; even the probability that
 the barrier will be breached before $\tau=0.05$ is about 42\%. 
 
 However, with the probability about  50\%, at the time the portfolio
 is breached, the portfolio value is positive. This is partially due to the fact the bond component in the portfolio
 increases fairly fast.\footnote{Recall that
 even a small asymmetry of the jump component requires the riskless rate be sizable in order that
 the semi-static procedure be formally justified.}
 Nevertheless, if the barrier is breached at time close to 0, the loss in the portfolio value can be
 rather sizable (see Fig.~\ref{VCall}). Hence, it is natural for the agent to hedge the bet. 
 
 Assume that the agent uses the semi-static hedging constructed in the paper. The standard static and
 semi-static arguments construct model- and spot-invariant portfolios, which make it impossible to take into
 account that a hedging portfolio needs to be financed. If we consider the portfolio of 3 put options, and ignore
 the riskless bonds borrowed to finance the position, then, at any time $\tau$ the barrier is breached, and at any level
 $S_\tau\le H$, the portfolio value is positive or very close to zero (Fig.~\ref{V0SS3}). Thus, the semi-static hedge seems to work
 very well even if the portfolio consists of only 3 options. Furthermore, if only one option is used, then the portfolio value decreases except in a relatively small region far from the barrier and close to maturity (Fig.~\ref{DifV0}). Since the probability of the option expiring in that region is very small, a naive argument would suggest that increasing the number of options in the hedging portfolio would increased the overall hedging performance of the portfolio. Recall, however, that the agent borrows riskless bonds to finance the put option position. If the barrier is breached, this short position in the riskless bond has to be liquidated alongside the other positions in the portfolio, complicating the overall picture. Fig.~\ref{VSS3} demonstrates that, when $S_\tau$ is close to the barrier (a high probability event conditional on breaching the barrier), the value of the hedging portfolio is negative and large. Thus, the hedge is far from perfect. Furthermore, in Fig.~\ref{DifV} we see that if $\tau>0.037$ and $S_\tau$ is not too far
 from the barrier, then the semi-static portfolio consisting only of a put option whose strike is at the kink of the exotic option has a larger value then the value of the semi-static portfolio
 consisting of 3 put options. 
 The value of the hedging portfolio with one put option is shown in Fig.~\ref{VSS1}.
 Table~\ref{TableBets} shows that, if the barrier is not breached during the lifetime of the options, both portfolios have negative value but the losses on the one with 3 options is twice as large.
 Hence, it is better to use fewer options unless the agent is betting on a very low probability event realizing: if a larger fraction of wealth is invested in options, the cost of these options is higher, eroding the advantage of a more accurate semi-static hedging.

 Fig.~\ref{VP3}
 demonstrates that the variance-minimizing portfolio of the same three put options hedges the risk of the
 down-and-in option, but only partially: if $\tau<0.06$ (approximately), and $S_\tau$ is not far from the barrier,
 the value of the portfolio is negative; for $\tau$ closer to maturity, the portfolio value becomes sizably positive. 
 
 Comparing Fig.~\ref{VFTP2} and Fig.~\ref{VP3}, we observe that the inclusion of the first-touch digital increases the hedging performance
 of the portfolio somewhat. Fig.~\ref{VK1}, \ref{VFT} and \ref{VKeq1} show that portfolios with one put option or one first touch digital perform
 approximately as well as portfolios with 3 options, and the portfolio with the first touch digital as the only hedging instrument is better than portfolios with one
 put option (with strikes either at the barrier or at the kink).
  
  We emphasize once again that, since the hedging portfolio is not costless, proper analysis of the efficiency of hedging should incorporate both the payoff to the short position in the riskless bond and the value of the hedging portfolio at maturity in case the barrier is not breached. In the case of the formally perfect semi-static hedging with 3 put options, 
 {\em the loss of the portfolio when the barrier is not breached} is quite sizable (about 0.6 of the value of the hedged
 option at time 0); in contrast, if only one option is used, then the loss is about 0.3 of the value of the hedged option.
 Instead, if we use a variance-minimizing portfolio with 3 or 1 put options, the gain when the barrier is not breached is
 sizable (about the value of the hedged option); if the first touch digital is used, then the gain is close to 0
 (positive if, in addition to the first touch digital, two put options are used, and negative if only the first touch digital is used).

 To sum up: any realistic hedging portfolio replaces the initial non-hedged bet with another one,
 which can be more risky than the initial bet, {\em once we take the investment in the riskless bond into
 account}. In particular, the semi-static hedging portfolio with 3 put options, which seems perfect  from the point of view of
 the replication of the payoff of the down-and-in call option, ignoring the upfront payments for the hedging instruments, is, in fact, another and  much riskier bet: only if the barrier is breached, and the underlying
 is sizably below the barrier at that moment, the hedging portfolio is in the black; otherwise, it is is the red,
 and, with the large probability, significantly so.  Thus,  semi-static hedging portfolios can 
 be regarded as contrarian bets: if the realization of the underlying is very bad (which occurs with small probability), the portfolio gains a lot, but looses a lot otherwise. The portfolios with first touch digitals have the most concentrated 
 profile of the payoffs. 
 
Table~\ref{TableBets} illustrates the bet structure implicit in different hedging portfolios, showing the approximate payoffs in the vicinity of the barrier when the barrier is breached and the payoff at maturity if the barrier is not breached.\footnote{Payoffs in the vicinity of the barrier when the barrier is breached are shown for several small time intervals. For omitted time intervals, approximate probabilities
 and payoffs can be reconstructed using the interpolation; the probabilities of realizations of $S_\tau$
 farther from the barrier are small because the tails of the L\'evy density decay exponentially, and the rate of decay
 is not small. }
 We see that the replacement of the initial bet (the naked short down-and-in call option)
 with the portfolios based on the semi-static argument may lead to a sizable loss
 with probability more than 90\%. At the same time, there is a small probability
 of a significant gain, if the barrier is breached by a large jump. In our opinion, Table~\ref{TableBets} demonstrates that the most efficient
 hedge is with the first touch digital.
  
  In Table~\ref{TableStDEv}, we show the normalized standard deviation of different portfolios, computed at 1\%-7\% from the barrier. The irregular structure of the payoff to the semi-static hedging portfolio implies that the portfolio volatility must be high,
as demonstrated in Table~\ref{TableStDEv}. The volatilities of other portfolios are of the same order of magnitude, although the
 volatilities of portfolios that contain the first touch digital are sizably smaller, which is an additional indication of the
 advantages of the first touch digitals.
 
 In Table~\ref{TableVarCov}, we show the variance-covariance matrix which is used to construct several hedging portfolios; 
 the reader may check that the matrix is rather close to a matrix of a smaller rank 2-3 rather than 5.
 This explains why the variances of different portfolios that we construct are very close, and why, as far as the hedging
 of small fluctuations is concerned, there is no gain in using several options; and the cost of using several options can be uncomfortably high.

\section{Conclusion}\label{concl}

\subsection{Hedging: results and extensions} 

We developed new methods for constructing static hedging portfolios for European exotics and
variance-minimizing hedging portfolios for European exotics and barrier options in L\'evy models;
in both cases, the calculations are in the dual space. In particular, we constructed approximate static portfolios for
exotic options approximating an exotic payoff with  linear combinations of vanillas in the norm of the 
H\"older space with an appropriate weight; the order of the H\"older space is chosen so that
the space of continuous functions with the same weight is continuously embedded in the weighted H\"older space,
hence, we obtain an approximation in the $C$-norm. The weights are easy to calculate because the weighted H\'older
space is the Hilbert space, and the scalar products of the elements of this space can be easily calculated evaluating integrals
in the dual space. 

We have discussed the limitations of the static hedging/replication of barrier options, and, in applications
to L\'evy models, listed the rather serious  restrictions on the parameters of the model under which the approximate replication of a barrier option with an appropriate European exotic option can be justified.
We explained why in the presence of jumps a perfect semi-static hedging is impossible, and, using an example of the down-and-in call option in KoBoL model, demonstrated that
\begin{enumerate}[(a)]
\item
the formal semi-static procedure of hedging barrier options based on the approximation 
of the latter with exotic European options and the static hedging algorithm of the present paper
produce a good super-replicating portfolio even if only 3 put options are used;
\item
however, if the borrowed riskless bond needed to finance the hedging portfolio is taken into account,
then the hedging portfolio of the short down-and-in call option, 3 put options and riskless bond has the payoff structure
resembling contrarian bets. With a very high probability, the portfolio suffers sizable losses (several dozen percent and more
of the value of the down-and-in option at the initiation). This happens if the barrier is not breached during the lifetime
of the option or, at the moment of breaching, the spot does not jump too far below the barrier. Only if the jump down is large,
the hedging portfolio will turn out the profit, and the profit can be very large;
\item
if only one put option is used for hedging, then the performance of the portfolio significantly improves although
the super-replicating property becomes imperfect, naturally. This observation undermines a general idea behind the
semi-static hedging/replication. An accurate replication needs more options in the hedging/replicating portfolio but the
associated costs may outweigh the formal advantage of the model-free replication/hedging;
\item
in the example that we considered, variance-minimizing hedging portfolios are less risky than the semi-static hedging portfolios,
and the ones with the first touch digitals are the best. This observation implies that, surprisingly enough, for barrier options, 
the variance-minimizing  objective leads to smaller losses even at the scale which is not expected to be characterized by the variance;
\item in the case of the short down-and-out call option, the natural hedging portfolio is a short position in a European call; similarly, the natural hedging portfolio for the down-and-in option is a long position in a European call. Fig.~\ref{doVCall}, \ref{doV0SS3},
\ref{doVSS3}
 show the (normalized) payoffs
of the naked down-and-out option, the semi-static hedging portfolio without taking the bond component into account,
and the semi-static hedging portfolio with the bond component taken into account. While the former suffers losses unless the breach happens exactly at the barrier, the payoff of the latter is 0.3 at maturity
(if no breach happens), and positive in a small neighborhood of the barrier if the breach happens. Thus, if the bond component is taken into account and the non-negligible probability of a large loss is ignored,
the semi-static portfolio almost looks like a good hedge.
\end{enumerate} 
The results the paper suggest that it might be natural to consider semi-static hedging portfolios as separate classes
of derivative securities, with non-trivial payoff structures, which {\em are model-dependent.}  The numerical examples in the paper
demonstrate that the properties of the payoffs of semi-static hedging  portfolios for short down-and-out options are fundamentally different
from the properties of semi-static hedging  portfolios for short down-and-in options. For the process and down-options considered in the paper,
the semi-static hedging portfolios for down-and-out options do have properties close to the properties of good hedging portfolios:
with high probability, the payoff at expiry (or maturity) is positive although with small probability the payoff is negative. However, for the down-and-out options,
the properties are the opposite: small losses with high probability and large gains with small probability (essentially, contrarian bets).
For up-options under the same process, the semi-static hedging portfolios for ``out" options are good but 
the ones for ``in" options are contrarian bets. The properties change with the change of the sign of the drift as well. For barrier options, the picture
becomes even more involved.

To conclude,  the formal semi-static argument is applicable only under rather serious restrictions, cannot be exact in the presence
of jumps, may lead to very risky portfolios, and the hedging errors of the variance-minimizing portfolios can be sizable (although less than the errors of the formal semi-static hedging portfolios). The deficiencies of 
both types of the hedging portfolios stem from the fact that both are static. If we agree that 
a model-independent hedging is seriously flawed (the semi-static hedging disregards the cost of hedging), and
the variance minimizing one does not take into account the payoff of the portfolio at the time of breaching explicitly,
then a natural alternative is a hedging portfolio which is rebalanced after reasonable short time interval so that
the profile of the payoff at an uncertain moment of breaching is approximately equal to the profile at the
moment of the next rebalancing. The portfolio can be calculated using the approximation in the H\"older space norm,
and the following versions seems to be natural:
\begin{enumerate}[I.]
\item
Myopic hedging, when the hedging instruments expire at the moment of the next rebalancing.
\item
Quasi-static hedging, when the first hedging portfolio is constructed using the options of the same maturity
as the barrier option, and the other options are added to the existing portfolio at each rebalancing moment.
\item
Hedging using first touch digitals only. Possible versions: each period, 
buy the digital which expires at the end of the rebalancing period; each period, buy first touch digitals which expire at
the maturity date.
\end{enumerate}
As Fig.1 suggests, and our numerical experiments confirm, the use of the first touch digitals has certain advantages,
 and the first-touch options with the payoffs $(S/H)^\gamma, \gamma>0,$ would be even better hedging instruments.
 
 We leave the study of these versions of hedging to the future.

\subsection{The technique used in the paper to price options with barrier features and its natural extensions}\label{asympform}
The construction of the variance minimizing portfolio for barrier options 
is based on the novel numerical methods for evaluation of the Wiener-Hopf factors and pricing
barrier options that are of a more general interest than applications to hedging. In particular, the Wiener-Hopf factors
can be evaluated with the relative error less than E-12 and barrier options with the relative error less than 
E-08 in a fraction of a msec. and several msec., respectively. The efficient evaluation
of the Wiener-Hopf factors, and, in many cases, the numerical realization of the inverse Fourier transform in
the option pricing formulas are based on the sinh-acceleration method of evaluation of integrals of
wide classes, highly oscillatory ones especially, developed in \cite{SINHregular}. In some cases, the sinh-acceleration method is not applicable. These cases are the ones when the standard Fourier inversion techniques may require the summation
of millions of terms and more; the same problem arises when the Hilbert transform method is applied in L\'evy models
of finite variation. In the present paper, we suggested and successfully applied the summation by parts trick in the infinite
trapezoid rule. In the result, the rate of the convergence of the infinite sum significantly increases, and it suffices to
add thousands of terms
in the infinite trapezoid rule instead of millions. Note that the same summation by parts can be applied when the Hilbert transform is used
to evaluate discrete barrier options, and in other cases.


Similarly to \cite{paired}, it is possible to derive simplified asymptotic formulas, which are fairly accurate
if the spot is not very close to the barrier and the tails of the jump density decay fast. Similar {\em exact} formulas with
larger number of terms are valid in models with rational exponents $\psi$, if the roots and poles of the characteristic equation $q+\psi(\xi)=0$ can be efficiently calculated, and the number of zeros and poles is not large. This is the case
in the double-exponential jump diffusion model (DEJD model \cite{kou}) 
used in \cite{lipton-columbia,lipton-risk,kou,KW1,KW2,sepp,lipton-sepp},
 and its generalization, the hyper-exponential jump-diffusion model
 (HEJD model) introduced and studied in detail in \cite{amer-put-levy} (and independently outlined in \cite{lipton-risk}), and used later
 in  \cite{JP, carr-crosby,hejd,MSdouble} and many other papers.  In the case of the beta-model \cite{beta}, similar formulas with series instead of finite sums
can be derived; but it is unclear how to truncate repeated series in order to satisfy the desired error tolerance.

The approach of the paper can be applied to construct hedging portfolios for lookbacks \cite{paraLaplace}, American options, barrier options and Asians with discrete monitoring \cite{MarcoDiscBarr}, Bermudas, where the Fourier transform of the option price can be efficiently calculated. 
To this end, one can reformulate the backward induction procedures in the discrete time models or Carr's randomization method
making the calculations in the dual space as in \cite{AsianGammaSIAMFM}, where Asian options were priced making calculations in the dual space.
The double-spiral method introduced in \cite{AsianGammaSIAMFM}, together with the sinh-acceleration, should be used to decrease
the sizes of arrays one works with at each time step. The sinh-acceleration allows one to make calculations fast in 
regime switching models as well (calculation of the Wiener-Hopf factors in the matrix case is possible using the sinh-acceleration
technique),
hence, approximations of models with stochastic interest rates and stochastic volatility can be considered, similarly to 
\cite{stoch-int-rate-CF,amer-reg-sw-SIAM,BLHestonStIR08}, where American options were priced using Carr's randomization in the state space.
The double-barrier options can be treated by reformulating the method of \cite{BLdouble} in the dual space.

\appendix

\section{Standard semi-static hedging under L\'evy processes}\label{SemiStaticLevy}
We consider the down-and-in option (with the value function) $V(G;H; t, X_t)$ with the barrier $H=e^h$. Let $\tau:=\tau_h$ be the first entrance time
by the L\'evy process $X$ with the characteristic exponent $\psi$ into $(-\infty,h]$. If $\tau<T$, then, at time $\tau$,
the option becomes the European option with the payoff $G(X_T)$ at maturity date $T$. The standard semi-static 
hedge is based on the assumption that there exists $\be\in\bR$ such that, for any stopping time $\tau$,
\bbe\label{semistatcondLevy}
\bE_\tau\left[G(X_T)\bfo_{X_T>h}\right]=\bE_\tau\left[e^{\be(X_T-X_\tau)}G(2X_\tau-X_T)\bfo_{2X_\tau-X_T>h}\right].
\ee
If \eqref{semistatcondLevy} holds, we consider the European option $V(G_{ex}; t, X_t)$ of maturity $T$, with the payoff function
\bbe\label{Gex}
G_{ex}(x)=(G(x)+e^{\be(x-h)}G(2x-h))\bfo_{(-\infty,h]})(x).  
\ee
If $\tau> T,$ the down-and-in option and the European option
expire worthless. At time $\tau$, the option values coincide on the strength of \eqref{semistatcondLevy} {\em provided} $X_\tau=h$ because 
\[
\bE_\tau\left[G(X_T)\bfo_{X_T>h}\right]+\bE_\tau\left[G(X_T)\bfo_{X_T\le h}\right]=\bE_\tau[G(X_T)].
\]
But the assumption $X_\tau=h$ means that there are no jumps down. In a moment, we will show that then there are no jumps up as well.
Let $G(x)=(e^x-K)_+$ or, more generally, let $\hG(\xi)$ be well-defined in the half-plane $\{\Im\xi<-1\}$ and decay as $|\xi|^{-2}$ as $\xi\to \infty$
remaining in this half-plane. Let $\psi$ be analytic in a strip $S_{(\lm,\lp)}$, where $\lm<-1$. Take $\om\in (\lm,-1)$,
denote $x=X_\tau$, and
represent the LHS of \eqref{semistatcondLevy} in the form
\bbe\label{LHSstat}
\bE_\tau\left[G(X_T)\bfo_{X_T>h}\right]=\frac{1}{2\pi}\int_{\Im\xi
=\om}e^{ix\xi-(r+\psi(\xi))(T-\tau)}\widehat{G\bfo_{(h,+\infty)}}(\xi)d\xi.
\ee
The RHS of \eqref{semistatcondLevy} can be represented as the repeated integral 
\beqast
&& \frac{1}{2\pi}\int_{\Im\xi=\om'}d\xi\, e^{ix\xi-(r+\psi(\xi))(T-\tau)}\int_{\bR}dy\, e^{-iy\xi}e^{\be(y-x)}G(2x-y)\bfo_{2x-y>h}.
\eqast
if $\om'\in \bR$ can be chosen so that the repeated integral converges (this imposes an additional condition on $\psi$, which
will be made explicit in a moment). Changing the variable $2x-y=y'$, and then $-\xi-i\be=\xi'$, we obtain
\beqast
&&\bE_\tau\left[e^{\be(X_T-X_\tau)}G(2X_\tau-X_T)\bfo_{2X_\tau-X_T>h}\right]\\
&=&\frac{1}{2\pi}\int_{\Im\xi=\om'}d\xi\, e^{ix\xi-(r+\psi(\xi))(T-\tau)}\int_{\bR}dy'\, e^{-i(2x-y')\xi}e^{\be(x-y')}G(y')\bfo_{y'>h}\\
&=&\frac{1}{2\pi}\int_{\Im\xi=\om'}d\xi\, e^{-(r+\psi(\xi))(T-\tau)}\int_{\bR}dy'\, e^{i(x-y')(-\xi-i\be)}G(y')\bfo_{y'>h}\\
&=&\frac{1}{2\pi}\int_{\Im\xi'=-\om'-\be}d\xi'\, e^{-(r+\psi(-\xi'-i\be))(T-\tau)}\int_{\bR}dy'\, e^{i(x-y')\xi'}G(y')\bfo_{y'>h}\\
&=&\frac{1}{2\pi}\int_{\Im\xi=-\om'-\be} e^{ix\xi-(r+\psi(-\xi'-i\be))(T-\tau)}\widehat{G\bfo_{(h,+\infty)}}(\xi)d\xi.
\eqast
We see that $\om'$ and $\be$ must satisfy $-\om'-\be\in (\lm,-1)$, and, comparing with \eqref{LHSstat}, we see that
the characteristic exponent $\psi$ and $\be$ must satisfy
\bbe\label{conspsisemistat}
\psi(\xi)=\psi(-\xi-i\be).
\ee
Given a class of L\'evy processes, the equation \eqref{conspsisemistat} imposes one condition on the diffusion part,
and the second condition on the jump part; hence, the dimension of the admissible parameter space drops by 2 if
there are both diffusion and jump component, and by one if there is only one of these components.

If $X$ is the BM with drift $\mu$ and volatility $\sg$, then \eqref{conspsisemistat}
is equivalent to $\be=-\mu/(2\sg^2)$. Equivalently, $\psi(-i)=0$ (the stock is a martingale), hence, $\de=r$ (the dividend rate equals
the riskless rate).
In the case of the BM with embedded KoBoL component:
\[
\psi(\xi)=\frac{\sg^2}{2}\xi^2-i\mu \xi +c_-\Gamma(-\nu_+)(\lp^{\nu_+}-(\lp+i\xi)^{\nu_+})
+c_+\Gamma(-\nu_-)((-\lm)^{\nu_-}-(-\lm-i\xi)^{\nu_-}),
\]
the conditions become: 1) $c_+=c_-$ (hence, either there are no jumps or there are jumps in both directions),
and $\nu_+=\nu_-$; and 2) $\be=-\mu/(2\sg^2)=-\lp-\lm$. 3) $\psi(-i)+r-\de=0$.
We see that if there is no diffusion components, then the ``drift" $\mu=0$, and if there is a diffusion component,
and $\mu>0$ (resp., $\mu<0$), then $\lp>-\lm$ (resp., $\lp<-\lm$), which means that the density of jumps decays 
slower in the direction of the drift. 

We finish this section with a discussion of a possible size of hedging errors induced by the assumption
that the process does not cross the barrier by a jump when, in fact, it does. In the case of the down-and-in options
the expected size of the overshoot
decreases when $\lp$ increases; in the case of up-and-in options, $-\lm$ increases in absolute value.
Hence, if the diffusion component is sizable: $\sg^2>0$ is  not small, then the condition
$\be=-\mu/(2\sg^2)=-\lp-\lm$ implies a strong symmetry $\lp\approx -\lm$ of the positive and negative jump components.

If $\sg^2$ is small, then a strong asymmetry is possible; but then $\mu$ must be very large in absolute value.
To sum up: the conditions on the parameters of the model which allow one to formally apply the semi-static 
hedging procedure to L\'evy processes with jumps are rather restrictive.

\section{Additional representations of the Wiener-Hopf factors}\label{WHFadd}
In the case of processes of order $\nu<1$ (that is, processes of finite variation), with non-zero
drift, formulas \eqref{phip1}-\eqref{phim1} for the Wiener-Hopf factors can be made more efficient for computational purposes. Set
\begin{equation}\label{defPsinule}
\Psi(q,\eta)=(1+\psi(\eta)/q)/(1-i\mu\eta/q).
\end{equation}
\mbr\noindent
(IV) If $\nu<1$ and $\mu>0$, then, for the same $\xi$ and $\om_\pm$ as above,
\begin{eqnarray}\label{phip1num1p}
\phi^+_q(\xi)&=&(1-i\mu\xi/q)^{-1}\exp\left[\frac{1}{2\pi i}\int_{\Im\eta=\om_-}
\frac{\xi \ln \Psi(q,\eta))}{\eta(\xi-\eta)}d\eta\right]\\\label{phip1num1m}
\phi^-_q(\xi)&=&\exp\left[-\frac{1}{2\pi i}\int_{\Im\eta=\om_+}\frac{\xi \ln \Psi(q,\eta)}{\eta(\xi-\eta)}d\eta\right].
\end{eqnarray}
\mbr\noindent
(V) If $\nu<1$ and $\mu<0$, then, for the same $\xi$ and $\om_\pm$ as above,
\begin{eqnarray}\label{phip1num2p}
\phi^+_q(\xi)&=&\exp\left[\frac{1}{2\pi i}\int_{\Im\eta=\om_-}
\frac{\xi \ln \Psi(q,\eta)}{\eta(\xi-\eta)}d\eta\right]\\\label{phip1num2m}
\phi^-_q(\xi)&=&(1-i\mu\xi/q)^{-1}\exp\left[-\frac{1}{2\pi i}\int_{\Im\eta=\om_+}\frac{\xi \ln \Psi(q,\eta)}{\eta(\xi-\eta)}d\eta\right].
\end{eqnarray}
Formulas \eqref{phip1num1p}-\eqref{phip1num2m} are more efficient than \eqref{phip1}-\eqref{phim1} because
if either $q$ or $\eta$ (or both) tend to $\infty$ in the complex plane so that the RHS in \eqref{defPsinule} does not cross
$(-\infty,0]$, and $(q-i\mu\eta)^{-1}=O((|q|+|\eta|)^{-1}$, then $\ln \Psi(q,\eta))$ is well-defined and tends to zero 
as $|\eta|^\nu/(|q|+|\eta|)$. Hence, the integrals converge faster.

\mbr\noindent
(VI) Similarly, if the process has the BM component of volatility $\sg$, and $\nu<2$ is the order of the pure jump component,
we introduce 
\begin{equation}\label{defPBMnu}
\Psi(q,\eta)=(1+\psi(\eta)/q)/(1+\eta^2\sg^2/(2q)),
\end{equation}
set $\be^\pm_q=\pm \sqrt{2q}/\sg$,
and represent the Wiener-Hopf factors in the form
\begin{eqnarray}\label{phipBMnu}
\phi^+_q(\xi)&=&\frac{\bepq}{\bepq-i\xi}\left[\frac{1}{2\pi i}\int_{\Im\eta=\om_-}
\frac{\xi \ln \Psi(q,\eta)}{\eta(\xi-\eta)}d\eta\right]\\\label{phimBMnu}
\phi^-_q(\xi)&=&\frac{\bemq}{\bemq+i\xi}\exp\left[-\frac{1}{2\pi i}\int_{\Im\eta=\om_+}\frac{\xi \ln \Psi(q,\eta)}{\eta(\xi-\eta)}d\eta\right].
\end{eqnarray}
\mbr\noindent
(VII) Finally, if the process has the BM component of volatility $\sg$, $\mu$ is the drift, and $\nu<1$ is the order of the pure jump component,
we introduce 
\begin{equation}\label{defPBMnu2}
\Psi(q,\eta)=(q+\psi(\eta))/(q-i\mu\eta+\eta^2\sg^2/2),
\end{equation}
and $\be^\pm_q=(-\mu\pm (\mu^2+2q\sg^2)^{1/2})/\sg^2$ and represent the Wiener-Hopf factors in the form
\eqref{phipBMnu}-\eqref{phimBMnu}. The rate of convergence of the integrals increases further.

 \section{Sinh-acceleration in the Laplace inversion formula}\label{LaplSinh} 
 Let $\sg, \om^{\prime\prime}, b^{\prime\prime}>0$ and $\sg-b^{\prime\prime}\sin\om^{\prime\prime}>0$. Introduce
 the function 
 \bbe\label{sinhLapl}
 \bC\ni q=\chi_0(\sg,\om^{\prime\prime},b^{\prime\prime},y):= \sg+ib^{\prime\prime}\sinh(i\om^{\prime\prime}+y)\in \bC,
 \ee
 and denote by
 $\cL_0=\cL_0(\sg,\om^{\prime\prime},b^{\prime\prime})$ be the image of $\bR$ under the map $\chi_0(\sg,\om^{\prime\prime},b^{\prime\prime},\cdot)$.
We fix $\om^{\prime\prime}\in (0,\pi/4)$, $k_d\in (0,1)$, and set $d^{\prime\prime}=k_d|\om^{\prime\prime}|$, 
$\ga^{\prime\prime}_\pm=\om^{\prime\prime}\pm d^{\prime\prime}$. If the contour $\cL(\om'_1,\om',b)$ with $\om'>0$
is used to calculate the Wiener-Hopf factors, we set $d'=k_d|\om'|$ and choose $\om'\in (0,\pi/4)$.
Denote by $\cS$ the image of the strip $S_{(-d^{\prime\prime}, d^{\prime\prime})}$ under the map
$y\mapsto \chi_0(\sg,\om^{\prime\prime},b^{\prime\prime},y)$ and by $\cS'$ the image of
the strip $S_{(-d', d')}$ under the map
$y\mapsto \psi(\chi(\om'_1,\om',b',y))$. We choose the parameters of the deformations so that 
the sum $\cS+\cS'=\{q\in \cS, \eta\in \cS'\}$ does not intersect $(-\infty,0]$ in the process of deformation of the two initial lines of integration;
then the initial formulas \eqref{phip1}-\eqref{phim1} can be applied (provided $\xi$ is below $S'$). The other formulas 
for the Wiener-Hopf factors above can be applied under weaker conditions. See \cite{paraLaplace,paired} for details in the similar case
of the fractional-parabolic deformations.

The necessary condition on the pair $\om'$ and $\om^{\prime\prime}$, which ensures that the intersection of $\cS+\cS'$ 
with the exterior of a sufficiently large ball in $\bC$ does not intersect $(-\infty,0]$. At infinity, $\cS$ stabilizes to
the cone $\cC\cup \bar \cC$, where $\bar z$ denotes the complex conjugation, and
\[
\cC:=\{e^{i\phi}\bR_{++}\ |\ \phi\in (-\om^{\prime\prime}-d^{\prime\prime}-\pi/2, -\om^{\prime\prime}+d^{\prime\prime}-\pi/2)\}.
\]
$\cS'$ also stabilizes at infinity to  a cone of the form $\cC'\cup \bar \cC'$, but the description of $\cC'$
 is more involved than that of $\cC$. We need to consider 3 cases:
(1) $\nu\in (1,2]$ or $\nu\in (0,1]$ and $\mu=0$; (2) $\nu=1$ and $\mu\neq 0$; (3) $\nu\in (0,1)$ and $\mu\neq 0$.

\vskip0.1cm\noindent
(1) We use \eqref{aspsiinf} to conclude that 
\[
\cC':=\{e^{i\phi}\bR_{++}\ |\ \phi\in (\nu(\om^{\prime}+d^{\prime}), \nu(\om^{\prime}-d^{\prime}))\}.
\]
Let $\om'\ge 0$, then $(\cC\cup \bar \cC)+(\cC'\cup \bar \cC')$ does not intersect $(-\infty,0]$, if and only if
$\nu(\om^{\prime}+d^{\prime})+\om^{\prime\prime}+d^{\prime\prime}\in (0,\pi/2)$. The case $\om<0$ is by symmetry.
In both cases,
if $k_d<1$ can be chosen arbitrarily close to 1,
then the necessary and sufficient condition on $(\om', \om^{\prime\prime})\in (0,\pi/2)^2$ is $\om^{\prime\prime}+\nu|\om'|<\pi/2$. 
\vskip0.1cm\noindent
(2) Let $\varphi_0=\mathrm{arg}\, (-i\mu+c^0_\infty)=-\arctan (\mu/c^0_\infty)$. Then
\[
\cC':=\{e^{i\phi}\bR_{++}\ |\ \phi\in (\om^{\prime}+d^{\prime}+\varphi_0, \om^{\prime}-d^{\prime}+\varphi_0)\},
\]
and, therefore, the condition is: $|\om'|+d'+\om^{\prime\prime}+d^{\prime\prime}<\pi/2-|\varphi_0|$.
\vskip0.1cm\noindent
(3) Formally, we have the same condition as in (2), with $|\varphi_0|=\pi/2$. Clearly, this condition fails for any positive 
$|\om'|$ and $\om^{\prime\prime}$. Hence, in this case, it is impossible to use the sinh-acceleration w.r.t. $q$ and $\eta$ and apply
\eqref{phip1}-\eqref{phim1}. However, it is possible to choose the deformations so that $S+S'\not\ni 0$. If, in addition, the parameters are chosen so that,
for $q,\eta$ of interest, $|1-\Psi(q,\eta)|<1$, then we can apply the sinh-acceleration in  \eqref{phip1num1p}-\eqref{phip1num2m},
and the sinh-acceleration in the integral for the Laplace inversion.

\section{Gaver-Stehfest method}\label{GaverStehfest}
 If $\tf(q)$ is the Laplace transform
 of $f:\bR_+\to\bR$, then the Gaver-Stehfest approximation to $f$ is given by
 \begin{equation}\label{GS31}
f(M;T)=\frac{\ln(2)}{T}\sum_{k=1}^{2M}\zeta_k(M)\tilde f\left(\frac{k\ln(2)}{T}\right),
\end{equation}
where $M$ is a positive integer,
\begin{equation}\label{GS32}
\zeta_k(M)=(-1)^{M+k}\sum_{j=\lfloor (k+1)/2\rfloor}^{\min\{k,M\}}\frac{j^{M+1}}{M!}\left(\begin{array}{c} M \\ j\end{array}\right)
\left(\begin{array}{c} 2j \\ j\end{array}\right)\left(\begin{array}{c} j \\ k-j\end{array}\right)
\end{equation}
and $\lfloor a \rfloor$ denotes the largest integer that is less than or equal to  $a$. 
If $\tf(q)$ can be calculated sufficiently accurately,then, in many cases, the Gaver-Stehfest approximation with a moderate number of terms ($M\le 8$) is sufficiently accurate
for practical purposes even if double precision arithmetic is used (sometimes, even $M=9$ can be used). However,
it is possible that larger values of $M$ are needed,
and then high precision arithmetic becomes indispensable. The required system precision is about $2.2*M$, and
about $0.9*M$ significant digits are produced for $f(t)$ with good transforms. ``Good" means that $f$ is of class $C^\infty$, and
the transform's singularities are on the negative real axis. If the transforms are not good, then the number of significant digits may not
be so great and may not be proportional to $M$. See \cite{AbWh06}.

As \cite{AbateValko04} indicates and the numerical experiments in the setting of pricing barrier options and CDSs confirm \cite{paired},
Wynn's rho algorithm is more stable than the Gaver-Stehfest method.
Given a converging sequence $\{f_1, f_2,
\ldots\}$, Wynn's algorithm estimates the limit $f=\lim_{n\to\infty}f_n$ via $\rho^1_{N-1}$, where $N$ is even,
and $\rho^j_k$, $k=-1,0,1,\ldots, N$, $j=1,2,\ldots, N-k+1$, are calculated recursively as follows:
\begin{enumerate}[(i)]
\item
$\rho^j_{-1}=0,\ 1\le j\le N;$
\item
$\rho^j_{0}=f_j,\ 1\le j\le N;$
\item
in the double cycle w.r.t. $k=1,2,\ldots,N$, $j=1,2,\ldots, N-k+1$, calculate
\[
\rho^j_{k}=\rho^{j+1}_{k-2}+k/(\rho^{j+1}_{k-1}-\rho^{j}_{k-1}).
\]
We apply Wynn's algorithm with the Gaver functionals
\[
f_j(T)=\frac{j\ln 2}{T}\left(\frac{2j}{j}\right)\sum_{\ell=0}^j (-1)^j\left(\frac{j}{\ell}\right)\tilde f((j+\ell)\ln 2/T).
\]

 \end{enumerate}

\section{Tables}\label{Tables}
\subsection{Tables ~\ref{StVar1_1.2slow}-\ref{AsymKBL_1.95}: 
Dual-Static vs Variance minimizing hedging of the European option with the
payoff $\cG_{ex}(S)=(S/H)^\be(H^2/S-K_0)$}\label{Eurotables}

\begin{table}

\caption{\small KoBoL 
close to NIG, with an almost symmetric {\em slowly decaying jump density}, and no ``drift": $m_2=0.1$, $\nu=1.2, \lm=-5, \lp=6$,  $\mu=0$, $\sg=0$, 
$c=0.1670$ (rounded), $\be=-1$, $r=0.100$ (rounded).
 }
{\tiny
\begin{tabular}{c|ccc|c||ccccc|c}
\hline\hline
$\begin{array}{cc}
T=0.01
\end{array}$ & & $\#K=3$ & & & & & $\#K=5$ & & & \\
\hline
 Static   & $n_1$ & $n_2$ & $n_3$ & $nStd$ & $n_1$ & $n_2$ & $n_3$ & $n_4$ & $n_4$ & $nStd$ \\
 $s=0.5$  & 1.061 &	-0.794 &	1.756 &	0.26-2.64 &
 1.061 &	-0.084 &	0.168 &	-0.604 &	1.836 &	0.26-2.65\\
 $s=0.55$  & 1.061 &	-0.745 &	1.683 &	0.33-2.92 & 1.061 &-0.070 & 	0.164 &
 	-0.570 &	1.765 &	0.13-0.65\\\hline
	
 $VM1$ &  & & & & & & & & & \\
 $x'=-0.03$ & 1.061 &	-0.102&	0.880&	0.300 & 1.061  & 0.1448&	0.204&	-0.916&	2.000&	0.249\\	
$x'=-0.01$ & 1.061 &-0.236&	1.149&	0.566& 	1.061 &	0.1556&	0.268&	-1.390&	2.516&	0.480\\
$x'=0.00$ & 1.061 &	-0.368&	1.348&	0.902 & 1.061 &	0.156&	0.314&	-1.629&	2.760&	0.774\\
$x'=0.01$ & 1.061 &	-0.513&	1.555&	1.449 & 1.061 &0.1503&	0.367&	-1.863&	2.994&	1.255\\
$x'=0.03$ & 1.061 &	-0.800&	1.946&	2.851 &1.061 & 0.1333&	0.479&	-2.310&	3.431&	2.503\\\hline

$VM2$ &  & & & & & & & & & \\
 $x'=-0.03$ & 1.200&	-0.304&	0.948&	0.230 & 1.106&	0.077&	0.237&	-0.931&	2.004&	0.249\\
$x'=-0.01$ &	1.161&	-0.419&	1.238 &	0.564 & 1.102&	0.075&	0.328&	-1.418&	2.524&	0.480\\
$x'=0.00$ &1.1621&	-0.573&	1.459&	0.900 & 1.099&	0.072&	0.381&	-1.658&	2.767&	0.774\\
$x'=0.01$ & 1.181&	-0.769&	1.696&	1.446 & 1.099&	0.0619&	0.440&	-1.892&	3.001&	1.255\\
	$x'=0.03$ & 1.245&	-1.197&	2.166&	2.85 & 1.103&	0.033&	0.559&	-2.339&	3.438&	2.503\\
		\hline\hline
$\begin{array}{cc}
T=0.1
\end{array}$ & &  & & & & &  & & & \\
\hline
Static & & & &0.57-1.02 & & & & & & 0.51-0.93
\\\hline
 $VM1$ &  & & & & & & & & & \\
$x'=-0.03$ & 1.061 &-1.394&	2.535&	0.567 & 1.061 &	0.079&	0.645&	-2.370&	3.219&	0.506\\
$x'=0.00$ &1.061 &	-1.293&	2.456&	0.741 & 1.061   &	0.090&	0.623&	-2.439&	3.358&	0.659\\
$x'=0.03$ &1.061    &	-1.309&	2.517&	1.003& 1.061   &0.092&	0.643&	-2.640&	3.614&	0.893\\\hline
$VM2$ &  & & & & & & & & & \\
 $x'=-0.03$ & 1.510&	-2.244&	2.941&	0.566 & 1.124&	-0.051&	0.730&	-2.392&	3.223&	0.506\\
$x'=0.00$ &	1.456&	-2.065&	2.839&	0.740& 1.120&	-0.035&	0.708 &	-2.460&	3.363&	0.659\\
$x'=0.03$ & 1.438&	-2.067&	2.905&	1.001 & 1.118&	-0.033&	0.732&	-2.666&	3.619&	0.893\\
\hline\hline
$\begin{array}{cc}
T=0.5
\end{array}$ & &  & & & & &  & & & \\
\hline
Static & & & &1.02-1.19 & & & & & & 0.93-1.09
\\\hline
 $VM1$ &  & & & & & & & & & \\
$x'=-0.03$ & 1.061 &--7.000&	9.246&	0.870 &1.061 &-0.288&	2.836&	-11.023&	11.155&	0.811 \\
$x'=0.00$ &1.061 &	-6.393&	8.563&	0.943 &	1.061 &	-0.245&	2.606&	-10.193&	10.446&	0.878\\	
$x'=0.03$ & 1.061 &	-5.890&	7.997&	1.030& 1.061 &-0.211&	2.417&	-9.512&	9.863&	0.957\\\hline

$VM2$ &  & & & & & & & & & \\
 $x'=-0.03$ & 2.927&	-10.687&	11.074&	0.869 & 1.223&	-0.643&	3.080&	-11.084&	11.166&	0.811\\
 $x'=0.00$ & 2.759&	-9.763&	10.243&	0.942& 1.211&	-0.575&	2.834&	-10.251&	10.456&	0.878\\
 $x'=0.03$ & 2.619&	-8.999&	9.555&	1.029 & 1.201&	-0.519&	2.632&	-9.567&	9.873&	0.957 \\\hline
	\end{tabular}
}
\label{StVar1_1.2slow}
 \end{table}

\begin{table}
\caption{\small KoBoL 
close to NIG, with an almost symmetric jump density, and no ``drift": $m_2=0.1$, $\nu=1.2, \lm=-11, \lp=12$,  $\mu=0$, $\sg=0$, 
$c=0.3026$ (rounded), $\be=-1$, $r=0.100$ (rounded).
 }
{\tiny
\begin{tabular}{c|ccc|c||ccccc|c}
\hline\hline
$\begin{array}{cc}
T=0.01
\end{array}$ & & $\#K=3$ & & & & & $\#K=5$ & & & \\
\hline
 Static   & $n_1$ & $n_2$ & $n_3$ & $nStd$ & $n_1$ & $n_2$ & $n_3$ & $n_4$ & $n_4$ & $nStd$ \\
 $s=0.5$  & 1.061 &	-0.794 &	1.756 &	0.22-0.91 &
 1.061 &	-0.084 &	0.1684 &	-0.604 &	1.836 &	0.14-0.67\\
 $s=0.55$  & 1.061 &	-0.745 &	1.683 &	0.21-0.88 & 1.061 &-0.070 & 	0.164 &
 	-0.570 &	1.765 &	0.13-0.65\\\hline
	
 $VM1$ &  & & & & & & & & & \\
 $x'=-0.03$ & 1.061 &	0.024&	0.487&	0.104 & 1.061  & 0.158&	0.115 &	-0.192&	0.840&	0.068\\	
$x'=-0.01$ & 1.061 &0.023&	0.535&	0.163& 	1.061 &	0.170&	0.111&	-0.295&	0.987&	0.110\\
$x'=0.00$ & 1.061 &	0.002&	0.582&	0.226 & 1.061 &	0.175&	0.114&	-0.350&	1.058&	0.155\\
$x'=0.01$ & 1.061 &	-0.028&	0.638&	0.326 & 1.061 &	0.178&	0.121&	-0.407&	1.127&	0.226\\
$x'=0.03$ & 1.061 &	-0.098&	0.755&	0.652 &1.061 & 0.176&	0.145&	-0.520&	1.254&	0.465\\\hline

$VM2$ &  & & & & & & & & & \\
 $x'=-0.03$ & 1.146&	-0.109&	0.538&	0.103 & 1.102&	0.092&	0.150&	-0.205&	0.844&	0.068\\
$x'=-0.01$ &	1.128&	-0.101&	0.598&	0.161 & 1.097&	0.094&	0.166&	-0.318&	0.994&	0.109\\
$x'=0.00$ &1.124&	-0.126&	0.653&	0.223 & 1.098&	0.096&	0.176&	-0.377&	1.066&	0.153\\
$x'=0.01$ & 1.126&	-0.165&	0.716&	0.322 & 1.098&	0.096&	0.187&	-0.435&	1.134&	0.225\\
	$x'=0.03$ & 1.137&	-0.264&	0.850&	0.647 & 1.098&	0.0920&	0.213&	-0.546&	1.260&	0.464\\
		\hline\hline
$\begin{array}{cc}
T=0.1
\end{array}$ & &  & & & & &  & & & \\
\hline
Static & & & &0.24-0.36 & & & & & & 0.20-0.29
\\\hline
 $VM1$ &  & & & & & & & & & \\
$x'=-0.03$ & 1.061 &-0.876&	1.764&	0.239 & 1.061 &	0.121&	0.400&	-1.304&	1.958&	0.188\\
$x'=0.00$ &1.061 &	-0.705&	1.556&	0.280 & 1.061   &	0.133&	0.345&	-1.140&	1.815&	0.217\\
$x'=0.03$ &1.061    &	-0.596&	1.422&	0.340& 1.061   &0.142&	0.311&	-1.044&	1.736&	0.262\\\hline
$VM2$ &  & & & & & & & & & \\
 $x'=-0.03$ & 1.371&	-1.470&	2.053&	0.238 & 1.113&	0.013&	0.471&	-1.322&	1.961&	0.188\\
$x'=0.00$ &	1.317&	-1.208&	1.807&	0.279& 1.109&	0.031&	0.415&	-1.159&	1.819&	0.217\\
$x'=0.03$ & 1.282&	-1.040&	1.650&	0.339 & 1.1067&	0.042&	0.381&	-1.064&	1.740&	0.261\\
\hline\hline
$\begin{array}{cc}
T=0.5
\end{array}$ & &  & & & & &  & & & \\
\hline
Static & & & &0.72-0.78 & & & & & & 0.63-0.68
\\\hline
 $VM1$ &  & & & & & & & & & \\
$x'=-0.03$ & 1.061 &-5.943&	7.964&	0.567 &1.061 &-0.206&	2.353&	-9.077&	9.315&	0.507 \\
$x'=0.00$ &1.061 &	-5.298&	7.223&	0.596 &	1.061 &	-0.162&	2.107&	-8.155&	8.498&	0.530\\	
$x'=0.03$ & 1.061 &	-4.745&	6.582&	0.629& 1.061 &-0.125&	1.897&	-7.360&	7.789&	0.556
	\\
		\hline

$VM2$ &  & & & & & & & & & \\
 $x'=-0.03$ & 2.657&	-9.101&	9.532&	0.566 & 1.201&	-0.513&	2.564&	-9.130&	9.324&	0.507\\
 $x'=0.00$ & 2.483&	-8.126&	8.634&	0.594& 1.189&	-0.443&	2.302&	-8.204&	8.506&	0.530\\
 $x'=0.03$ & 2.335&	-7.289&7	7.858&	0.628 & 1.178&	-0.384&	2.077&	-7.406&	7.797&	0.556 \\\hline
	\end{tabular}
}
\label{StVar1_1.2}
 \end{table}

\begin{table}

\caption{\small KoBoL 
close to BM, with an almost symmetric jump density, and no ``drift": $m_2=0.1$, $\nu=1.95, \lm=-11, \lp=12$,  $\mu=0$, $\sg=0$, 
$c=0.0029$ (rounded), $\be=-1$, $r=0.100$ (rounded).
}
{\tiny
\begin{tabular}{c|ccc|c||ccccc|c}
\hline\hline
$\begin{array}{cc}
T=0.01
\end{array}$ & & $\#K=3$ & & & & & $\#K=5$ & & & \\
\hline
 Static   & $n_1$ & $n_2$ & $n_3$ & $nStd$ & $n_1$ & $n_2$ & $n_3$ & $n_4$ & $n_4$ & $nStd$ \\
 $s=0.5$  & 1.061 &	-0.794 &	1.756 &	0.20-	0.78 & 
 1.061 &	-0.084 &	0.168 &	-0.604 &	1.836 &	0.16-	0.36\\
 $s=0.55$  & 1.061 &	-0.745 &	1.683 &	0.19-0.75 & 1.061 &	-0.070 &	0.164 &	-0.570 &	
 1.766 &	0.15-	0.34\\\hline
 $VM1$ &  & & & & & & & & & \\
 $x'=-0.03$ & 1.061 &	0.117&	0.226 &	0.028 & 1.061 &	0.168 &	0.073 &	0.084 &	0.276 &	0.015\\
	$x'=-0.01$ & 1.061 &	0.143 &	0.182 &	0.037 & 1.061 &	0.176&	0.060&	0.079&	0.354 &	0.022\\
$x'=0.00$ &	1.061 &	0.151&	0.173&	0.046 & 1.061 &	0.180&	0.054&	0.072&	0.412&	0.029\\
$x'=0.01$ &      1.061 &	0.155 &	0.176 &	0.062 & 1.061 &	0.184&	0.048&	0.061&	0.475&	0.039\\
$x'=0.03$ &      1.061 &	0.152&	0.228&	0.129  & 1.061 &	0.190&	0.040&	0.004&	0.614&	0.081\\\hline

$VM2$ &  & & & & & & & & & \\
 $x'=-0.03$ & 1.114&	0.028&	0.267&	0.027 & 1.099&  0.101&	0.111&	0.072&	0.280&	0.014\\
$x'=-0.01$ &	1.105&	0.059&	0.228 &	0.034 & 1.098&	0.103&	0.110&	0.059&	0.361&	0.019\\
$x'=0.00$ & 1.103&	0.066&	0.224 &	0.042 & 1.097 &	0.104 &	0.110&	0.048&	0.421&	0.024\\
$x'=0.01$ & 1.102&	0.068&	0.232&	0.056 & 1.097&	0.105 &	0.110&	0.031&	0.486&	0.032\\
	$x'=0.03$ & 1.101&	0.060&	0.291&	0.118 & 1.095&	0.107&	0.113&	-0.035&	0.629&	0.068\\
		\hline\hline
$\begin{array}{cc}
T=0.1
\end{array}$ & &  & & & & &  & & & \\
\hline
Static & & & &0.17-0.27 & & & & & & 0.15-0.25
\\\hline
 $VM1$ &  & & & & & & & & & \\
$x'=-0.03$ & 1.061 & -0.655&	1.444&	0.148 & 1.061   &0.138&	0.298&	-0.875&	1.455&	0.103\\
$x'=0.00$ &1.061 &	-0.466&	1.193&	0.158 & 1.061 &	0.151&	0.237&	-0.657&	1.230&	0.106\\
$x'=0.03$ &1.061    &-0.327&	1.000&	0.173 & 1.061  &	0.160&	0.192&	-0.496&	1.056&	0.112\\\hline

$VM2$ &  & & & & & & & & & \\
 $x'=-0.03$ & 1.313&	-1.141&	1.682&	0.147 & 1.108& 	0.040&	0.364&	-0.892&	1.458&	0.103\\
$x'=0.00$ &1.258&	-0.856&	1.388&	0.157 & 1.105&	0.058&	0.300&	-0.674&	1.233&	0.106\\
$x'=0.03$ & 1.220&	-0.646&	1.164 &	0.172 & 1.102&	0.070 &	0.254&	-0.513&	1.060&	0.112\\
\hline\hline
$\begin{array}{cc}
T=0.5
\end{array}$ & &  & & & & &  & & & \\
\hline
Static & & & &0.65-0.68 & & & & & & 0.55-0.57
\\\hline
 $VM1$ &  & & & & & & & & & \\
$x'=-0.03$ & 1.061 &-5.609&	7.560&	0.495 &1.061 &	-0.180&	2.202&	-8.471&	8.741&	0.436 \\
$x'=0.00$ &1.061 &	-4.956&	6.804&	0.514 &	1.061 &	-0.136&	1.953&	-7.526&	7.897&	0.449\\	
	$x'=0.03$ & 1.061 &	-4.391&	6.144&	0.536 &1.061 &-0.098&	1.734&	-6.705&	7.157&	0.465\\
		\hline
		
$VM2$ &  & & & & & & & & & \\
 $x'=-0.03$ & 2.571&	-8.600&	9.046&	0.494 & 1.194&	-0.473&	2.404&	-8.521&	8.750&	0.436\\
 $x'=0.00$ & 2.397&-7.614&	8.131&	0.513 & 1.182&	-0.403 &	2.137&	-7.573&	7.905&	0.449\\
 $x'=0.03$ & 2.247&	-6.760&	7.333&	0.535 & 1.171&	-0.342&	1.907&	-6.748&	7.165&	0.465 \\\hline
	\end{tabular}
}
\label{StVar1_1.95}
 \end{table}

\begin{table}
\caption{\small BM with an embedded KoBoL
close to NIG, an almost symmetric jump density: $m_2=0.15$, $\sg^2=0.05$,
$\mu=0.05$, $\nu=1.2, \lm=-12, \lp=12.5$,  
$c=0.31865$ (rounded), $\be=-0.5$, $r=5.0*10^{-6}$ (rounded).
 }
{\tiny
\begin{tabular}{c|ccc|c||ccccc|c}
\hline\hline
$\begin{array}{cc}
T=0.01
\end{array}$ & & $\#K=3$ & & & & & $\#K=5$ & & & \\
\hline
 Static   & $n_1$ & $n_2$ & $n_3$ & $nStd$ & $n_1$ & $n_2$ & $n_3$ & $n_4$ & $n_4$ & $nStd$ \\
 $s=0.5$ & 1.030 &	-1.520	& 2.232&	0.34-	1.12 &
 1.030 &	-0.326 &	0.121 &	-0.968 &	1.983 &	0.27-	0.70\\
 $s=0.55$  & 1.030 &	-1.282 &	1.921 &	0.29-	0.97 & 1.030&
 	-0.267 &	0.101 &	-0.821 &	1.717 &	0.23-0.59\\\hline
	
 $VM1$ &  & & & & & & & & & \\
 $x'=-0.03$ & 1.030 &	0.005&	0.099&	0.020 & 1.030 & 0.039&	0.022&	-0.019 &	0.129&	0.012\\	
$x'=-0.01$ & 1.030 &0.011&	0.098&0.027& 1.030 & 0.040&	0.020&	-0.026 &	0.146&	0.016\\
$x'=0.00$ & 1.030 &	0.012&	0.101&	0.033 & 1.030 &	0.041&	0.019&	-0.032&	0.156&	0.019\\
$x'=0.01$ & 1.030 &	0.011&	0.105&	0.041 & 1.030&0.042&	0.019&	-0.040&	0.167&	0.025\\
$x'=0.03$ & 1.030&	0.005&	0.119&	0.072 & 1.030&	0.043&	0.020&	-0.057&	0.189&	0.044\\\hline

$VM2$ &  & & & & & & & & & \\
 $x'=-0.03$ & 1.049&	-0.027&	0.114&	0.020 & 1.040&	0.022&	0.032&	-0.021&	0.130&	0.012\\
$x'=-0.01$ &	1.046&	-0.019&	0.113&	0.026   & 1.039&	0.023&	0.032&	-0.030&	0.147&	0.015\\	
$x'=0.00$ & 1.045&	-0.018&	0.116&	0.032 & 1.039 &	0.023&	0.032&	-0.037&	0.157&	0.019\\
$x'=0.01$ & 1.045&	-0.019&	0.122&	0.040 & 1.039 &	0.023&	0.033&	-0.046&	0.169&	0.024\\
	$x'=0.03$ & 1.045&	-0.026&	0.137&	0.071 & 1.039 &	0.023&	0.036&	-0.064&	0.191&	0.044\\
		\hline\hline
$\begin{array}{cc}
T=0.1
\end{array}$ & &  & & & & &  & & & \\
\hline
Static & & & &0.30-0.45 & & & & & & 0.25-0.36
\\\hline
 $VM1$ &  & & & & & & & & & \\
$x'=-0.03$ & 1.030 &-0.237&	0.437&	0.048 & 1.030 &	0.028&	0.098&	-0.323&	0.440&	0.036\\
$x'=0.00$ &1.030 &	-0.194&	0.383&	0.052 & 1.030 &	0.031&	0.084&	-0.274&	0.395&	0.039\\
$x'=0.03$ &1.030&-0.160&	0.341&	0.059 & 1.030 &	0.033&	0.073&	-0.237&	0.361&	0.043\\\hline

$VM2$ &  & & & & & & & & & \\
 $x'=-0.03$ & 1.111&	-0.394&	0.513&	0.048& 1.043&0.001&	0.116&	-0.327&	0.441&	0.036\\
$x'=0.00$ &	1.099&	-0.328&	0.449&	0.052& 1.042&	0.005&	0.101&	-0.279&	0.396&	0.039\\
$x'=0.03$ & 1.088&	-0.277&	0.400&1	0.058 & 1.041&	0.008&	0.090&	-0.242&	0.361&	0.043\\
\hline\hline
$\begin{array}{cc}
T=0.5
\end{array}$ & &  & & & & &  & & & \\
\hline
Static & & & &0.22-0.32 & & & & & & 0.21-0.30
\\\hline
 $VM1$ &  & & & & & & & & & \\
$x'=-0.03$ & 1.030 &	-1.467&	1.886&	0.109 &1.030 &	-0.050&	0.558&	-2.136&	2.097&	0.096 \\
$x'=0.00$ &1.030 &-1.344&	1.749&	0.113&	1.030 &	-0.042&	0.512&	-1.964&	1.948&	0.098\\	
	$x'=0.03$ & 1.030 &	-1.233&	1.624&	0.116  &	1.030 &	-0.035&	0.471&	-1.808&	1.813&	0.101\\
		\hline

$VM2$ &  & & & & & & & & & \\
 $x'=-0.03$ & 1.426&	-2.248&	2.272&	0.109 & 1.064 &	-0.124&	0.609&	-2.149&	2.099&	0.096\\
 $x'=0.00$ & 1.393&	-2.061&	2.104&	0.112 & 1.062&	-0.111&	0.560&	-1.976&	1.950&	0.098\\
 $x'=0.03$ & 1.363&1	-1.894&	1.953&	0.116  & 1.060&	-0.100&	0.516&	-1.820&	1.815&	0.101\\\hline
	\end{tabular}
}
\label{BMKBL_1.2}
 \end{table}
 
 \begin{table}
\caption{\small BM with an embedded KoBoL model
close to BM, an almost symmetric jump density: $m_2=0.15$, $\sg^2=0.05$,
$\mu=0.05$, $\nu=1.95, \lm=-12, \lp=12.5$,  
$c=0.0029$ (rounded), $\be=-0.5$, $r=1.8*10^{-7}$ (rounded).
 }
{\tiny
\begin{tabular}{c|ccc|c||ccccc|c}
\hline\hline
$\begin{array}{cc}
T=0.01
\end{array}$ & & $\#K=3$ & & & & & $\#K=5$ & & & \\
\hline
 Static   & $n_1$ & $n_2$ & $n_3$ & $nStd$ & $n_1$ & $n_2$ & $n_3$ & $n_4$ & $n_4$ & $nStd$ \\
 $s=0.5$  & 1.030&	-1.520 &	2.232&	0.32-	1.13 &
 1.030 &	-0.326 &	0.121 &	-0.968 &	1.983 &	0.28-	0.57\\
 $s=0.55$  & 1.030 &	-1.282 &	1.921 &	0.27-	0.96 & 1.030&
 	-0.267 &	0.101 &	-0.821 &	1.717 &	0.24-0.48\\\hline
	
 $VM1$ &  & & & & & & & & & \\
 $x'=-0.03$ & 1.030 &	0.023&	0.061&	0.008  & 1.030
 & 0.040&	0.015&	0.017&	0.054&	0.003\\	
 
$x'=-0.01$ & 1.030 &0.031&	0.048&	0.008& 	1.039&	0.024&	0.025&	0.014&	0.055&	0.003\\
$x'=0.00$ & 1.030 &	0.033&	0.043&	0.009 & 1.030 &	0.042&	0.011&	0.020&	0.057&	0.005\\
$x'=0.01$ & 1.030 &	0.035&	0.040&	0.011 & 1.030&	0.043&	0.010&	0.020&	0.062&	0.006\\
$x'=0.03$ & 1.030&	0.037&	0.037 &	0.017 &1.030 &	0.044&	0.008&	0.017&	0.080&	0.011\\\hline

$VM2$ &  & & & & & & & & & \\
 $x'=-0.03$ & 1.044&	-0.001&	0.072&	0.007&  1.039 &	0.024 &	0.025 &	0.014 &	0.055 &	0.003\\
$x'=-0.01$ &	1.041&	0.009&	0.059&	0.008 & 1.039 &	0.024&	0.024&	0.015&	0.055&	0.003\\	
$x'=0.00$ & 1.041&	0.012&	0.055&	0.009 & 1.039&	0.0243&	0.024&	0.015 &	0.058&	0.004\\
$x'=0.01$ & 1.040&	0.014&	0.052&	0.010  &	1.039 &	0.025 &	0.024&	0.014&	0.064&	0.005\\
	$x'=0.03$ & 	1.040&	0.016&	0.051&	0.015 & 1.039 &	0.025&	0.024&	0.010&	0.082&	0.008\\
		\hline\hline
$\begin{array}{cc}
T=0.1
\end{array}$ & &  & & & & &  & & & \\
\hline
Static & & & &0.30-0.44& & & & & & 0.28-0.38
\\\hline
 $VM1$ &  & & & & & & & & & \\
$x'=-0.03$ & 1.030 &-0.212&	0.403&	0.040& 1.030 &	0.030&	0.087&	-0.278&	0.392&	0.028\\
$x'=0.00$ &1.030 &	-0.166&	0.345&	0.042 & 1.030 &0.033&	0.072&	-0.224&	0.340&	0.029\\
$x'=0.03$ &1.030&-0.130&	0.298&	0.044 & 1.030 &	0.035&	0.060&	-0.181&	0.297&	0.030\\\hline

$VM2$ &  & & & & & & & & & \\
 $x'=-0.03$ & 1.1047 &	-0.356 &	0.473&	0.039 & 1.042&	0.004&	0.104&	-0.282&	0.393&	0.028\\
$x'=0.00$ &	1.092&	-0.287&	0.405 &	0.041 & 1.041&	0.008&	0.088&	-0.228&	0.341&	0.029\\
$x'=0.03$ & 1.082&	-0.232&	0.350&	0.044 & 1.041&	0.012&	0.076&	-0.185&	0.298&	0.030\\
\hline\hline
$\begin{array}{cc}
T=0.5
\end{array}$ & &  & & & & &  & & & \\
\hline
Static & & & &0.22-0.32 & & & & & & 0.21-0.29
\\\hline
 $VM1$ &  & & & & & & & & & \\
$x'=-0.03$ & 1.030 &	-1.441&	1.856&	0.105 &1.030 &	-0.048&	0.547&	-2.092&	2.057&	0.092 \\
$x'=0.00$ &1.030 &	-1.317&	1.717&	0.108 &	1.030 &	-0.040&	0.500&	-1.918&	1.906&	0.094\\	
	$x'=0.03$ & 1.030 &	-1.205&	1.591&	0.111 &	1.030 &	-0.033&	0.459&	-1.760&	1.769&	0.096\\
		\hline

$VM2$ &  & & & & & & & & & \\
 $x'=-0.03$ & 1.419&	-2.208&	2.235&	0.105 & 1.063&	-0.121&	0.597&	-2.105&	2.059&	0.092\\
 $x'=0.00$ & 1.386&	-2.021&	2.066&	0.108 &  1.061&	-0.108&	0.548&	-1.929&	1.908&	0.094\\
 $x'=0.03$ & 1.356&	-1.852&	1.913&	0.111 & 1.059&	-0.096&	0.503&	-1.771&	1.771&	0.096\\\hline
	\end{tabular}
}
\label{BMKBL_1.95}
 \end{table}

\begin{table}
\caption{\small KoBoL 
close to NIG, with a sizably asymmetric jump density, and positive ``drift": $m_2=0.1$, $\nu=1.2, \lm=-12, \lp=8$,  $\mu=0.15$, $\sg=0$, 
$c=0.26312$ (rounded), $\be=4$, $r=0.0035$ (rounded).
}
{\tiny
\begin{tabular}{c|ccc|c||ccccc|c}
\hline\hline
$\begin{array}{cc}
T=0.01
\end{array}$ & & $\#K=3$ & & & & & $\#K=5$ & & & \\
\hline
 Static   & $n_1$ & $n_2$ & $n_3$ & $nStd$ & $n_1$ & $n_2$ & $n_3$ & $n_4$ & $n_4$ & $nStd$ \\
 $s=0.5$ & 0.961 &	-0.880 &	-0.071 &	0.47-	2.50 
  & 0.961 &	-0.883 &	0.005 &	0.001 &	-0.077	&0.45-2.36 \\
 $s=0.55$  & 0.961 &	-0.906 &	-0.041 &	0.48-2.53 & 0.961 &	-0.907 &	0.004 &	0.000 &	-0.045 &	0.46-2.44\\\hline

 $VM1$ &  & & & & & & & & & \\
 $x'=-0.03$ & 0.961 &	-0.088&	-0.396&	0.088
  & 0.961 &	-0.196&	-0.087&	0.075&	-0.482&	0.040\\

	$x'=-0.01$ & 0.961 &	-0.088&	-0.427&	0.134 & 0.961 &	-0.215&	-0.067&	0.123&	-0.550&	0.066\\
$x'=0.00$ &	0.961 &	-0.072&	-0.460&	0.186& 0.961&	-0.223&	-0.062&	0.149&	-0.582&	0.094\\
$x'=0.01$ & 0.961 &	-0.048&	-0.498&	0.267 & 0.961 &	-0.226&	-0.064&	0.178&	-0.612&	0.136\\
$x'=0.03$ & 0.961 &	0.003&	-0.574&	0.509 & 0.961 &	-0.227&	-0.076&	0.233&	-0.667&	0.265\\\hline

$VM2$ &  & & & & & & & & & \\
 $x'=-0.03$ & 0.873&	0.046&	-0.445&	0.086 & 0.909&	-0.113&	-0.129&	0.092&	-0.487&	0.039\\
$x'=-0.01$ &	0.889&	0.045&	-0.493&	0.130 & 0.912&	-0.118&	-0.138&	0.153&	-0.559&	0.061\\
$x'=0.00$ & 0.893&	0.067&	-0.533&	0.179 & 0.914&	-0.121&	-0.142&	0.182 &	-0.591&	0.088\\
$x'=0.01$ & 0.892&1	0.098&	-0.579&	0.259& 0.916&	-0.123&	-0.147&	0.211&	-0.621&	0.129\\
	$x'=0.03$ & 0.882&	0.175&	-0.670&	0.497 & 0.916&	-0.120&	-0.162&	0.264&	-0.675&	0.258\\
		\hline\hline
$\begin{array}{cc}
T=0.1
\end{array}$ & &  & & & & &  & & & \\
\hline
Static & & & &0.75-1.16 & & & & & & 0.70-1.12
\\\hline
 $VM1$ &  & & & & & & & & & \\
$x'=-0.03$ & 0.961 &0.341&	-0.985&	0.180 & 0.961 & -0.198&	-0.177&	0.468&	-0.849&	0.108\\
$x'=0.00$ &0.961 &	0.278&	-0.912&	0.210 & 0.961 &	-0.204&	-0.159&	0.429&	-0.820&	0.124\\
$x'=0.03$ &0.961 &	0.240&	-0.870&	0.254 &0.961 & -0.208&-0.148&	0.410&	-0.811&	0.147\\\hline

$VM2$ &  & & & & & & & & & \\
 $x'=-0.03$ & 0.769 &	0.706&	-1.1596&	0.178 & 0.908&	-0.087&	-0.250&	0.487&	-0.852&	0.108\\
$x'=0.00$ &	0.791&	0.609&	-1.075&	0.208&0.909&	-0.093&	-0.234&	0.449&	-0.824&	0.123\\
$x'=0.03$ & 0.806&	0.551&	-1.028&	0.251 & 0.910&	-0.097&	-0.225&	0.432&	-0.815&	0.147\\
\hline\hline
$\begin{array}{cc}
T=0.5
\end{array}$ & &  & & & & &  & & & \\
\hline
Static & & & &0.73-0.90 & & & & & & 0.70-0.87
\\\hline
 $VM1$ &  & & & & & & & & & \\
$x'=-0.03$ & 0.961&	1.340&	-2.197&	0.273 & 0.961 & -0.150&	-0.478&	1.655&	-1.952&	0.188\\
$x'=0.00$ &0.961 &	1.246&	-2.092&	0.287 &	0.961 &	-0.156&	-0.449&	1.555&	-1.866&	0.196\\	
	$x'=0.03$ & 0.961 &	1.1619&	-1.996&0.303 & 0.961&-0.161&	-0.423&	1.464&	-1.788&	0.205\\\hline

$VM2$ &  & & & & & & & & & \\
 $x'=-0.03$ & 0.515&	2.216&	-2.628&	0.271&  0.894&	-0.005&	-0.577&	1.680&	-1.956&	0.188\\
 $x'=0.00$ & 0.542&	2.074&	-2.501&	0.285 & 0.896&	-0.013&	-0.547&	1.579&	-1.870&	0.196\\
 $x'=0.03$ & 0.566&	1.947&	-2.387&	0.301 & 0.897&	-0.021&-0.519&	1.487&	-1.792&	0.205\\\hline
		
	\end{tabular}
}
 \label{AsymKBL_1.2}

 \end{table}

 \begin{table}

\caption{\small KoBoL 
close to BM, with a sizably asymmetric jump density, and positive ``drift": $m_2=0.1$, $\nu=1.95, \lm=-12, \lp=8$,  $\mu=0.15$, $\sg=0$, 
$c=0.00288$ (rounded), $\be=4$, $r=0.000$ (rounded).
}
{\tiny
\begin{tabular}{c|ccc|c||ccccc|c}
\hline\hline
$\begin{array}{cc}
T=0.01
\end{array}$ & & $\#K=3$ & & & & & $\#K=5$ & & & \\
\hline
 Static   & $n_1$ & $n_2$ & $n_3$ & $nStd$ & $n_1$ & $n_2$ & $n_3$ & $n_4$ & $n_4$ & $nStd$ \\
 $s=0.5$  & 0.961&	-0.867&	-0.080&	0.50-	1.12 & 
 0.961&	-0.870&	0.006&	0.001&	-0.086&	0.46-1.07\\
 $s=0.55$  & 0.961 &	-0.896 &	-0.049&	0.51-	1.15 & 0.961&	-0.898&	0.004&	0.001&-0.054&	0.48-	1.12\\\hline

 $VM1$ &  & & & & & & & & & \\
 $x'=-0.03$ & 0.961 &	-0.163&	-0.192&	0.032 & 0.961 &	-0.207&	-0.059& -0.081&	-0.194&	0.016\\
	$x'=-0.01$ & 0.961 &-0.186&	-0.157&	0.043 & 0.961 &	-0.217&	-0.042&	-0.082&	-0.242&	0.025\\
$x'=0.00$ &	0.961 &	-0.193&	-0.152&	0.054 & 0.961&-0.223&	-0.033&	-0.082&	-0.270&	0.033\\
$x'=0.01$ & 0.961 &	-0.197&	-0.158&	0.071 & 0.961 &	-0.228&	-0.024&	-0.078&	-0.298&	0.044\\
$x'=0.03$ & 0.961 &	-0.195&	-0.208&	0.140 & 0.961 &	-0.236&	-0.010&	-0.047&	-0.367&	0.086\\\hline

$VM2$ &  & & & & & & & & & \\
 $x'=-0.03$ & 0.899&	-0.058&	-0.239&	0.029 & 	0.915 &	-0.120&	-0.107&	-0.065&	-0.199&	0.012\\
$x'=-0.01$ &	0.906&	-0.082&	-0.214&	0.036 & 0.913&	-0.123&	-0.106&	-0.057&	-0.251&	0.015\\
$x'=0.00$ & 0.910&	-0.089 &	-0.214&	0.044 & 0.914&	-0.124&	-0.105&	-0.051&	-0.281&	0.0181\\
$x'=0.01$ & 0.910&	-0.087&	-0.227&	0.057 & 0.915&	-0.126&	-0.105&	-0.041&	-0.312&	0.023\\
	$x'=0.03$ & 0.911&	-0.079&	-0.284&	0.113&	0.916&	-0.128&	-0.106&	0.002&	-0.385&	0.046\\
		\hline\hline
$\begin{array}{cc}
T=0.1
\end{array}$ & &  & & & & &  & & & \\
\hline
Static & & & &0.73-1.08 & & & & & & 0.70-1.04
\\\hline
 $VM1$ &  & & & & & & & & & \\
$x'=-0.03$ & 0.961 &	0.232&	-0.825&	0.124 & 0.961 &	-0.205&	-0.138&	0.300&	-0.651&	0.068\\
$x'=0.00$ &0.961 &	0.147&	-0.714&	0.132 & 0.961 &	-0.211&	-0.114&	0.224&	-0.575&	0.069\\
$x'=0.03$ &0.961 &	0.080&	-0.624&	0.144 & 0.961 &	-0.216&	-0.095&	0.165&	-0.515&	0.073\\\hline

$VM2$ &  & & & & & & & & & \\
 $x'=-0.03$ & 0.797& 	0.547&	-0.978&	0.122 & 	0.910&	-0.097&	-0.209&	0.312&	-0.654&	0.068\\
$x'=0.00$ &	0.822&	0.420&	-0.849&	0.130 & 0.911&-0.104&	-0.186&	0.243&	-0.579&	0.069\\
$x'=0.03$ & 0.841&	0.320&	-0.746&	0.142 & 0.912&	-0.109&	-0.169&	0.185&	-0.519&	0.072\\
\hline\hline
$\begin{array}{cc}
T=0.5
\end{array}$ & &  & & & & &  & & & \\
\hline
Static & & & &0.72-0.88 & & & & & & 0.70-0.85
\\\hline
 $VM1$ &  & & & & & & & & & \\
$x'=-0.03$ & 0.961&	1.292&	-2.137&	0.251 &	0.961 &	-0.153&	-0.460&	1.582&	-1.881&	0.172\\
$x'=0.00$ &0.961 &	1.187&	-2.018&	0.260 &	0.961 &	-0.160&	-0.428&	1.467&	-1.780&	0.177\\	
	$x'=0.03$ & 0.961 &	1.091&	-1.908&	0.271 & 0.961&	-0.165&	-0.398&	1.3560&	-1.686&	0.182\\
		\hline
$VM2$ &  & & & & & & & & & \\
 $x'=-0.03$ &0.527&	2.145&	-2.558&	0.249 & 0.895&	-0.009&	-0.559&	1.607&	-1.885&	0.172\\
 $x'=0.00$ & 0.556&	1.988&	-2.415&	0.258 & 0.897&	-0.018&	-0.525&	1.491&	-1.784&	0.177\\
 $x'=0.03$ & 0.582&	1.843&	-2.282&	0.269 & 0.898&	-0.027&	-0.494&	1.384&	-1.690&	0.182 \\\hline
	\end{tabular}
}
 \label{AsymKBL_1.95}
 \end{table}
 
\subsection{Tables for the Wiener-Hopf factors (rounded) and prices of barrier options}\label{tablesWHFbarrier}

 \begin{table}

\caption{\small Wiener-Hopf factors $\phi^\pm_q(\xi)$, $q=1.1$, $\xi=\sinh(-0.15i+y)$, $|y|\le 15$,
for KoBoL 
close to NIG, with an almost symmetric jump density, and no ``drift": $m_2=0.1$, $\nu=1.2, \lm=-11, \lp=12$,  $\mu=0$, $\sg=0$, 
$c=0.3026$ (rounded)}
{\tiny
\begin{tabular}{c|ccccc}
\hline\hline
$y$ & -15 &	-10 &	-5 &	0 &	5  \\
\hline
$\Re\phipq(\xi)$& -3.573612313E-07 &	-4.67991818674E-05 &	-3.8732615824580E-03
&	1.03663360053561&	-3.8732615824580E-03\\
$\Im\phipq(\xi)$& -3.3276442219E-06 & 4.358762795728E-04 & -5.8173831101994E-02 & 2E-19 & 5.8173831101994E-02\\\hline
$\Re\phimq(\xi)$& 7.6320768431674E-07& 1.0003700599534E-04& 1.68715279339566E-02 &
0.972251835483482 & 1.68715279339566E-02\\
$\Im\phimq(\xi)$& 4.06659819933492E-06& 5.3263366781895E-04&6.88181160078408E-02& -6E-19 &-6.88181160078408E-02\\
\hline
Err15 & 1.14E-16 &	9.86E-17 &	2.71E-16 &	2.22E-16 &	3.19E-16 \\
Err10 & 6.31E-11 &	5.57E-11	&4.99E-11	&1.79E-12 &	4.99E-11\\\hline
\end{tabular}
}
\begin{flushleft}{\tiny
$\phipq(\xi)$ is calculated applying the sinh-acceleration with the parameters $\om_1=0$, $b=	3.37$, $\om=0.79$
to the integral in \eqref{phip1}.\\
$\phimq(\xi)$ is calculated applying the sinh-acceleration with the parameters $\om_1=-0.26$, $b=	3.96$, $\om=-0.86$
to the integral in \eqref{phim1}.\\
The parameters are chosen using the general recommendations for the sinh-acceleration method.\\
The mesh and number of points are chosen using the general recommendations for the given error tolerance\\
If $\eps=10^{-15}$ 
($\ze_-=0.0969$, $N_-=385$ for $\phipq(\xi)$, and $\ze_+=0.107$, $N_+=350$ for $\phimq(\xi)$); if
$\eps=10^{-10}$ 
($\ze_-=0.1410$, $N_-=175$ for $\phipq(\xi)$, and $\ze_+=0.1559$, $N_+=159$ for $\phimq(\xi)$).\\
The last two lines are absolute differences between $\phipq(\xi)$ calculated using \eqref{phip1} and 
$\phipq(\xi)$ calculated using \eqref{phim1} and the Wiener-Hopf identity.
CPU time for the calculation at 30 points, the average over 10k runs: 1.63 and	1.48 msec if $\eps=10^{-15}$,
and 1.15 and	1.06 msec if $\eps=10^{-10}$.
}
\end{flushleft}
\label{TableWHF}
 \end{table}
 
 \begin{table}
 \caption{\small Pricing no-touch and first-touch options (down case) using Gaver-Wynn-Rho method with $M=8$ and sinh-acceleration.
 Benchmark prices  
 and errors for different choices of the parameters of the sinh-acceleration are rounded. The underlying: KoBoL model as in Table 2.
 Parameters $m_2=0.1$, $\nu=1.2, \lm=-11, \lp=12$,  $\mu=0$, $\sg=0$, 
$c=0.3026$ (rounded), $\be=-1$, $r=0.100$ (rounded). Time to maturity $T=0.1$, barrier $H=1$, $S$  the spot.
 }
 
 {\tiny
\begin{tabular}{c|ccccccc}
\hline\hline
$\ln (S/H)$ & 0.01 &0.03 &0.05 &	0.07 &	0.09 &0.11 & 0.13  \\
\hline
$V_{nt}$ & 0.159611796 &	0.349192704 &	0.50261129 &	0.628008257 &	0.727388428 &	0.803409906 &	0.859701515\\
$V_{ft}$ & 0.837533576 &	0.645372195 &	0.490377871 &	0.363952758 &	0.263898333 &	0.187440524 &	0.130864377\\\hline
$(A)$& 
1.01E-08 &	-4.65E-10 &	-1.45E-08 &	6.09E-09 &	7.80E-09 &	-6.29E-09 &	3.80E-09 \\
& -6.32E-09	 & -2.76E-09 &	8.74E-10 &	-3.19E-09 &	-6.68E-09 &	-1.41E-08 &	-1.30E-09\\\hline
$(B)$& 2.08E-05 &	-7.47E-06 &	-2.68E-05 &	-3.36E-05	&-1.89E-04 &	8.85E-05	&5.24E-06\\
& -2.22E-05	 &7.70E-06 &	2.93E-05 &	4.94E-05 &	2.38E-04 &	-8.50E-05	&-5.71E-06\\\hline

\end{tabular}
}
\begin{flushleft}{\tiny
Curve used in the Fourier inversion $\cL_+:=\cL(0,1.1, 9.6)$. 
Curve used to calculate $\phimq(\xi), \xi\in \cL_+$: $\cL_-:=\cL(-1,-1.1, 8)$.  $N_\pm, \ze_\pm$ are chosen using
the universal simplified recommendation in \cite{SINHregular} for $\eps=10^{-10}$ (A) and 
$\eps=10^{-6}$ (B), with $\ze_\pm$ 30\% smaller than recommended. Recommendation based on 
$\eps=10^{-15}$ does not lead to a sizable improvement. Hence, the errors of GWR method are of the order of $10^{-10}-10^{-8}$.\\
(A):  $\ze_-=0.1559, N_-=139, \ze_+=0.1559, N_+=39$; (B): $\ze_-=0.2449, N_-=51$, $\ze_+=	0.2449$, $N_+=23$
\\ CPU time for calculation of $V_{nt}$ and $V_{ft}$ at seven spots: 8.28 msec (A) and 23.7 msec (B). Averages over
1000 runs.
}
\end{flushleft}
\label{TableNT}
 \end{table}
 
 \begin{table}
 \caption{\small Pricing down-and-out call options using Gaver-Wynn-Rho method with $M=8$ and sinh-acceleration.
 Benchmark prices  
 and errors for different choices of the parameters of the sinh-acceleration are rounded. The underlying: KoBoL model as in Table 2.
 Parameters $m_2=0.1$, $\nu=1.2, \lm=-11, \lp=12$,  $\mu=0$, $\sg=0$, 
$c=0.3026$ (rounded), $\be=-1$, $r=0.100$ (rounded). Strike $K=1.04, 1.1$, 
time to maturity $T=0.1, 0.5$, barrier $H=1$, $S$  the spot.
 }
 
 {\tiny
\begin{tabular}{c|ccccccc}
\hline\hline
$\ln (S/H)$ & 0.01 &0.03 &0.05 &	0.07 &	0.09 &0.11 & 0.13  \\
\hline
& &&$K=1.04$ &$T=0.1$&&& \\
$V_{call}$ & 0.013292606 &	0.029690703 &	0.045214455 &	0.0613371 &	0.078535419 &	0.096973726 &	0.116654007\\
$Err$ & 4.1E-10 &	3.4E-10 &	-7.7E-10 &	-3.5E-10 &	2.5E-09 &	1.3E-08 &	-2.2E-11\\\hline
&&&$K=1.04$ & $T=0.5$& && \\
$V_{call}$ & 0.021574742 &	0.047816254 &	0.071305316 &	0.093913695 &	0.11618586 &	0.138385615 &	0.1606790678\\
$Err$ & 6.7E-08 &	4.2E-08 &	-1.1E-08 &	-1.9E-08 &	-4.1E-10 &	-9.4E-10 &	-1.4E-09\\\hline
&&&$K=1.1$& $T=0.1$& && \\
$V_{call}$ & 0.006541799 &	0.014745612 &	0.023095024 &	0.032627133 &	0.043822976 &	0.056937082 &	0.07205657\\
$Err$ & 8.9E-11 &	1.3E-10 &	1.9E-10 &	7.35E-11 &	2.4E-10 &	-1.0E-08	& 4.43E-09\\\hline
&&&$K=1.1$& $T=0.5$& && \\
$V_{call}$ & 0.017419401&	0.03862725 &	0.057705653 &	0.076216112 &	0.094638672 &	0.113222686 &	0.1321308674\\
$Err$ & 2.3E-09 &	1.4E-09 &1.3E-09 &	1.0E-09 &	6.3E-10 &	4.74E-10	&3.5E-10\\\hline
\end{tabular}
}
\begin{flushleft}{\tiny
$K=1.04, T=0.1$: meshes: $\ze_\pm=0.1339$; numbers of terms: $N_-=24,58$ (for the realization of $\Pi_+$ and 1D integral in the case
$\ln(S/K)<0$), $N_+=30.47$ (for the iFT and 1D integral in the case $\log(S/K)\ge 0$), $N_-=114$ and $N_+=113$ (for calculation of the Wiener-Hopf factors; CPU time 47.8 msec (average over 1000 runs).\\
I$K=1.04, T=0.5$:  meshes  $\ze_\pm=0.1607$; $N_-=38,63$ (for the realization of $\Pi_+$ and 1D integral in the case
$\ln(S/K)<0$), $N_+=36, 61$ (for the iFT and 1D integral in the case $\log(S/K)\ge 0$), $N_-=176$ and $N_+=174$ (for calculation of the Wiener-Hopf factors; CPU time 74.5 msec (average over 1000 runs).
\\
$K=1.1, T=0.1$: meshes: $\ze_\pm=0.1753$; numbers of terms: $N_-=23,52$ (for the realization of $\Pi_+$ and 1D integral in the case
$\ln(S/K)<0$), $N_+=33, 51$ (for the iFT and 1D integral in the case $\log(S/K)\ge 0$), $N_-=124$ and $N_+=1237$ (for calculation of the Wiener-Hopf factors; CPU time 54.2 msec (average over 1000 runs).\\
$K=1.1, T=0.5$: meshes: $\ze_\pm=0.1378$; numbers of terms: $N_-=37, 73$ (for the realization of $\Pi_+$ and 1D integral in the case
$\ln(S/K)<0$), $N_+=48,72$ (for the iFT and 1D integral in the case $\log(S/K)\ge 0$), $N_-=205$ and $N_+=203$ (for calculation of the Wiener-Hopf factors; CPU time 87.6 msec (average over 1000 runs).
}
\end{flushleft}
\label{TableDOCall}

 \end{table}
 
 \begin{table}
 \caption{\small Prices $V_{call}(x), x=\ln(S/H)$ and normalized prices 
 $normV_{call}(x)=V_{call}(x)x^{-\nu/2}$ of the down-and-out call options  close to the barrier. 
 Parameters are the same as in Table 10.
 }
 
 {\tiny
\begin{tabular}{c|ccccccc}
\hline\hline
$\ln (S/H)$ & 0.0005 &	0.001 &	0.0015 &	0.002 &	0.0025 &	0.003 &	0.0035  \\
\hline
& &&$K=1.04$ &$T=0.1$&&& \\
$V_{call}$ & 0.001957852 &	0.002998404 &	0.003858716 &	0.004623158 &	0.005325463 &	0.005983272 &	0.006607213\\
$normV_{call}$ & 0.187239703 &	0.189186516 &	0.19089213 &	0.192451269 &	0.193906949 &	0.195283701 &	0.196597273\\\hline
&&&$K=1.04$ & $T=0.5$& && \\
$V_{call}$ & 0.003165882 &	0.004850658 &	0.006244885 &	0.007484685 &	0.00862443 &	0.009692532	& 0.010706088\\
$normV_{call}$ & 0.302769882	& 0.306055829 &	0.308936848 &	0.311569941 &	0.314026582	&0.316347549&0.318559081\\\hline
&&&$K=1.1$& $T=0.1$& && \\
$V_{call}$ & 0.000969788	& 0.001484144 &	0.001908772 &	0.002285612 &	0.002631453 &	0.002955087 &	0.0032618217\\
$normV_{call}$ & 0.092745937 &	0.093643172 &	0.094427692 &	0.095144674 &	0.095814566 &	0.096448958 &	0.0970553081\\\hline
&&&$K=1.1$& $T=0.5$& && \\
$V_{call}$ & 0.002557148 &	0.003917862 &	0.005043789 &	0.006044921 &	0.006965194 &	0.007827566 &	0.008645856\\
$normV_{call}$ & 0.24455353 &	0.247200368 &	0.249518145	&0.251635921 &	0.253611648 &	0.255478276 &	0.257256996\\\hline
\end{tabular}
}
\label{TableDOCallClose}
 \end{table}
\subsection{Tables for Section \ref{numer_din_example}}

 \begin{table}
 \caption{\small Hedging portfolios as bets: payoffs and probabilities (rounded)}
  {\tiny
 \begin{tabular}{l|ccccccc|c}

\hline\hline
$\tau$ 
& 0.000-0.005	& 0.0015-0.02 &	0.035-0.04 &	0.045-0.05 &	0.06-0.065 &	0.075--0.08 &	0.09-0.095 &	$>0.1$\\\hline
 Prob &0.0349 &	0.0554 &	0.0324 &	0.0250 &	0.0179 &	0.0136 &	0.0107 &	0.4339 \\\hline
 $HPCall$ &  -2.0340 &	-1.5958 &	-0.9729 &	-0.6423 &	-0.1206 &1	0.4218 &	0.9027 &	1.0101  \\\hline
SS3 & -0.6275 &	-0.6628 &	-0.7030 &	-0.7149&	-0.7126 &	-0.6780&	-0.6298 &	-0.6195 \\
SS1 & -0.9378 &	-0.8109 &	-0.6506 &	-0.5755 &	-0.4719&-0.3840&	-0.3215& -0.3082\\\hline
HP3 &-2.0075 &	-1.5800 &	-0.9705 &	-0.6459 &	-0.1322 &	0.4027 &	0.8746 &	0.9795\\
$HP(K_1)$ & -1.9986 &	-1.5704 &	-0.9625 &	-0.6401 &	-0.1320 &	0.3958 &	0.8632 &	0.9674\\
$HP(1)$ &  -1.8971 &	-1.4841 &	-0.9003 &	-0.5923 &	-0.1100 &	0.3807 &	0.7799 &	1.0085\\\hline
$HFTP2$ & -1.3220 &	-0.8995 &	-0.2980 &	0.0220 &	0.5287 &	1.0582 &	1.5298  &	0.0862\\
$HFT$ & -1.145 &	-0.7086 &	-0.0881 &	0.2413 &	0.7613 &	1.3020 &	1.7812&		-0.1526
\\\hline

\end{tabular}
  }  

\begin{flushleft}{\tiny
Bins: time intervals when the barrier is breached or not at all (the last bin)\\
$Prob$: probability that the barrier is breached during the time interval or not breached at all (the last column). Since not all events are shown, the probabilities do not sum up to 1.\\
The other rows: approximate payoffs of the initial bet $HPCall$ and values of hedging portfolios, 
in units of $V_{d.in.call}$ at initiation, if $S_\tau$ is close
to the barrier. \\
The last  column: the values at maturity, if the barrier has not been breached.\\
$SS3$: semistatic portfolio of 3 put options.\\
$SS1$: semistatic portfolio of 1 put option.\\
$HP3$: variance-minimizing portfolio constructed using put options
with strikes $K_j=1/1.04-(j-1)*0.02, j=1,2,3$)\\
$HFTP2$: variance-minimizing portfolio constructed using the first touch digital
and put options
with strikes $K_1, K_3$.\\
$HFT$ and $HP(K)$: variance-minimizing portfolio constructed using the put option
with strike $K$.
}
\end{flushleft}
\label{TableBets}
\end{table}

 \begin{table}
 \caption{\small Semistatic and mean-variance hedging of a down-and-in call option $V_{d.in.call}(T,K_0;S)$, 
 using European put options $V_{put}(T,K_j;S)$ and the first touch digital.
 Standard deviations $nStd$ and weights $w$ of options in hedging portfolios are in units of
 the price of the down-and-in call option.
 The underlying: KoBoL,
 parameters $m_2=0.1$, $\nu=1.2, \lm=-11, \lp=12$,  $\mu=0$, $\sg=0$, 
$c=0.3026$ (rounded),  $r=0.100$ (rounded). Time to maturity $T=0.1$, strike $K_0=1.04$,
barrier $H=1$, $S$  the spot.   
 }
 
  {\tiny
\begin{tabular}{l|ccccccc}
\hline\hline
$\ln (S/H)$ & 0.01& 0.02 &0.03 &0.04 &	0.05 &	0.06 & 0.07  \\
\hline\hline
$nStd$ & & & & & & &  \\\hline
$d.-in\ call$ & 2.144 &	2.585 &	3.069&	3.612 &	4.226 &	4.925 &	5.721\\\hline
$SS3$ & 8.173 &	9.067 &	10.32 &	11.84 &	13.58 &	15.67 &	18.05\\

$HP3$ & 2.068&	2.541 &	3.024 &
	3.558 &	4.146 &	4.843 &	5.619\\
$HFTP2$  & 2.027&	2.471 &	2.938 &	3.461 &	4.049&
	4.732 &	5.507\\
$HFT$ &  2.067 &	2.481 &	2.944 &	3.467 &	4.062 &	4.742 &	5.521\\
$HP(K_1) $ & 2.129 &	2.574 &	3.058 &	3.599 &	4.211 &	4.907 &	5.700\\
$HP(1) $ &2.144 &	2.585 &	3.069&	3.612 &	4.226 &	4.925 &	5.721
\\\hline\hline
Weights & \\\hline
$SS3$ &\\
$w(K_1)$&1.034 &	1.141 &	1.227&	1.305 &	1.379 &	1.452 &	1.524\\
$w(K_2)$&  -0.658&	-0.720 &	-0.768 &	-0.811&	-0.852 &	-0.891&	-0.931\\
$w(K_3)$& 0.926&	1.006 &	1.066&	1.119 &	1.170 &	1.219 &	1.269 \\\hline
 $HP3$\\
$w(K_1)$&0.080 &	0.088 &	0.095 &	0.101 &	0.106 &	0.112 &	0.117\\
$w(K_2)$& -0.160&	-0.175 &	-0.186 &	-0.197 &	-0.207 &	-0.216 &	-0.226\\
$w(K_3)$& 0.105 &	0.114 &	0.121 &	0.127 &	0.132 &	0.138 &	0.143\\\hline
$HPFTP2$\\
$w(FT)$&0.627 &	0.713 &	0.794 &	0.874 &	0.954 &	1.034 &	1.115\\
$w(K_1)$& 0.0025 &	0.0027&	0.0029 &	0.0031 &	0.0032&	0.0034 &	0.0035\\
$w(K_3)$&0.0313&	0.0340 &	0.0361 &	0.0379 &	0.0396 &	0.0412 &	0.0429\\
\hline
$w(FT)$ & 1.300 &	1.210 &	1.174 &	1.151 &	1.1500 &	1.1424 &	1.1388\\\hline
$w(K_1)$ &  0.0456 &	0.0397 &	0.0402 &	0.0422 &	0.0446 &	0.0472 &	0.0497\\\hline
$w(1)$ & 0.170 &	0.149 &	0.166 &	0.1878 &	0.208 &	0.225 &	0.238
\\\hline
\end{tabular}
}

\label{TableStDEv}
 \end{table}
 
  \begin{table}
 \caption{\small Variance-covariance matrix (in units of 0.001), almost at the strike: $\log(S/H)=0.04$.
 $T=0.1$, barrier $H=1$}
 
 \begin{tabular}{l|ccccccc}
\hline\hline
& $Call(1.04)$ & $FT$ & $Put(K_1)$ & $Put(K_2)$ & $Put(K_3)$\\\hline
 $Call(1.04)$ &0.900 &	4.133 &	0.178 &	0.0138 &	0.0723 \\
$FT$ &  4.133 &	243.9 &	3.336&	2.207&	1.236         \\
$Put(K_1)$ & 0.178 &	3.336 &	4.895 &	2.560 &	1.699 \\
$P(K_2)$ & 0.0138&	2.207 &	2.560 &	1.844 &	1.047   \\
$P(K_3)$ &0.0723 &	1.236 &	1.699&	1.047 &	0.798 \\\hline

\end{tabular}

\begin{flushleft}{\tiny
$Call(1.04)$: down-and-in call with strike $1.04$; $FT$: first-touch digital; $Put(K_j)$: put option with strike $K_j$.

}
\end{flushleft}
\label{TableVarCov}
\end{table}

\newpage
 \section{Figures}

  \begin{figure}
\scalebox{0.5}
{\includegraphics{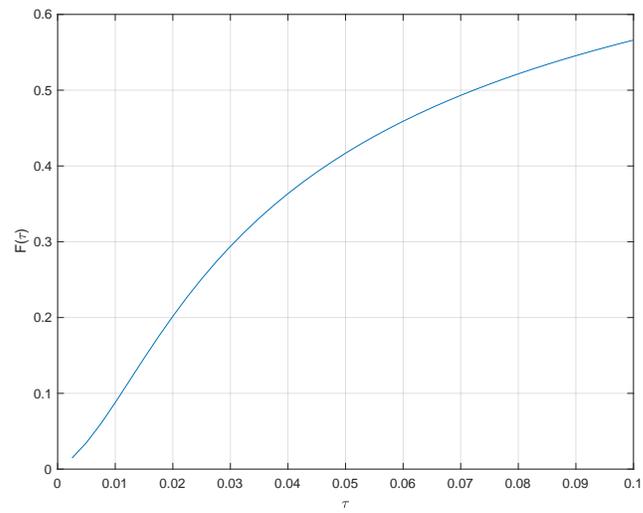}}
\caption{\small Cumulative probability distribution of the time the barrier is breached; $S_0/H=e^{0.04}$.}
\label{CpdfFtau}
\end{figure}
 
 \begin{figure}
\scalebox{0.5}
{\includegraphics{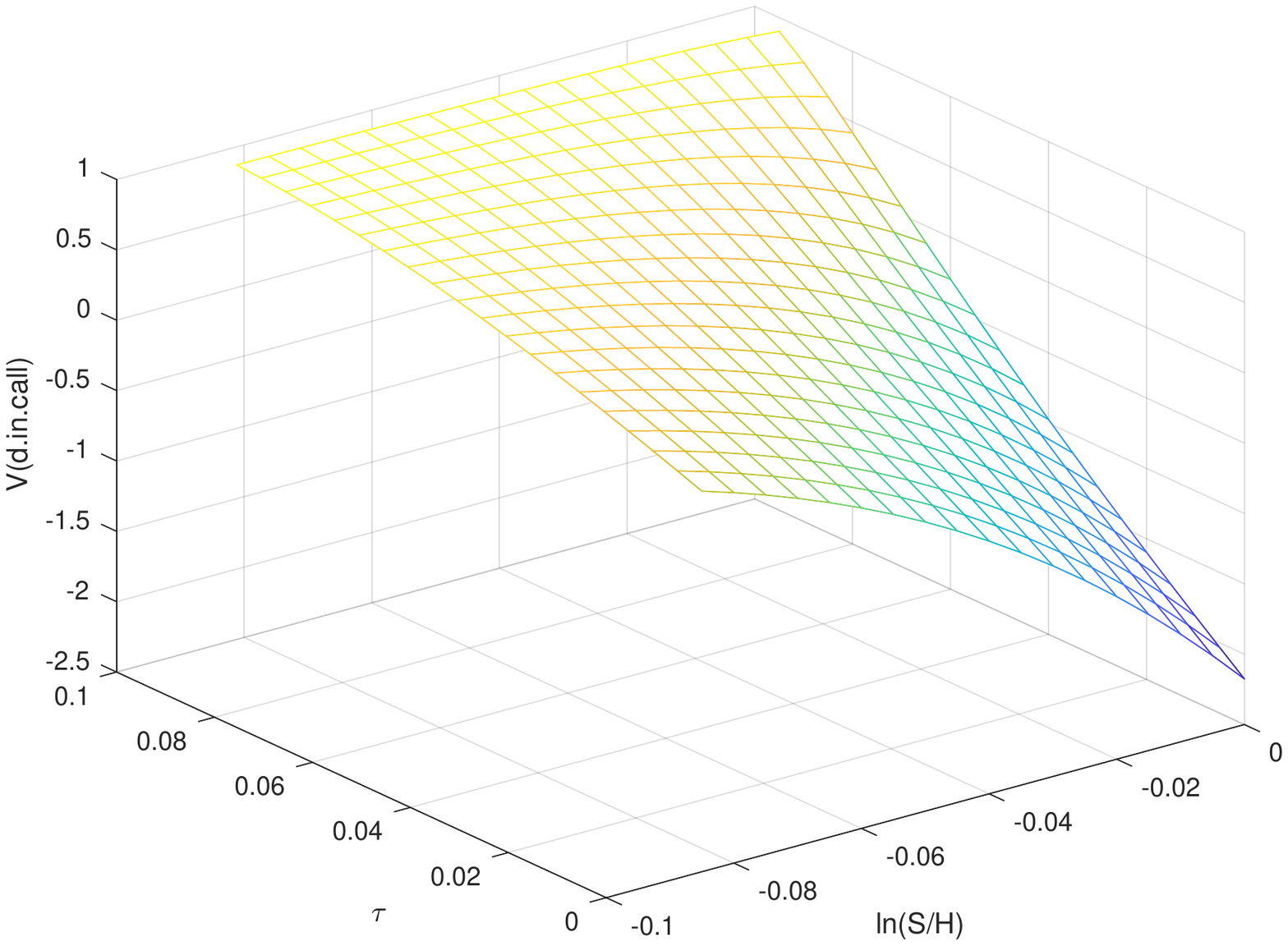}}
\caption{\small The value of the portfolio of 1 short down-and-in call option of maturity 
$T=0.1$ and strike $K=1.04$, and the riskless bonds with $B_0=V_{d.in.call}(T;K; 0, S_0)$, in units of $V_{d.in.call}(T;K; 0, S_0)$, where $S_0=e^{0.04}$,
if the barrier $H=1$ is breached at time $\tau$, and $S_\tau=S$ (no hedging).}
\label{VCall}
\end{figure}

\begin{figure}
\scalebox{0.45}
{\includegraphics{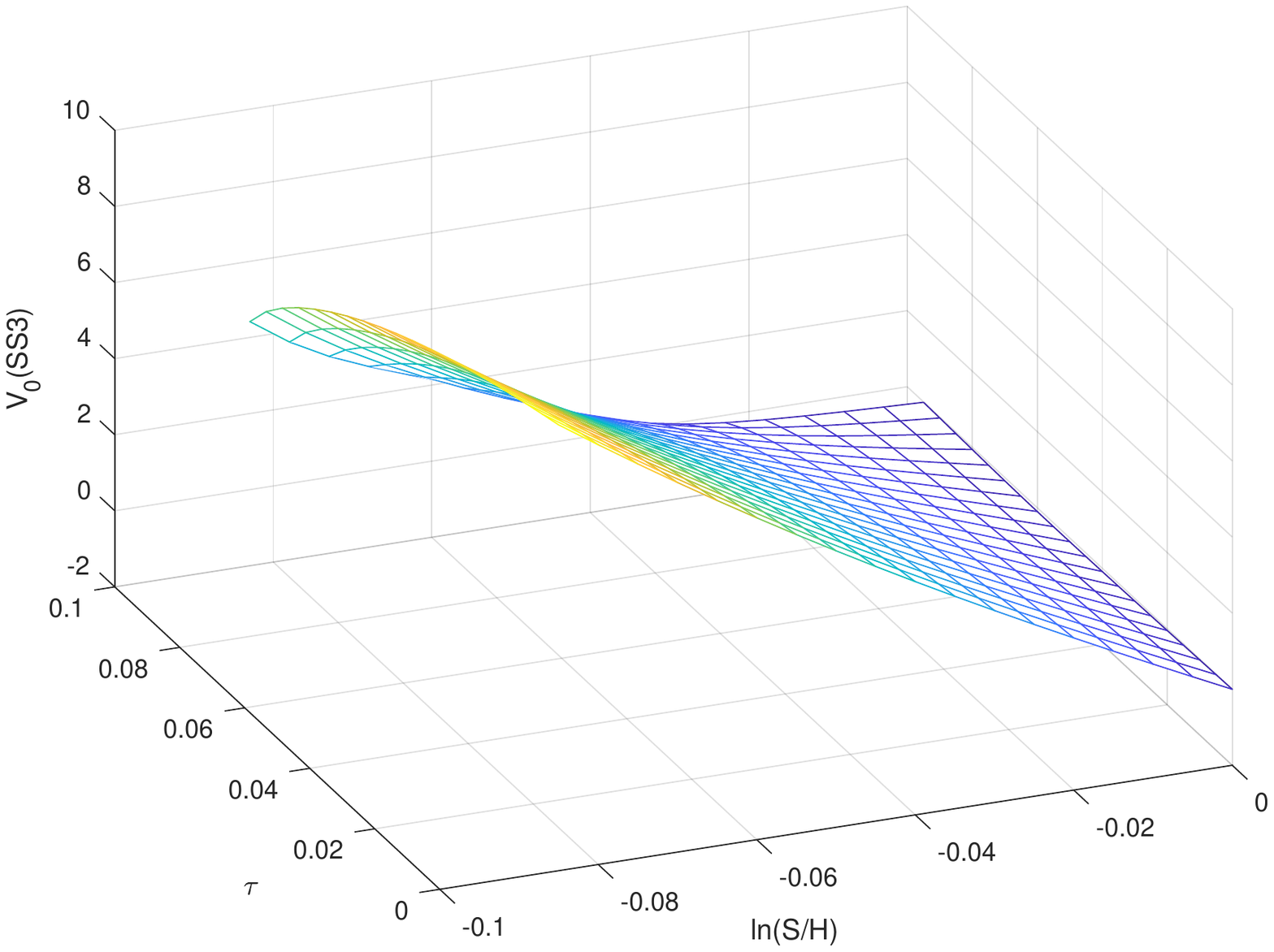}}
\caption{\small The value of the approximate semi-static hedging portfolio for 1 short down-and-in call option
of maturity $T=0.1$ and strike $K=1.04$,  in units of $V_{d.in.call}(T;K; 0, S_0)$, where $S_0=e^{0.04}$,
if the barrier $H=1$ is breached at time $\tau$, and $S_\tau=S$, {\em without counting the riskless bond component}.
Hedging instruments: put options of the same maturity and strikes $K_1=1/1.04, K_2=K_1-0.02, K_3=K_1-0.04$,
weights $w=[1.305,
-0.811,
1.119]$.}
\label{V0SS3}

\end{figure}

\begin{figure}
\scalebox{0.45}
{\includegraphics{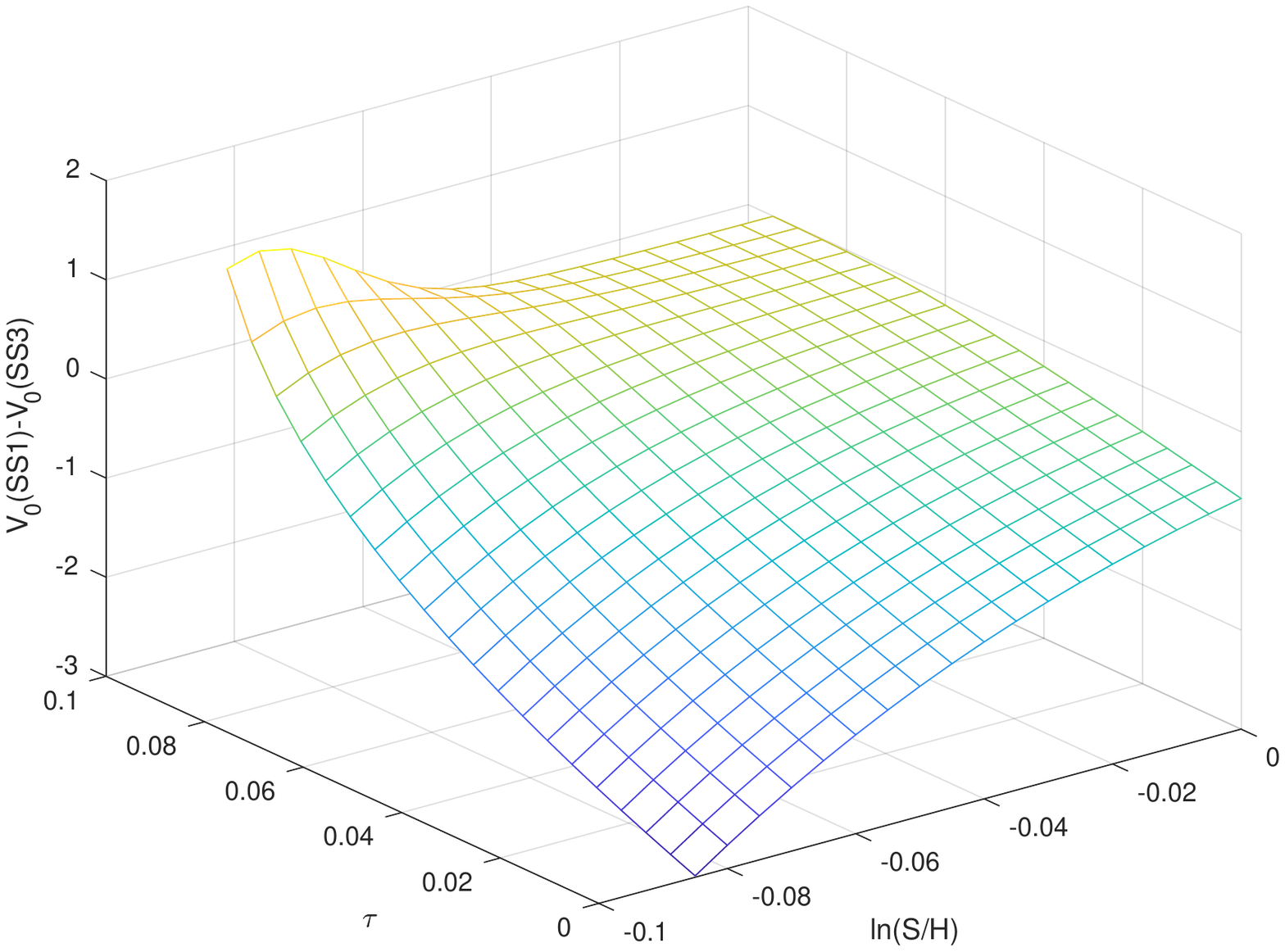}}
\caption{\small The difference $V_0(SS1)-V_0(SS3)$ of the values of the 
approximate semi-static hedging portfolios for 1 short down-and-in call option
of maturity $T=0.1$ and strike $K=1.04$,  in units of $V_{d.in.call}(T;K; 0, S_0)$, where $S_0=e^{0.04}$,
if the barrier $H=1$ is breached at time $\tau$, and $S_\tau=S$, {\em without counting the riskless bond component}.
Hedging instruments: put options of the same maturity and strikes $K_1=1/1.04, K_2=K_1-0.02, K_3=K_1-0.04$,
weights $w=[1.305,
-0.811,
1.119]$ (portfoilio $V_0(SS3)$. $V_0(SS1)$: only the option of maturity $K_1$ is used, weight $w=1.305$.}
\label{DifV0}

\end{figure}

\begin{figure}
\scalebox{0.45}
{\includegraphics{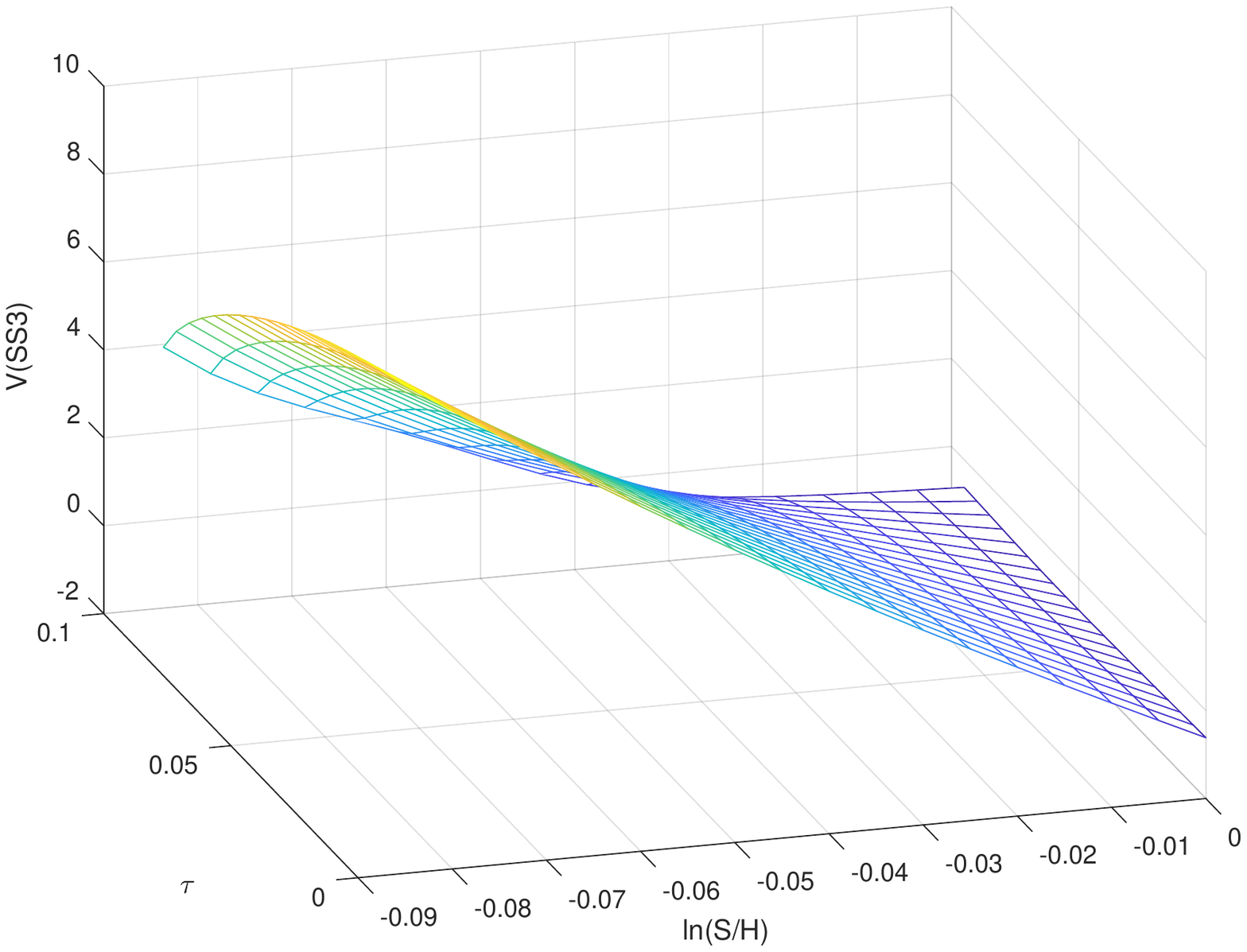}}
\caption{\small The value of the approximate semi-static hedging portfolio for 1 short down-and-in call option
of maturity $T=0.1$ and strike $K=1.04$,  in units of $V_{d.in.call}(T;K; 0, S_0)$, where $S_0=e^{0.04}$,
if the barrier $H=1$ is breached at time $\tau$, and $S_\tau=S$. The riskless bond component is taken into account.
Hedging instruments: put options of the same maturity and strikes $K_1=1/1.04, K_2=K_1-0.02, K_3=K_1-0.04$,
weights $w=[1.305,
-0.811,
1.119]$.}
\label{VSS3}

\end{figure}\begin{figure}
\scalebox{0.45}
{\includegraphics{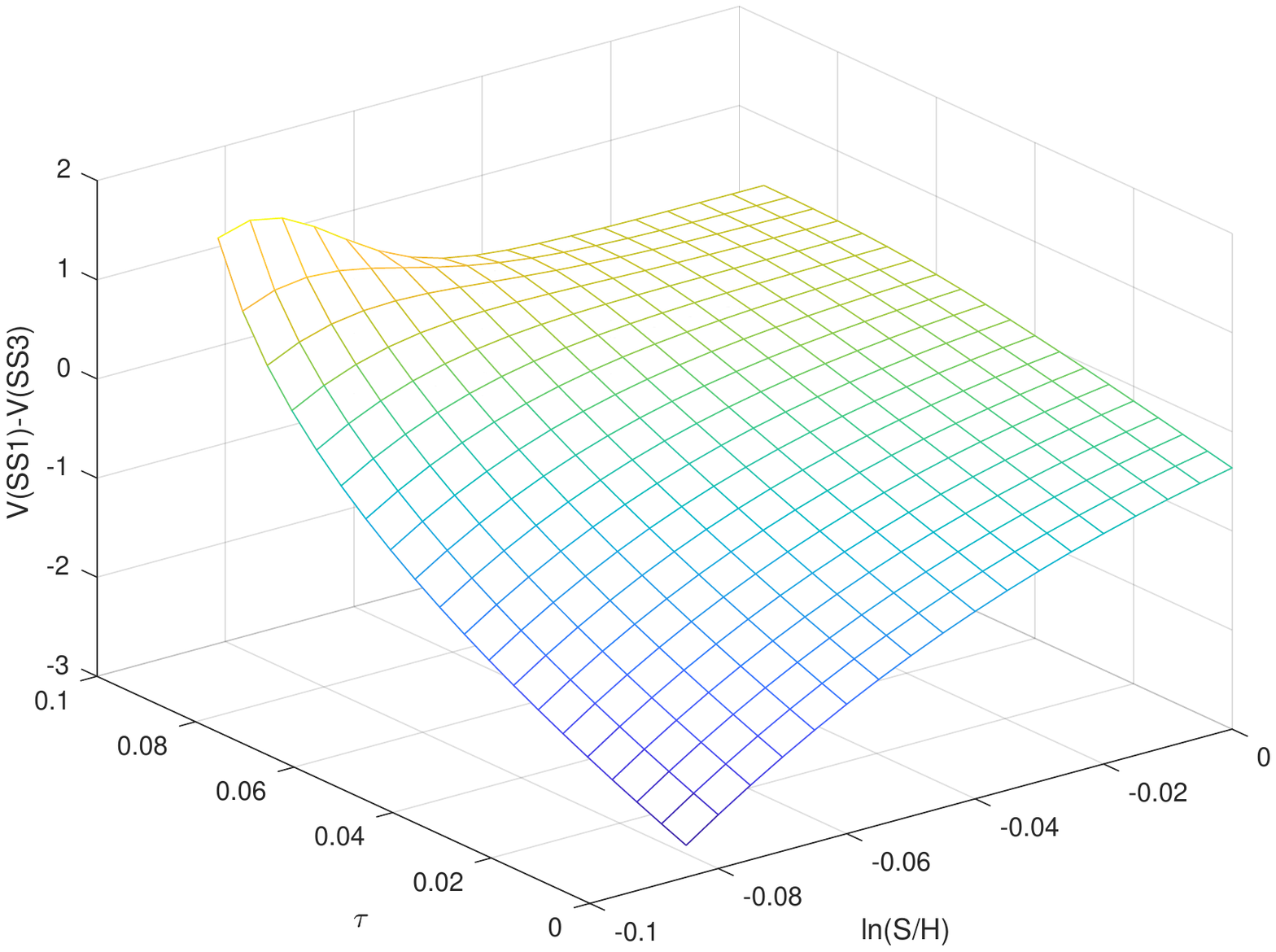}}
\caption{\small The difference $V(SS1)-V(SS3)$ of the values of the 
approximate semi-static hedging portfolios for 1 short down-and-in call option
of maturity $T=0.1$ and strike $K=1.04$,  in units of $V_{d.in.call}(T;K; 0, S_0)$, where $S_0=e^{0.04}$,
if the barrier $H=1$ is breached at time $\tau$, and $S_\tau=S$. The riskless bond component is taken into account.
Hedging instruments: put options of the same maturity and strikes $K_1=1/1.04, K_2=K_1-0.02, K_3=K_1-0.04$,
weights $w=[1.305,
-0.811,
1.119]$ (portfoilio $V(SS3)$. $V(SS1)$: only the option of maturity $K_1$ is used, weight $w=1.305$.}
\label{DifV}
\end{figure}

\begin{figure}
\scalebox{0.45}
{\includegraphics{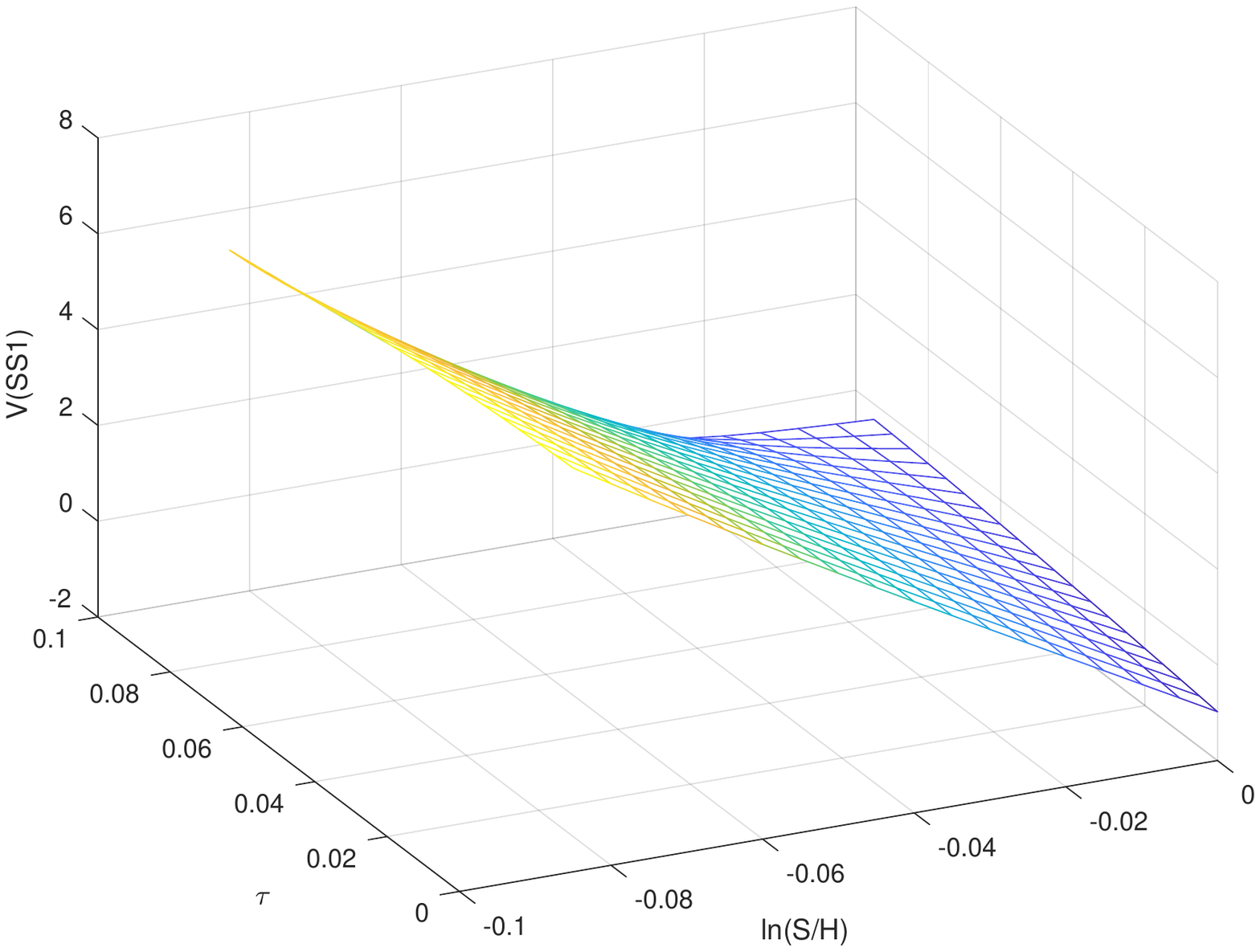}}
\caption{\small The value of the approximate semi-static hedging portfolio for 1 short down-and-in call option
of maturity $T=0.1$ and strike $K=1.04$,  in units of $V_{d.in.call}(T;K; 0, S_0)$, where $S_0=e^{0.04}$,
if the barrier $H=1$ is breached at time $\tau$, and $S_\tau=S$. The riskless bond component is taken into account.
Hedging instruments: put option of the same maturity and strike $K_1=1/1.04$,
weight $w=1.305$.}
\label{VSS1}
\end{figure}

\begin{figure}
\scalebox{0.45}
{\includegraphics{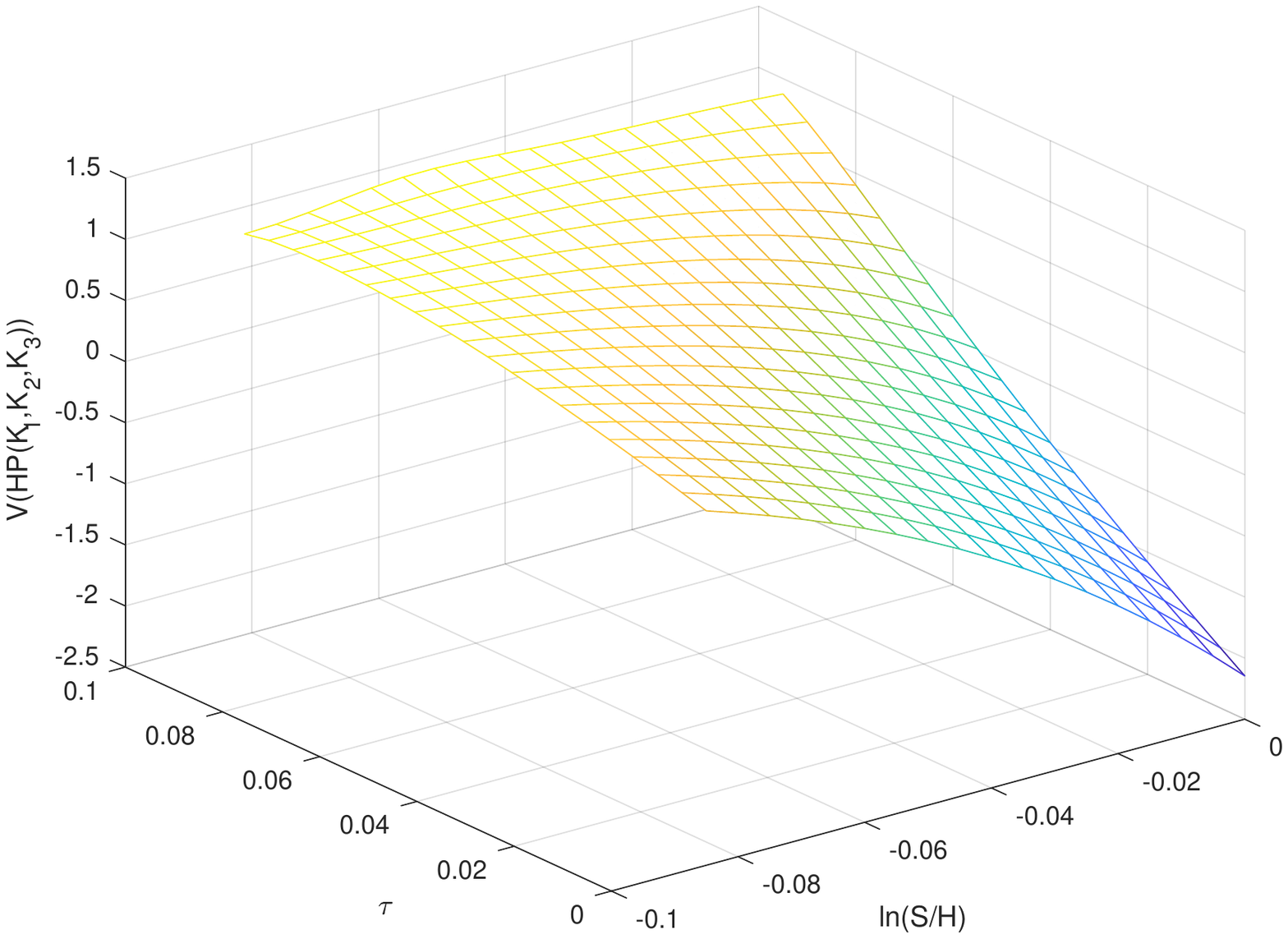}}
\caption{\small The value of the variance-minimizing hedging portfolio for 1 short down-and-in call option
of maturity $T=0.1$ and strike $K=1.04$,  in units of $V_{d.in.call}(T;K; 0, S_0)$, where $S_0=e^{0.04}$,
if the barrier $H=1$ is breached at time $\tau$, and $S_\tau=S$. The riskless bond component is taken into account.
Hedging instruments: put options of the same maturity and strikes $K_1=1/1.04, K_2=K_1-0.02, K_3=K_1-0.04$,
weights $w=[0.101, -0.197, 
0.127]$.}
\label{VP3}
\end{figure}

\begin{figure}
\scalebox{0.45}
{\includegraphics{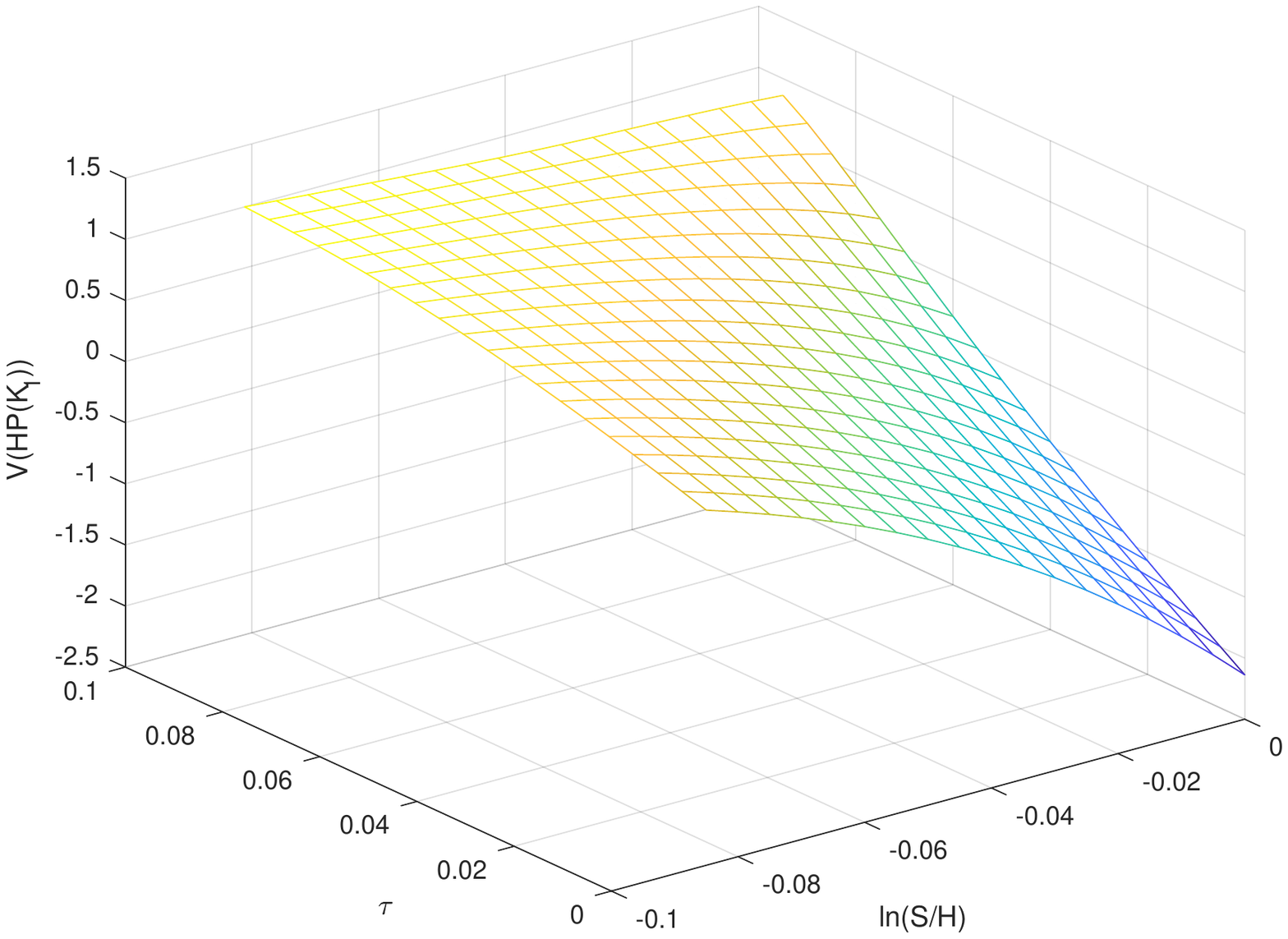}}
\caption{\small The value of the variance-minimizing hedging portfolio for 1 short down-and-in call option
of maturity $T=0.1$ and strike $K=1.04$,  in units of $V_{d.in.call}(T;K; 0, S_0)$, where $S_0=e^{0.04}$,
if the barrier $H=1$ is breached at time $\tau$, and $S_\tau=S$. The riskless bond component is taken into account.
Hedging instruments: put option of the same maturity and strike $K_1=1/1.04$,
weight $w_1=0.042$.}
\label{VK1}
\end{figure}

\begin{figure}
\scalebox{0.5}
{\includegraphics{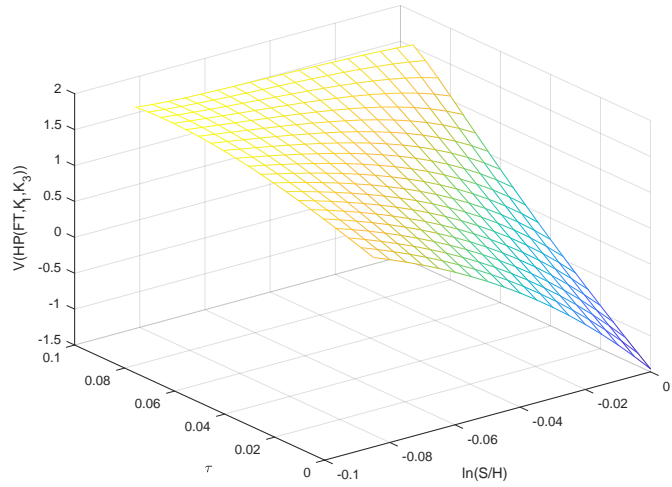}}
\caption{\small The value of the variance-minimizing hedging portfolio for 1 short down-and-in call option
of maturity $T=0.1$ and strike $K=1.04$,  in units of $V_{d.in.call}(T;K; 0, S_0)$, where $S_0=e^{0.04}$,
if the barrier $H=1$ is breached at time $\tau$, and $S_\tau=S$. The riskless bond component is taken into account.
Hedging instruments: first touch digital and put options of the same maturity and strikes $K_1=1/1.04, K_3=K_1-0.04$,
weights $w=[0.874,
0.003,
0.038]$.}
\label{VFTP2}

\end{figure}

\begin{figure}
\scalebox{0.5}
{\includegraphics{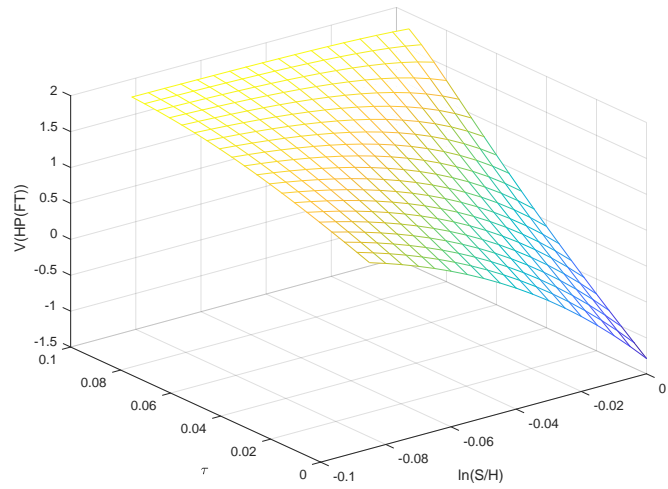}}
\caption{\small The value of the variance-minimizing hedging portfolio for 1 short down-and-in call option
of maturity $T=0.1$ and strike $K=1.04$,  in units of $V_{d.in.call}(T;K; 0, S_0)$, where $S_0=e^{0.04}$,
if the barrier $H=1$ is breached at time $\tau$, and $S_\tau=S$. The riskless bond component is taken into account.
Hedging instruments: first touch digital of the same maturity, 
weight $w=1.151$.}
\label{VFT}
\end{figure}

\begin{figure}
\scalebox{0.5}
{\includegraphics{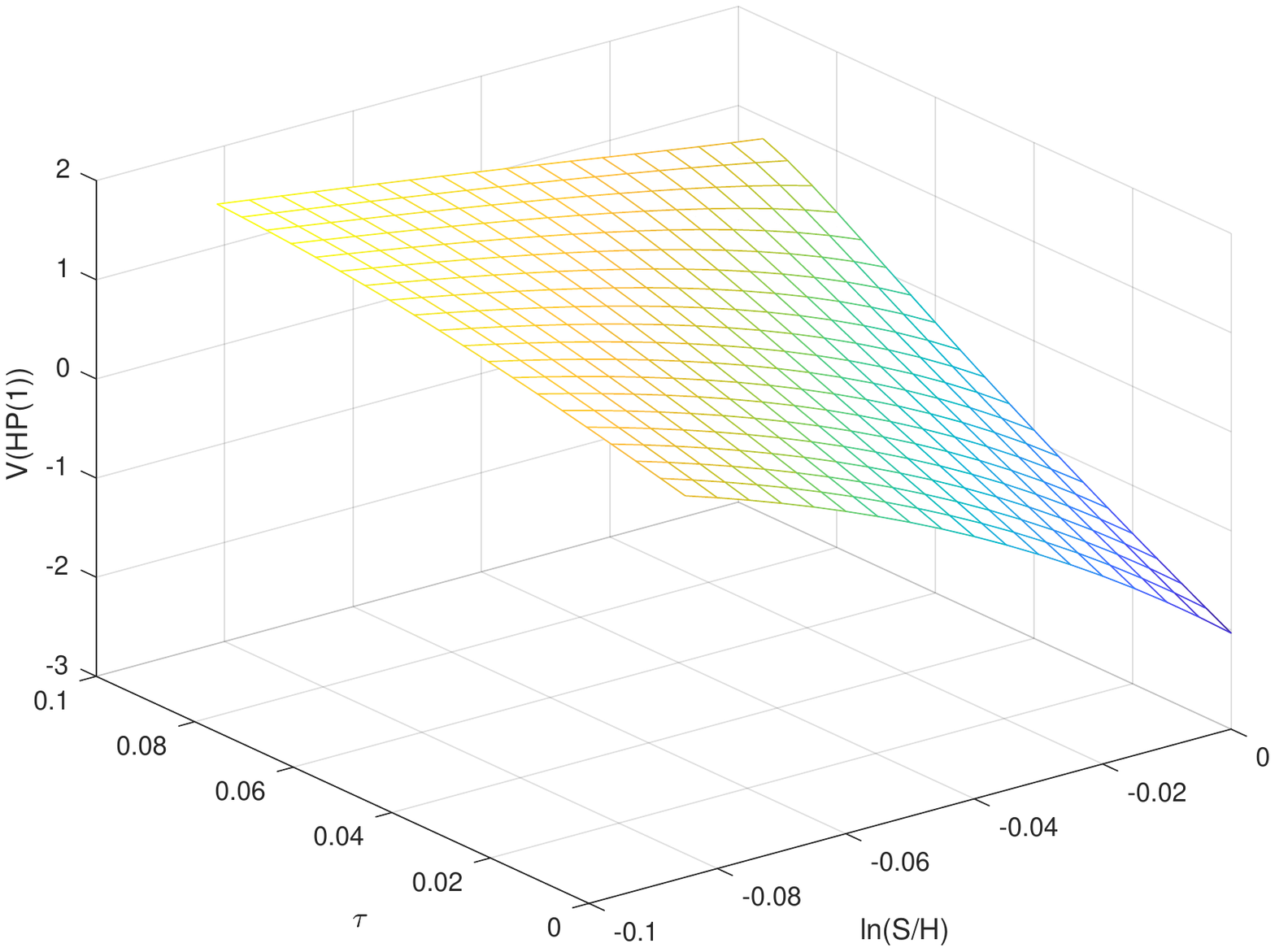}}
\caption{\small The value of the variance-minimizing hedging portfolio for 1 short down-and-in call option
of maturity $T=0.1$ and strike $K=1.04$,  in units of $V_{d.in.call}(T;K; 0, S_0)$, where $S_0=e^{0.04}$,
if the barrier $H=1$ is breached at time $\tau$, and $S_\tau=S$. The riskless bond component is taken into account.
Hedging instruments: put option of the same maturity and strike $K_1=1$,
weight $w_1=0.188$.}
\label{VKeq1}
\end{figure}

\begin{figure}
\scalebox{0.5}
{\includegraphics{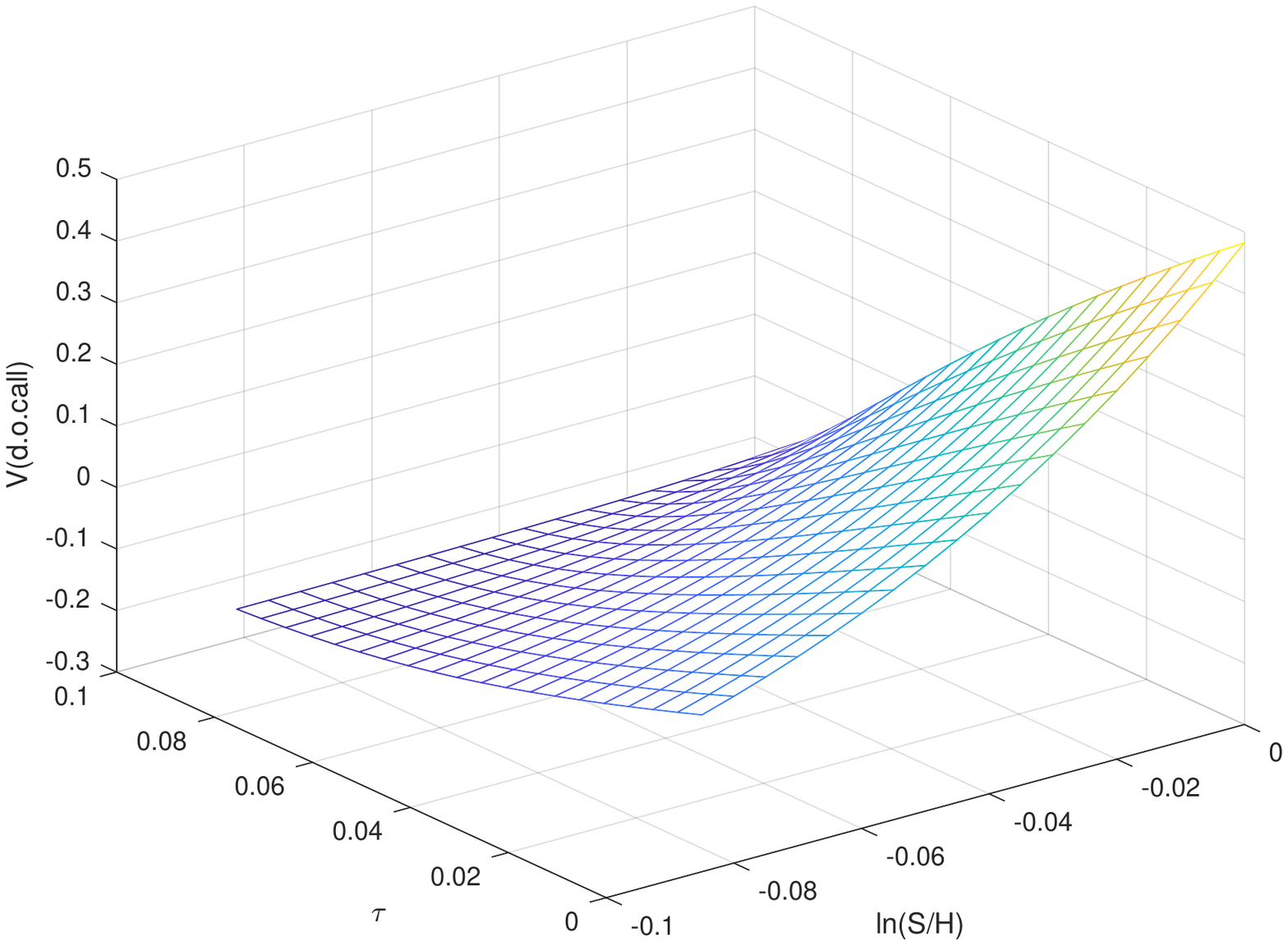}}
\caption{\small The value of the portfolio of 1 short down-and-out call option of maturity 
$T=0.1$ and strike $K=1.04$, and the riskless bonds with $B_0=V_{d.o.call}(T;K; 0, S_0)$, in units of $V_{d.o.call}(T;K; 0, S_0)$, where $S_0=e^{0.04}$,
if the barrier $H=1$ is breached at time $\tau$, and $S_\tau=S$ (no hedging).}
\label{doVCall}
\end{figure}

\begin{figure}
\scalebox{0.5}
{\includegraphics{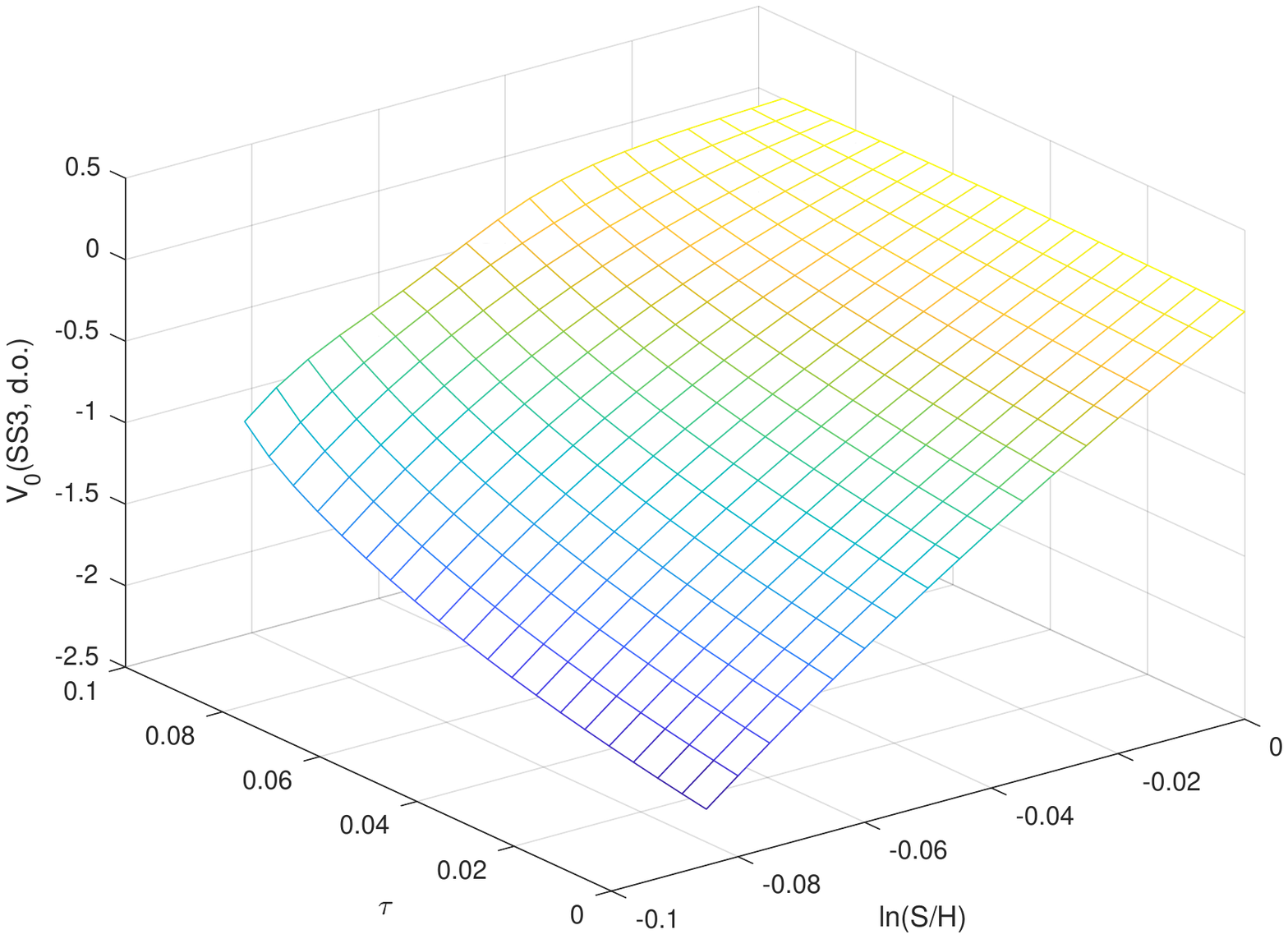}}
\caption{\small The value of the approximate semi-static hedging portfolio for 1 short down-and-out call option
of maturity $T=0.1$ and strike $K=1.04$,  in units of $V_{d.o.call}(T;K; 0, S_0)$, where $S_0=e^{0.04}$,
if the barrier $H=1$ is breached at time $\tau$, and $S_\tau=S$, {\em without counting the riskless bond component}.
Hedging instruments: put options of the same maturity and strikes $K_1=1/1.04, K_2=K_1-0.02, K_3=K_1-0.04$,
weights $w=[-1.305,
0.811,
-1.119]$.}
\label{doV0SS3}
\end{figure}

\begin{figure}
\scalebox{0.5}
{\includegraphics{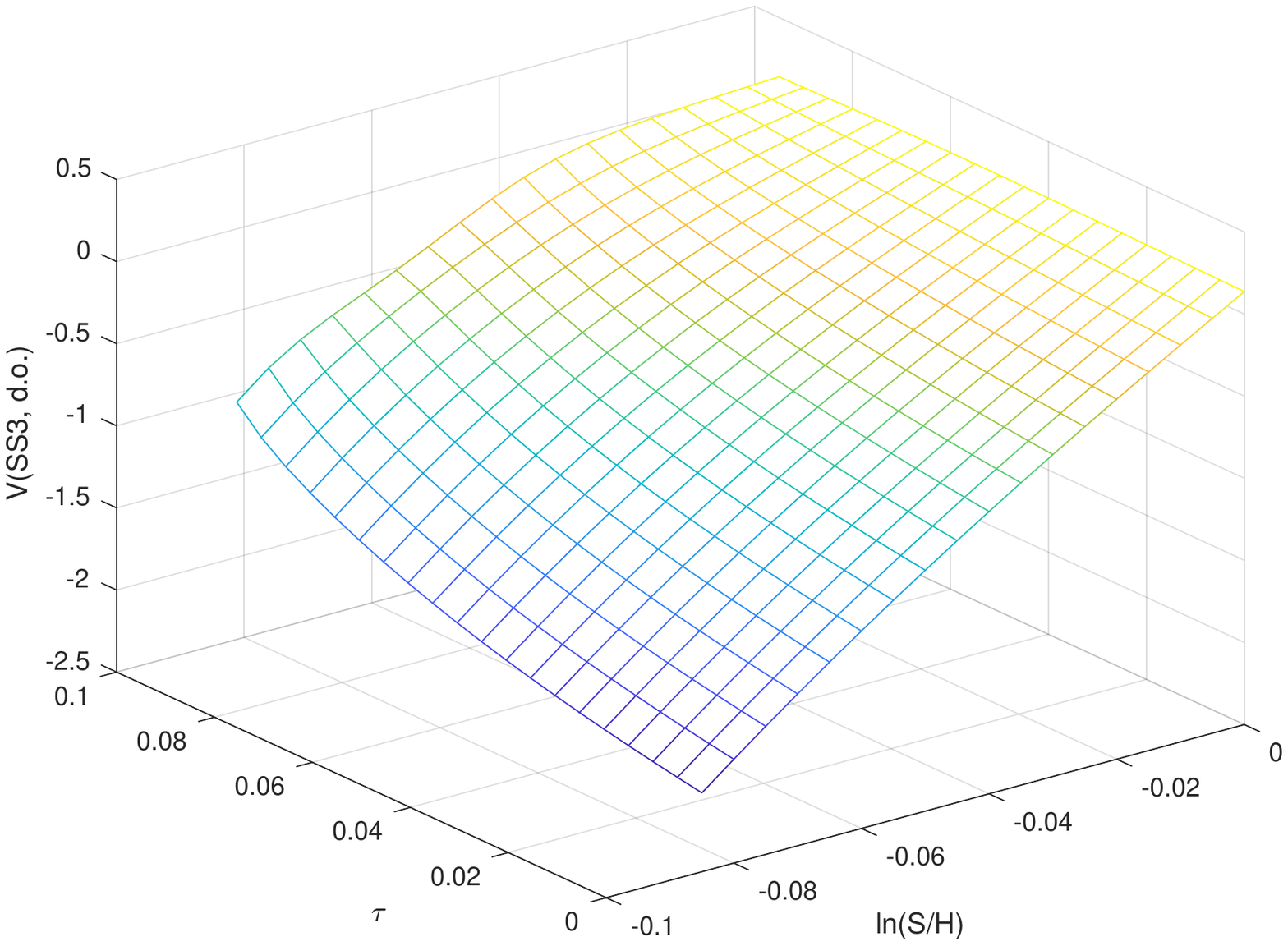}}
\caption{\small The value of the approximate semi-static hedging portfolio for 1 short down-and-out call option
of maturity $T=0.1$ and strike $K=1.04$,  in units of $V_{d.o.call}(T;K; 0, S_0)$, where $S_0=e^{0.04}$,
if the barrier $H=1$ is breached at time $\tau$, and $S_\tau=S$. The riskless bond component is taken into account.
Hedging instruments: put options of the same maturity and strikes $K_1=1/1.04, K_2=K_1-0.02, K_3=K_1-0.04$,
weights $w=[-1.305,
0.811,
-1.119]$.}
\label{doVSS3}
\end{figure}

\end{document}